\newcommand{\FULLPAPER}{}
\newtheorem{definition}{Definition}
\newcommand{\omitit}[1]{}
\newtheorem{lemma}{Lemma}
\DeclarePairedDelimiter\ceil{\lceil}{\rceil}
\algnewcommand{\LineComment}[1]{\Statex \hskip\ALG@thistlm \(\triangleright\) #1}
\newcommand{\protocol}{Mir\xspace}
\newcommand{\protocolfull}{Mir-BFT\xspace}
\newcommand{\eplead}{\ensuremath{EL}\xspace}
\newcommand{\stable}{stable\xspace}
\newcommand{\ephemeral}{ephemeral\xspace}
\newcommand{\bcast}{\textsc{bcast}}
\newcommand{\deliver}{\textsc{commit}}
\newcommand{\prepreparemsg}{\textsc{PRE-PREPARE}\xspace}
\newcommand{\preparemsg}{\textsc{PREPARE}\xspace}
\newcommand{\commitmsg}{\textsc{COMMIT}\xspace}
\newcommand{\checkpointmsg}{\textsc{CHECKPOINT}\xspace}
\newcommand{\requestmsg}{\textsc{REQUEST}\xspace}
\newcommand{\faults}{f}
\newcommand{\event}[3]{
    \ifthenelse
    {\equal{#3}{}}
    {\ap{#1.\textrm{#2}}}
    {\ap{#1.\textrm{#2} \mid #3}}
}
\algnewcommand\Instance[2]{\State #1, \textbf{instance} #2}
\algnewcommand\InstanceSystem[3]{\State #1, \textbf{instance} #2, \textbf{system} #3}
\algnewcommand\Trigger[3]{\State \textbf{trigger} $\event{#1}{#2}{#3}$}
\algnewcommand\Schedule[1]{\State \textbf{schedule} $#1$}
\algnewcommand\Cancel[1]{\State \textbf{cancel} $#1$}
\algnewcommand\Broadcast[1]{\State \textbf{broadcast} $#1$}
\algnewcommand\Import[1]{\State \textbf{import} $#1$}
\algnewcommand\Not{\textbf{ not }}
\algnewcommand\AndT{\textbf{ and }}
\algnewcommand\OrT{\textbf{ or }}
\algnewcommand\In{\textbf{ in }}
\newcommand{\remove}[1]{}
\newcommand{\add}[1]{#1}
\newcommand{\needsrev}[1]{}
\begin{document}

 \title{\textbf{\protocolfull: High-Throughput Robust BFT for Decentralized Networks}}
 \author{
    Chrysoula Stathakopoulou \\IBM Research - Zurich \and
    Tudor David \\ Oracle Labs\thanks{Work done in IBM Research - Zurich} \and
    Matej Pavlovic \\ IBM Research - Zurich \and
    Marko Vukoli\'c \\IBM Research - Zurich
    \date{}
 }

\maketitle

\begin{abstract}
  This paper presents Mir-BFT, a robust Byzantine fault-tolerant (BFT) total order broadcast protocol aimed at maximizing throughput on wide-area networks (WANs), targeting deployments in decentralized networks, such as permissioned and Proof-of-Stake permissionless blockchain systems.
  Mir-BFT is the first BFT protocol that allows multiple leaders to propose request batches independently (i.e., parallel leaders), in a way that precludes request duplication attacks by malicious (Byzantine) clients, by rotating the assignment of a partitioned request hash space to leaders.
  As this mechanism removes a single-leader bandwidth bottleneck and exposes a computation bottleneck related to authenticating clients even on a WAN, our protocol further boosts throughput using a client signature verification sharding optimization.
  Our evaluation shows that Mir-BFT outperforms state-of-the-art and orders more than 60000 signed Bitcoin-sized (500-byte) transactions per second on a widely distributed 100 nodes, 1 Gbps WAN setup, with typical latencies of few seconds.
  We also evaluate Mir-BFT under different crash and Byzantine faults, demonstrating its performance robustness.
  Mir-BFT relies on classical BFT protocol constructs, which simplifies reasoning about its correctness.
  Specifically, Mir-BFT is a generalization of the celebrated and scrutinized PBFT protocol. In a nutshell, Mir-BFT follows PBFT ``safety-wise'', with changes needed to accommodate novel features restricted to PBFT liveness.
\end{abstract}


\section{Introduction}

\noindent\textbf{Background.}
Byzantine fault-tolerant (BFT) protocols, which tolerate malicious (Byzantine \cite{Lamport:1982:BGP:357172.357176}) behavior of a subset of nodes,
have evolved from being a niche technology for tolerating bugs and intrusions to be the key technology to ensure consistency of widely deployed decentralized networks in which multiple mutually untrusted parties administer different nodes (such as in blockchain systems)
\cite{Vukolic15,CromanDEGJKMSSS16,Algorand}.
Specifically, BFT protocols are considered to be an alternative (or complementing) to  energy-intensive and slow Proof-of-Work (PoW) consensus protocols used in early blockchains including Bitcoin \cite{Vukolic15,GervaisKWGRC16}.
BFT protocols relevant to decentralized networks are consensus and total order (TO) broadcast protocols \cite{CachinGR11} which establish the basis for state-machine replication (SMR) \cite{Schneider90} and smart-contract execution \cite{EthereumYellowPaper}.

BFT protocols are known to be very efficient on small scales (few nodes) in clusters (e.g., \cite{700, Zyzzyva}), or to exhibit modest performance on large scales (thousands or more nodes) across wide area networks (WAN) (e.g., \cite{Algorand}).
Recently, considerable research effort (e.g., \cite{Buchman16,RedBelly,SBFT,hotstuff,MillerXCSS16}) focused on maximizing BFT performance in medium-sized WAN networks (e.g., at the order of 100 nodes) as this deployment setting is highly relevant to different types of decentralized networks.

On the one hand, \emph{permissioned} blockchains, such as Hyperledger Fabric \cite{AndroulakiBBCCC18}, are rarely deployed on scales above 100 nodes, yet use cases gathering dozens of organizations, which do not necessarily trust each other, are very prominent \cite{CordaURL}.
On the other hand, this setting is also highly relevant in the context of large scale \emph{permissionless} blockchains, in which anyone can participate,
that use weighted voting (based e.g., on Proof-of-Stake (PoS) \cite{CasperFFG,Ouroboros} or  delegated PoS (DPoS) \cite{TendermintURL}), or committee-voting \cite{Algorand}, to limit the number of nodes involved in the critical path of the consensus protocol.
With such weighted voting, the number of (relevant) nodes for PoS/DPoS consensus is typically in the order of a hundred \cite{TendermintURL}, or sometimes even less  \cite{EOSBlockProducers}. Related open-membership blockchain systems, such as Stellar, also run consensus among less than 100 nodes \cite{stellarsosp}.

\noindent\textbf{Challenges.} Most of this research (e.g., \cite{Buchman16,RedBelly,SBFT,hotstuff}) aims at addressing the scalability issues that arise in classical leader-based BFT protocols, such as the seminal PBFT protocol \cite{Castro:2002:PBF}.
In short, in a leader-based protocol, a leader, who is tasked with assembling a batch of requests (block of transactions) and communicating it to all other nodes, has at least $O(n)$ work, where $n$ is the total number of nodes.
Hence the leader quickly becomes a bottleneck as $n$ grows.

A promising approach to addressing scalability issues in BFT is to allow multiple nodes to act as \emph{parallel leaders} and to propose batches independently and concurrently, either in a coordinated, deterministic fashion \cite{RedBelly,BFT-Mencius}, or using randomized protocols \cite{BEAT,MillerXCSS16,Hashgraph}.
With parallel leaders, the CPU and bandwidth load related to proposing batches are distributed more evenly.
However, the issue with this approach is that parallel leaders are prone to wasting resources by proposing the same duplicate requests.
As depicted in Table~\ref{table:comparison}, none of the current BFT protocols that allow for parallel leaders deal with request duplication, which is straightforward to satisfy in single leader protocols.
The tension between preventing request duplication and using parallel leaders stems from two important attacks that an adversary can mount and an efficient BFT protocol needs to prevent: (i) the \emph{request censoring attack} by Byzantine leader(s), in which a malicious leader simply drops or delays a client's request (transaction), and (ii)  the \emph{request duplication attack}, in which Byzantine clients submit the exact same request multiple times.

\begin{table}[ht]
	\centering
	\scriptsize{
		\begin{tabular}{| l | l | l|}
			\hline
			\textbf{}                   &    \textbf{Parallel Leaders}  & \textbf{Prevents Req. Duplication}  \\ \hline
			PBFT \cite{Castro:2002:PBF}          &  no           & \textbf{yes}          \\
			BFT-SMaRt \cite{BessaniSA14}        &  no           & \textbf{yes}     \\
			Aardvark \cite{Aardvark}             &  no           & \textbf{yes}             \\
			RBFT \cite{RBFT}                   &  no           & \textbf{yes}                    \\
			Spinning \cite{Spinning}      &  no           & \textbf{yes}         \\
			Prime \cite{Prime}                 &  no           & \textbf{yes}         \\
			700 \cite{700}                       &  no           & \textbf{yes}          \\
			Zyzzyva \cite{Zyzzyva}               &  no           & \textbf{yes}         \\
			SBFT  \cite{SBFT}                     &  no           & \textbf{yes}          \\
			HotStuff \cite{hotstuff}           &  no           & (yes)          \\
			Tendermint \cite{Buchman16}         &  no           & \textbf{yes}       \\
			BFT-Mencius \cite{BFT-Mencius}     &  \textbf{yes}         & no            \\
			RedBelly \cite{RedBelly}          &  \textbf{yes}         & no          \\
			Hashgraph \cite{Hashgraph}        &  \textbf{yes}         & no              \\
			Honeybadger \cite{MillerXCSS16}       & \textbf{yes}         & no           \\
			BEAT \cite{BEAT}                    &  \textbf{yes}         & no     \\
			\textbf{\protocol (this paper)} & \textbf{yes} & \textbf{yes} \\ \hline
		\end{tabular}
	}
	\caption{Comparison of \protocol to related BFT protocols.
		By \textit{(yes)} we denote a property which can easily be satisfied.
	}
	\label{table:comparison}
\end{table}

To counteract request censoring attacks, a BFT protocol needs to allow at least $\faults+1$ different leaders to propose a request (where $\faults$, which is typically $O(n)$, is the threshold on the number of Byzantine nodes in the system).
Single-leader protocols (e.g., \cite{Castro:2002:PBF,hotstuff}), which typically rotate the leadership role across all nodes, address censoring attacks relatively easily.
On changing the leader, a new leader only needs to make sure they do not repeat requests previously proposed by previous leaders.

With parallel leaders, the picture changes substantially.
If a (malicious or correct) client submits the same request to multiple parallel leaders concurrently, parallel leaders will duplicate that request.
While these duplicates can simply be filtered out after order (or after the reception of a duplicate, during ordering), the damage has already been done --- excessive resources, bandwidth and possibly CPU have been consumed.
To complicate the picture, na\"ive solutions in which: (i) clients are requested to sequentially send to one leader at the time, or (ii) simply to pay transaction fees for each duplicate, do not help.
In the first case,  Byzantine clients mounting a request duplication attack are not required to respect sending a request sequentially and, what is more, such a behavior cannot be distinguished from a correct client who simply sends a transaction multiple times due to asynchrony or network issues.
Even though some blockchain systems, such as Hedera Hashgraph, charge transaction fees for every duplicate \cite{BairdPcomm}, this approach penalizes correct clients when they resubmit a transaction to counteract possible censoring attacks, or a slow network.
In more established decentralized systems, such as Bitcoin and Ethereum, it is standard to charge for the same transaction only once, even if it is submitted by a client more than once.
In summary, with up to $O(n)$ parallel leaders, request duplication attacks may induce an $O(n)$-fold duplication of every single request and bring the effective throughput to its knees, practically voiding the benefits of using multiple leaders.

\noindent\textbf{Contributions.} This paper presents \protocolfull (or, simply, \protocol\footnote{In a number of Slavic languages, the word \emph{mir} refers to  universally good, global concepts, such as \emph{peace} and/or \emph{world}.}),
a novel BFT total order broadcast (TOB) protocol that is the first to
combine parallel leaders with robustness to attacks \cite{Aardvark}.
In particular, \protocol precludes request duplication performance attacks, and addresses other notable performance attacks \cite{Aardvark},
such as the Byzantine leader straggler attack.
\protocol is further robust to arbitrarily long (yet finite) periods of asynchrony and is optimally resilient (requiring optimal $n\ge 3f+1$ nodes to tolerate $f$ Byzantine faulty ones).
On the performance side, \protocol achieves the best throughput to date on public WAN networks,
as confirmed by our measurements on up to 100 nodes.
The following summarizes the main features of \protocol, as well as contributions of this paper:

\newcommand{\pointone}{
    \protocol allows multiple parallel leaders to propose batches of requests concurrently,
    in a sense multiplexing several PBFT instances into a single total order, in a robust way.
    As its main novelty, \protocol partitions the request hash space across replicas,
    preventing request duplication, while rotating this partitioned assignment across protocol configurations/epochs,
    addressing the request censoring attack. Mir further uses a \emph{client signature verification sharding} throughput optimization to offload CPU,
    which is exposed as a bottleneck in \protocol once we remove the single-leader bandwidth bottleneck using parallel leaders.
}

\newcommand{\pointtwo}{
	\protocol avoids ``design-from-scratch'', which is known to be error-prone for BFT \cite{700,ZyzzyvaBug}.
	\protocol is a generalization of the well-scrutinized PBFT protocol
	\footnote{\protocol variants based on other BFT protocols can be derived as well.},
	which \protocol closely follows ``safety-wise''
	while introducing important generalizations only affecting PBFT liveness (e.g., (multiple) leader election).
	This simplifies the reasoning about \protocol  correctness.}

\newcommand{\pointfour}{
    We implement \protocol in Go and run it with up to 100 nodes in a multi-datacenter WAN,
    as well as in clusters and under different faults,
    comparing it to state of the art BFT protocols.
    Our results show that \protocol convincingly outperforms state of the art, 
    ordering more than 60000 signed Bitcoin-sized (500-byte) requests per second (req/s) on a scale of 100 nodes on a WAN,
    with typical latencies of few seconds.     In this setup, \protocol achieves 3x the throughput of the optimistic Chain of \cite{700}, and more than an order of magnitude higher throughput than other state of the art single-leader BFT protocols. To put this into perspective, \protocol's 60000+ req/s on 100 nodes on WAN is 2.5x the alleged peak capacity of VISA (24k req/s \cite{CapacityTop20URL})
    and more than 30x faster than the actual average VISA transaction rate (about 2k req/s \cite{Vukolic15}).
}

\noindent\textbf{$\bullet$}  \pointone
\noindent\textbf{$\newline\bullet$}  \pointtwo
\noindent\textbf{$\newline\bullet$}  \pointfour

\noindent\textbf{Roadmap.} The rest of the paper is organized as follows.
In Section~\ref{sec:model}, we define the system model
and in Section \ref{ssec:PBFT} we briefly present PBFT (for completeness).
In Section~\ref{sec:overview}, we give an overview of \protocol and changes it introduces to PBFT.
We then explain \protocol implementation details in Section~\ref{sec:details}.
We further list the \protocol pseudocode in Section~\ref{sec:pseudocode}.
This is followed by \protocol correctness proof in Section~\ref{sec:correctness}.
Section~\ref{sec:LTO} introduces an optimization tailored to large requests.
Section~\ref{sec:evaluation} gives evaluation details.
Finally, Section~\ref{sec:relwork} discusses related work and Section~\ref{sec:conclusions} concludes.

\begin{figure*}[htbp]
    \begin{center}
        \includegraphics[width=\textwidth]{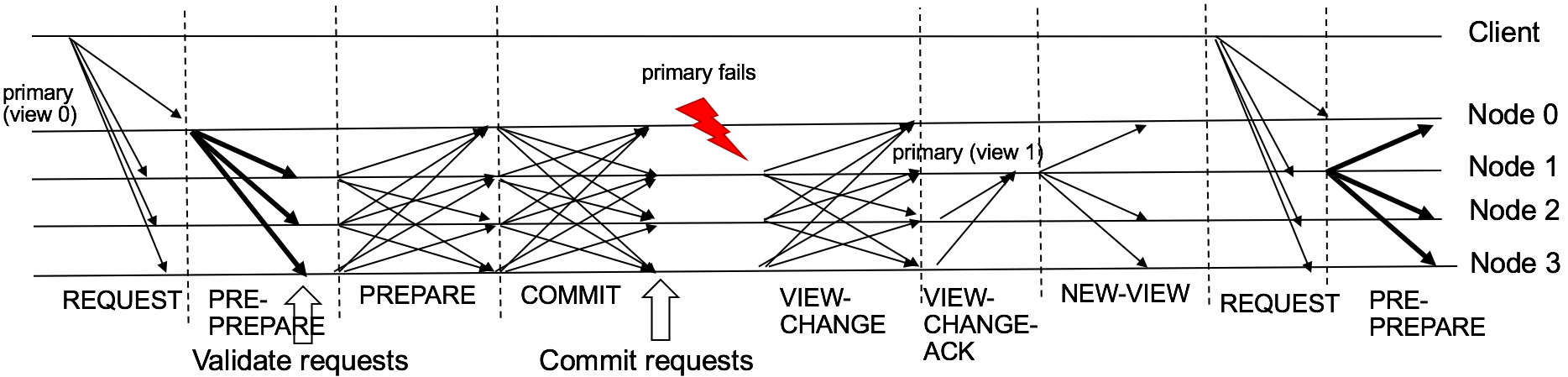}
        \caption{PBFT communication pattern and messages. Bottleneck messages are shown in \textbf{bold}.}
        \label{fig:pbft}
    \end{center}
\end{figure*}


\section{System Model}
\label{sec:model}

We assume an eventually synchronous system \cite{Dwork:1988:CPP:42282.42283}
in which the communication among \emph{correct} processes can be fully asynchronous before some time $GST$, unknown to nodes,
after which it is assumed to be synchronous.
Processes are split into a set of $n$ \emph{nodes} (the set of all nodes is denoted by $Nodes$) and a set of \emph{clients}.
We assume a public key infrastructure in which processes are identified by their public keys;
we further assume node identities are lexicographically ordered and mapped by a bijection to the set $[0\ldots n-1]$
which we use to reason about node identities.
In every execution, at most $\faults$ nodes can be \emph{Byzantine} faulty
(i.e., crash or deviate from the protocol in an arbitrary way),
such that $n\ge 3\faults+1$.
Any number of clients can be Byzantine.

We assume an adversary that can control Byzantine faulty nodes but cannot break the cryptographic primitives we use,
such as PKI and cryptographic hashes (we use SHA-256).
$H(data)$ denotes a cryptographic hash of $data$,
while $data_{\sigma_p}$ denotes $data$ signed by process $p$ (client or node).
Processes communicate through authenticated point-to-point channels
(our implementation uses gRPC \cite{gRPCURL} over TLS, preventing man-in-the-middle and related attacks).

Nodes implement a BFT total order (atomic) broadcast service to clients.
To broadcast request $r$, a client invokes $\bcast(r)$,
with nodes eventually outputting  $\deliver(sn,r)$, such that the following properties hold:

\begin{enumerate}[label=P\arabic*]
    \item \textbf{Validity:} If a correct node commits $r$ then some client broadcasted $r$.
    \item \textbf{Agreement (Total Order):} If two correct nodes commit requests $r$ and $r'$ with sequence number $sn$, then $r=r'$.
    \item \textbf{No duplication:} If a correct node commits request $r$ with sequence numbers $sn$ and  $sn'$, then  $sn = sn'$.
    \item \textbf{Totality:} If a correct node commits request $r$, then every correct node eventually commits $r$.
    \item \textbf{Liveness:} If a correct client broadcasts request $r$, then some correct node $p$ eventually commits $r$.
\end{enumerate}

Note that P3 (No duplication) is a standard TOB property \cite{CachinGR11} that most protocols can easily satisfy by filtering out duplicates \emph{after} agreeing on request order, which is bandwidth wasting.
\protocol enforces P3 \emph{without ordering duplicates}, using a novel approach to eliminate duplicates during agreement to improve performance and scalability.


\section{PBFT and its Bottlenecks}
\label{ssec:PBFT}

\begin{table*}[ht]
    \centering
    \footnotesize{
        \begin{tabular}{| l | l | l | }
            \hline
            \textbf{Protocol} & \textbf{PBFT} \cite{Castro:2002:PBF} & \textbf{\protocol}  \\ \hline
            Client request authentication & vector of MACs (1 for each node)  & signatures \\ \hline
            Batching & no (or, 1 request per ``batch'') & yes \\ \hline
            Multiple-batches in parallel & yes (watermarks) & yes (watermarks) \\ \hline
            Round structure/naming & views & epochs \\ \hline
            Round-change responsibility & view primary (round-robin across all nodes) & epoch primary (round-robin across all nodes)  \\ \hline
            No. of per-round leaders & 1 (view primary) & many (from $1$ to $n$ epoch leaders) \\ \hline
            No. of batches per round & unbounded & bounded (\emph{ephemeral} epochs); unbounded (\emph{\stable} epochs) \\ \hline
            Round leader selection & primary is the only leader & primary decides on epoch leaders (subject to constraints) \\ \hline
            Request duplication prevention & enforced by the primary & hash space partitioning across epoch leaders (rotating) \\ \hline
        \end{tabular}
    }
    \caption{High level overview of the original PBFT \cite{Castro:2002:PBF} vs. \protocol protocol structure.}
    \label{table:diff}
\end{table*}

We depict the PBFT communication pattern in Figure~\ref{fig:pbft}.
PBFT proceeds in rounds called \emph{views} which are led by the \emph{primary}.
The primary sequences and \emph{proposes} a client's request (or a batch thereof) in a \prepreparemsg message
--- on WANs this step is typically a network bottleneck.
Upon reception of the \prepreparemsg, other nodes validate the request,
which involves, at least, verifying its authenticity
(we say nodes \emph{pre-prepare} the request).
This is followed by two rounds of all-to-all communication (\preparemsg and \commitmsg messages),
which are not bottlenecks as they leverage $n$ links in parallel and contain metadata (request/batch hash) only.
A node \emph{prepares} a request and sends a \commitmsg message
if it gets a \preparemsg message from a quorum ($n-f\ge 2\faults+1$ nodes) that matches a \prepreparemsg.
Finally, nodes \emph{commit} the request in total order, if they get a quorum of matching \commitmsg messages.

The primary is changed only if it is faulty or if asynchrony breaks the availability of a quorum.
In this case, nodes timeout and initiate a \emph{view change}.
View change involves communication among nodes
in which they exchange information about the latest \emph{pre-prepared} and \emph{prepared} requests,
such that the new primary, which is selected in round robin fashion,
must re-propose potentially committed requests under the same sequence numbers
within a \textsc{new-view} message (see \cite{Castro:2002:PBF} for details).
The view-change pattern can be simplified using signatures \cite{Castro:1999:PBF:296806.296824}.

After the primary is changed,
the system enters the new view and common-case operation resumes.
PBFT complements this main common-case/view-change protocols with \emph{checkpointing} (log and state compaction)
and state transfer subprotocols \cite{Castro:2002:PBF}.


\section{\protocol Overview}
\label{sec:overview}

\protocol is based on PBFT \cite{Castro:2002:PBF} (Sec.~\ref{ssec:PBFT})
--- major differences are summarized in Table~\ref{table:diff}.
In this section we elaborate on these differences, giving a high-level overview of \protocol.

\paragraph{Request Authentication.}
While PBFT authenticates clients' requests with a vector of MACs, \protocol uses signatures for request authentication
to avoid concerns associated with ``faulty client'' attacks
related to the MAC authenticators PBFT uses \cite{Aardvark} and to prevent any number of colluding nodes, beyond $f$, from impersonating a client.
However, this change may induce a throughput bottleneck,
as per-request verification of clients' signatures requires more CPU than that of MACs.
We address this issue by a signature verification sharding optimization described in Sec.~\ref{sec:sharding}.

\paragraph{Batching and Watermarks.}
\protocol processes requests in \emph{batches} (ordered lists of requests formed 
by a leader),
a standard throughput improvement of PBFT (see e.g., \cite{Zyzzyva,700}).
\protocol also retains request/batch \emph{watermarks} used by PBFT to boost throughput.
In PBFT, request watermarks, low and high, represent the range of request sequence numbers
which the primary/leader can propose concurrently.
While many successor BFT protocols eliminated watermarks in favor of batching (e.g, \cite{Zyzzyva,BessaniSA14,700}),
\protocol reuses watermarks to facilitate concurrent proposals of batches by \emph{multiple parallel leaders}.

\paragraph{Protocol Round Structure.}
Unlike PBFT, \protocol distinguishes between leaders and a primary node.
\protocol proceeds in \emph{epochs} which correspond to \emph{views} in PBFT, each epoch having a single \emph{epoch primary} --- a node deterministically defined by the epoch number,
by round-robin rotation across all the participating nodes of the protocol.

Each epoch $e$ has a set of \emph{epoch leaders} (denoted by $\eplead(e)$),
which we define as nodes that can sequence and propose batches in $e$
(in contrast, in PBFT, only the primary is a leader).
Within an epoch, \protocol deterministically partitions sequence numbers across epoch leaders,
such that all leaders can propose their batches simultaneously without conflicts. Epoch $e$ transitions to epoch $e+1$ if
(1) one of the leaders is suspected of failing, triggering a timeout at sufficiently many nodes (\emph{ungracious epoch change}), or
(2) a predefined number of batches $maxLen(e)$ has been committed (\emph{gracious epoch change}).
While the ungracious epoch change corresponds exactly to PBFT's view change,
the gracious epoch change is a much more lightweight protocol.

\paragraph{Selecting Epoch Leaders.}
For each epoch, it is the \emph{primary} who \emph{selects the leaders} and reliably broadcasts its selection to all nodes.
In principle, the primary can pick an arbitrary leader set as long as the primary itself is included in it.
We evaluated a simple ``grow on gracious, reduce on ungracious epoch'' policy for leader set size.
If primary $i$ starts epoch $e$ with a gracious epoch change it adds itself to the leader set of the preceding epoch $e-1$. If $i$  starts  epoch $e$ with an ungracious epoch change and $e'$ is the last epoch for which $i$ knows the epoch configuration, $i$ adds itself to  the leader set  of epoch $e'$ and removes one node (not itself) for each epoch between $e$ and $e'$ (leaving at least itself in the leader set).%
More elaborate leader set choices, which are outside the scope of this paper, can take into account different heuristics, execution history, fault patterns, weighted voting, distributed randomness, or blockchain stake.

Moreover, in an epoch $e$ where all nodes are leaders ($\eplead(e) = Nodes$),
we define $maxLen(e) = \infty$ (i.e., $e$ only ends if a leader is suspected).
Otherwise, $maxlen(e)$ is a constant, pre-configured system parameter.
We call the former \emph{\stable} epochs and the latter \emph{\ephemeral}.

More elaborate strategies for choosing epoch lengths and leader sets,
which are outside the scope of this paper,
can take into account execution history, fault patterns, weighted voting, distributed randomness, or blockchain stake.
Note that with a policy that constrains the leader set to only the epoch primary and makes every epoch \stable,
\protocol reduces to PBFT.

\paragraph{Request Duplication and Request Censoring Attacks.}
Moving from single-leader PBFT to multi-leader \protocol poses the challenge of request duplication.
A simplistic approach to multiple leaders would be to allow any leader to add any request into a batch
(\cite{BFT-Mencius, RedBelly, Hashgraph}),
either in the common case, or in the case of client request retransmission.
Such a simplistic approach, combined with a client sending a request to exactly one node,
allows good throughput with no duplication only in the best case,
i.e., with no Byzantine clients/leaders and with no asynchrony.

However, this approach does not perform well outside the best case,
in particular with clients sending identical request to multiple nodes. A client may do so simply because it is Byzantine and performs the \emph{request duplication} attack.
However, even a correct client needs to send its request to at least $\faults+1$ nodes
    (i.e., to $\Theta(n)$ nodes, when $n=3f+1$) in the worst case in \emph{any} BFT protocol,
    in order to avoid Byzantine nodes (leaders) selectively ignoring the request (\emph{request censoring} attack). Therefore, a simplistic approach to parallel request processing with multiple leaders \cite{BFT-Mencius, RedBelly, Hashgraph}
faces attacks that can reduce throughput by factor of $\Theta(n)$,
nullifying the effects of using multiple leaders.

Note the subtle but important difference between a duplication attack (submitting the same request to multiple replicas)
and a DoS attack (submitting many different requests) that a Byzantine client can mount.
A system can prevent the latter (DoS) by imposing per-client limits on incoming \emph{unique} request rate.
Mir enforces such a limit through client request watermarks.
A duplication attack, however, is resistant to such mechanisms, as a Byzantine client
is indistinguishable from a correct client with a less reliable network connection.
We demonstrate the effects of these attacks in Section~\ref{sec:evalfaults}.

\begin{figure}[htbp]
	\begin{center}
		\includegraphics[width=0.79\columnwidth]{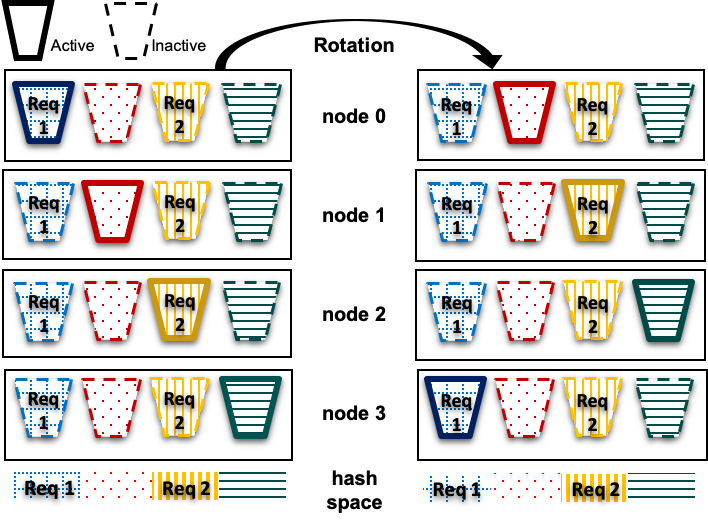}
		\caption{Request mapping in a \stable epoch with $n=4$ (all nodes are leaders):
			Solid lines represent the active buckets.
			Req. 1 is mapped to the first bucket, first active in node $1$.
			Req. 2 is mapped to the third bucket, first active in node $3$.
			Rotation redistributes bucket assignment across leaders.}
		\label{fig:buckets}
	\end{center}
\end{figure}

\paragraph{Buckets and Request Partitioning.}
To cope with these attacks,
    \protocol partitions the request hash space into buckets of equal size (number of buckets is a system parameter) and assigns each bucket to exactly one leader,
allowing a leader to only propose requests from its assigned (\emph{active}) buckets (preventing request duplication).
For load balancing, \protocol distributes buckets evenly
    (within the limits of integer arithmetics) to all leaders in each epoch.
To prevent request censoring, \protocol makes sure that
every bucket will be assigned to a correct leader infinitely often.
We achieve this by periodically redistributing the bucket assignment.
Bucket re-distribution happens
(1) at each epoch change (see Sec. \ref{sec:leaders-sns-buckets}) and
(2) after a predefined number of batches have been ordered in a stable epoch
    (since a stable epoch might never end), as illustrated in Figure~\ref{fig:buckets}.
Note that all nodes, while proposing only requests from their active buckets,
    still receive and store all requests (this can be optimized, see \ref{sec:client}).

\paragraph{Parallelism.}
The \protocol implementation (detailed in Sec.~\ref{sec:implementation}) is highly parallelized,
with every \emph{worker} thread responsible for one batch. In addition, \protocol uses multiple gRPC connections among each pair of nodes which proves to be critical in boosting throughput in a WAN especially with a small number of nodes.

\paragraph{Generalization of PBFT and Emulation of Other BFT Protocols.}
\protocol reduces to PBFT by setting $StableLeaders=1$. This makes every epoch stable, hides bucket rotation (primary is the single leader) and makes every epoch change ungracious. \protocol can also approximate protocols such as Tendermint \cite{Buchman16} and Spinning \cite{Spinning} by setting $StableLeaders>1$, and fixing the maximum number of batches and leaders in every epoch to 1, making every epoch an \ephemeral epoch and rotating leader/primary with every batch.


\section{\protocol Implementation Details}
\label{sec:details}

\subsection{The Client}\label{sec:client}
Upon $\bcast(r)$, broadcasting request $r$,
client $c$ creates a message $\langle\requestmsg, r,t,c\rangle_{\sigma_c}$.
The message includes the client's timestamp $t$, a monotonically increasing sequence number
that must be in a sliding window between the low and high \emph{client watermark} $t_{c_L}<t\le t_{c_H}$.
Client watermarks in \protocol allow multiple requests originating from the same client to be ``in-flight'',
to enable high throughput without excessive number of clients. These watermarks
are periodically advanced with the checkpoint mechanism described in Section~\ref{sec:checkpoint}, in a way which leaves no unused timestamps.
\add{Mir alligns checkpointing/advancing of the watermarks with epoch transitions (when no requests are in flight),
  such that all nodes always have a consistent view of the watermarks.}

In principle, the client sends the \requestmsg to all nodes (and periodically re-sends it to those nodes who have not received it, until request  commits at at least $f+1$ nodes).
In practice, a client may start by sending its request to fewer than $n$ nodes ($f+1$ in our implementation)
and only send it to the remaining nodes if the request has not been committed after a timeout.

\subsection{Sequence Numbers and Buckets}
\label{sec:leaders-sns-buckets}

\paragraph{Sequence Numbers.}
In each epoch $e$, a leader may only use a subset of $e$'s sequence numbers for proposing batches.
    \protocol partitions $e$'s sequence numbers to leaders in $\eplead(e)$ in a round-robin way, using modulo arithmetic,
starting at the epoch primary (see Fig.~\ref{fig:preprepare}).
We say that a leader \emph{leads} sequence number $sn$
when the leader is assigned $sn$ and is thus expected to send a \prepreparemsg for the batch with sequence number $sn$.
Batches are proposed in parallel by all epoch leaders and are processed like in PBFT.
Recall (from Table~\ref{table:diff}) that batch watermarking (not to be confused with client request watermarking from Sec. \ref{sec:client})
allows the PBFT primary to propose multiple batches in parallel;
in \protocol, we simply extend this to multiple leaders.

\begin{figure}[htbp]
    \begin{center}
        \includegraphics[width=\columnwidth]{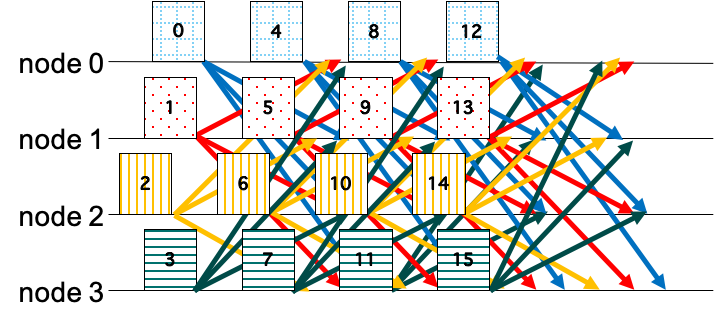}
        \caption{\prepreparemsg messages in an epoch
        where all 4 nodes are leaders balancing the proposal load. \protocol partitions batch sequence numbers among epoch leaders.}
        \label{fig:preprepare}
    \end{center}
\end{figure}

\paragraph{Buckets.}
In epoch $e=0$, we assign buckets to leaders sequentially,
starting from the buckets with the lowest hash values which we assign to epoch primary $0$.
For $e>0$, the primary picks a set of consecutive buckets for itself (primary's \emph{preferred buckets}),
starting from the bucket which contains the \emph{oldest} request it received;
this is key to ensuring Liveness (P5, Sec.~\ref{sec:model}).
\protocol distributes the remaining buckets evenly and deterministically among the other leaders --- this distribution is determined from an \emph{epoch configuration} which the epoch primary reliably broadcasts and which contains preferred buckets and leader set selection (see Sec.~\ref{sec:new-epoch}). Buckets assigned to a leader are called its \emph{active} buckets.

Additionally, if $e$ is stable (when $maxLen(e) = \infty$ and thus no further epoch changes are guaranteed),
leaders periodically (each time a pre-configured number of batches are committed)
rotate the bucket assignment:
leader $i$ is assigned buckets previously assigned to leader $i+1$ (in modulo $n$ arithmetic).
To prevent accidental request duplication
(which could result in leader $i$ being suspected and removed from the leader set),
leader $i$ waits to commit all ``in-flight'' batches
before starting to propose its own batches.
Other nodes do the same before pre-preparing batches in $i$'s new buckets.
In the example shown in Fig. \ref{fig:buckets}, after the rotation,
node 0 waits to commit all batches (still proposed by node 1) from its newly active red (second) bucket,
before node $0$ starts proposing new batches from the red (second) bucket.

\subsection{Common Case Operation}
\label{sec:stable}

\noindent\textbf{\requestmsg.} In the common case, the protocol proceeds as follows.
Upon receiving $\langle\requestmsg, r,t,c\rangle_{\sigma_c}$ from a client,
an epoch leader first verifies that the request timestamp is within the client's current watermarks $t_{C_L}<t\le t_{C_H}$
and maps the request to the respective bucket by hashing the client timestamp and identifier $h_r = H(t||c)$. Each bucket is implemented as a FIFO queue. We do \emph{not} hash the request payload, as this would allow a malicious client to target a specific bucket by adapting the request payload, mounting load imbalance attacks.
If the request falls into the leader's active bucket, the leader also verifies the client's signature on \requestmsg.
A node $i$ discards $r$ if $i$ already preprepared a batch containing $r$
or if $r$ is already pending at $i$
(we call a valid request pending if it has been received by $i$ but not yet committed).

\noindent\textbf{\prepreparemsg.}
Once  leader $i$ gathers enough requests in its current active buckets,
or if timer $T_{batch}$ expires (since the last batch was proposed by $i$),
$i$ adds the requests from the current active buckets in a batch,
assigns its next available sequence number $sn$ to the batch (provided $sn$ is within batch watermarks)
and sends a \prepreparemsg message.
If $T_{batch}$ time has elapsed and no requests are available,
$i$ sends a \prepreparemsg message with an empty batch.. This guarantees progress of the protocol under low load.

A node $j$ accepts a \prepreparemsg (we say \emph{preprepares} the batch and the requests it contains),
with sequence number $sn$ for epoch $e$ from node $i$ provided that:
(1) the epoch number matches the local epoch number and $j$ did not preprepare another batch with the same $e$ and $sn$,
(2) node $i$ is in $\eplead(e)$,
(3) node $i$ leads $sn$,
(4) the batch sequence number $sn$ in the \prepreparemsg is between a low watermark and high batch watermark: $w<sn\leq W$,
(5) none of the requests in the batch have already been \emph{preprepared},
(6-8) every request in the batch: (6) has timestamp within the current client's watermarks,
(7) belongs to one of $i$'s active buckets,
and (8) has a signature which verifies against client's id (i.e., corresponding public key).

Conditions (1)-(4) are equivalent to checks done in PBFT,
whereas conditions (5)-(8) differ from PBFT.
Condition (5) is critical for enforcing No Duplication (Property P3, Sec.~\ref{sec:model}).
Conditions (6) (allowing clients to send more than one request concurrently)
and (7) (prohibiting malicious leaders to propose requests outside their buckets) are performance related.
Condition (8) is the key to Validity (Property P1).
As this step may become a CPU bottleneck if performed by all nodes,
we use signature sharding as an optimization (see Sec.~\ref{sec:sharding}).

\noindent\textbf{The Rest.} If
node $j$ preprepares the batch, $j$ sends a \preparemsg and the protocol proceeds exactly as PBFT.
Otherwise, $j$ ignores the batch (which may  eventually lead to $j$ entering epoch change).
Upon committing a batch, $j$ removes all requests present in the committed batch from $j$'s pending queues.

\subsection{Epoch Change}
Locally, at node $j$, epoch $e$ can end \emph{graciously}, by exhausting all $maxLen(e)$ sequence numbers, or \emph{ungraciously}, if an epoch change timer (corresponding to the PBFT view change timer) at $j$ expires.
In the former (gracious) case, a node simply starts epoch $e+1$ (see also Sec. \ref{sec:new-epoch})
when it: (1) locally commits all sequence numbers in $e$, and (2) reliably delivers epoch configuration for $e+1$.
In the latter (ungracious) case, a node first enters an epoch change subprotocol (Sec. \ref{sec:ecsubprotocol}) for epoch $e+1$.

It can happen that some correct nodes finish $e$ graciously and some others not.
Such (temporary) inconsistency may prevent batches from being committed in $e+1$ even if the primary of $e+1$ is correct.
However, such such inconsistent epoch transitions are eventually resolved in subsequent epochs,
analogously to PBFT, when some nodes complete the view change subprotocol and some do not (due to asynchrony).
As we show in in
\ifdefined\FULLPAPER
  Appendix \ref{sec:livenessproof},
\else
  the anonymous full version~\cite{mirFull}
\fi
the liveness of \protocol is not violated.

\subsubsection{Epoch Change Subprotocol}
\label{sec:ecsubprotocol}

    The epoch change subprotocol is triggered by epoch timeouts due to asynchrony or failures
    and generalizes PBFT's view change subprotocol.
    Similarly to PBFT, after commiting batch $sn$ in epoch $e$
    a node resets and starts an epoch-change timer $ecTimer$ expecting to commit batch $sn+1$.

    If an $ecTimer$ expires at node $i$, $i$ enters the epoch-change subprotocol to move from epoch $e$ to epoch $e+1$.

    In this case, $i$ sends an \textsc{EPOCH-CHANGE} message to the primary of epoch $e+1$.
    An \textsc{EPOCH-CHANGE} message follows the structure of a PBFT \textsc{VIEW-CHANGE} message (page 411, \cite{Castro:2002:PBF})
    with the difference that it is signed and that there are no \textsc{VIEW-CHANGE-ACK} messages exchanged
    (to streamline and simplify the implementation similarly to \cite{Castro:1999:ABF}).
    The construction of a \textsc{NEW-EPOCH} message (by the primary of $e+1$)
    proceeds in the same way as the PBFT construction of a \textsc{NEW-VIEW} message.
    A node starts epoch $e+1$ by processing the \textsc{NEW-EPOCH} message
    the same way a node starts a new view in PBFT by processing a \textsc{NEW-VIEW} message.

    However, before entering epoch $e+1$,
    each correct node \emph{resurrects} potentially pre-prepared but uncommitted requests from previous epochs
    that are not reflected in the \textsc{NEW-EPOCH} message.
    This is required to prevent losing requests due to an epoch change  (due to condition (5) in pre-preparing a batch --- Sec.~\ref{sec:stable}),
    as not all batches that were created and potentially preprepared before the epoch change
    were necessarily delivered when starting the new epoch.
    Resurrecting a request involves each correct node:
    (1) returning these requests to their corresponding bucket queues, and
    (2) marking the requests as not preprepared.
    This allows proposing such requests again.
    Together with the fact that clients make sure that every correct replica eventually receives their request (Sec.~\ref{sec:client}),
    this helps guarantee Liveness (P5).

\subsubsection{Starting a New Epoch}
\label{sec:new-epoch}

Every epoch $e$ be it gracious or ungracious, starts by the primary reliably broadcasting
(using Bracha's classic 3-phase algorithm \cite{Bracha:1985:ACB:4221.214134})
the \emph{epoch configuration} information\footnote{We optimize the reliable broadcast of an epoch configuration using piggybacking on other protocol messages where applicable.}  containing:
(1) $\eplead(e)$, the set of epoch leaders for $e$, and
(2) identifiers of primary's preferred buckets (that the primary picks for itself),
which the primary selects based on the oldest request pending at the primary.

Before starting to execute participate in epoch $e$
(including processing a potential \textsc{NEW-EPOCH} message for $e$)
a node $i$ first waits to reliably deliver the epoch $e$ configuration.
In case of gracious epoch change, node $i$ also waits to locally committing all ``in-flight'' batches
pertaining to $e-1$.

\subsection{Checkpointing (Garbage Collection)} \label{sec:checkpoint}
Exactly as in PBFT, \protocol uses a checkpoint mechanism to prune the message logs.
After a node $i$ commits all batches with sequence numbers up to and including $sn_C$, where $sn_C$ is divisible by \add{predefined} configuration parameter $C$,
$i$ sends a $\langle\checkpointmsg, sn_C, H(sn_C')\rangle\sigma_i$ message to all nodes,
where $sn_C'$ is the last checkpoint and $H(sn_C')$ is the hash of the batches with sequence numbers $sn$ in range $sn_C'\le sn < sn_C$.
Each node collects checkpoint messages until it has $2\faults+1$ matching ones (including its own), constituting a \emph{checkpoint certificate},
and persists the certificate.
At this point, the checkpoint is \emph{stable}
and the node can discard the common-case messages from its log for sequence numbers lower than $sn$.

\protocol advances batch watermarks at checkpoints like PBFT does.
Clients' watermarks are also possibly advanced at checkpoints,
as the state related to previously delivered requests is discarded.
For each client $c$, the low watermark $t_{c_L}$ advances to the highest timestamp $t$ in a request submitted by $c$
that has been delivered, such that all requests with timestamp $t' < t$ have also been delivered.
The high watermark advances to $t_{c_H} = t_{c_L} + w_c$, where $w_c$ is the length of the sliding window.

\subsection{Signature Verification Sharding (SVS)}
\label{sec:sharding}

To offload CPU during failure-free execution (in \stable epochs),
we implement an optimization where not all nodes verify all client signatures.
For each batch, we distinguish $f+1$ \emph{verifier} nodes,
defined as the $f+1$ lexicographic (modulo $n$) successors of the leader proposing the batch.
Only the verifiers verify client signatures in the batch
on reception of a \prepreparemsg message (condition (8) in Sec. \ref{sec:stable}). Furthermore, we modify the \protocol (and thus PBFT) common-case protocol
such that a node does not send a \commitmsg before having received a \preparemsg message
from \emph{all} $f+1$ verifiers (in addition to $f$ other nodes and itself).
This maintains Validity, as at least one correct node must have verified client's signature. This way, however, if even a single verifier is faulty,
SVS may prevent a batch from being committed.
Therefore, we only apply this optimization in \stable epochs where all nodes are leaders.
In case (ungracious) epoch change occurs,
reducing the size of the leader set, \protocol disables SVS.

\subsection{State Transfer}\label{sec:statetransfer}

Nodes can temporarily become unavailable, either due to asynchrony, or due to transient failures.
Upon recovery/reconnection, nodes must obtain several pieces of information before being able to actively participate in the protocol again.
\protocol state transfer is similar to that of PBFT, and here we outline the key aspects of our implementation.

To transfer state, nodes need to obtain current epoch configuration information, the latest stable checkpoint (which occurred at sequence number $h$),
as well as information concerning batches having sequence numbers between $h+1$ and the latest sequence number.
Nodes also exchange information about committed batches..

The state must, in particular, contain two pieces of information:
(1) the current epoch configuration, which is necessary to determine the leaders from which the node should accept proposals, and
(2) client timestamps at the latest checkpoint, which are necessary to prevent including client requests that have already been proposed in future batches.

A node $i$ in epoch $e$ goes to state transfer when $i$ receives common-case messages from $f+1$ other nodes with epoch numbers higher than $e$, and $i$ does not transition to $e+1$ for a certain time.
Node $i$ obtains this information by broadcasting a $\langle HELLO, ne_i, c_i, b_i\rangle$ message,
where $ne_i$ is the latest NEW-EPOCH message received by $i$,
$c_i$ is the node's last stable checkpoint,
and $b_i$ is the last batch $i$ delivered.
Upon receipt of a $HELLO$ message, another node $j$ replies with its own $HELLO$ message,
as well as with any missing state from the last stable checkpoint and up to its current round $n$.

From the latest stable checkpoint, a node can derive the set of $2f+1$ nodes which signed this stable checkpoint. This also allows a node to transfer missing batches even from one out of these $2f+1$ nodes, while receiving confirmations of hashes of these batches from $f$ additional nodes (to prevent ingress of batches from a Byzantine node).

We perform further optimizations in order to reduce the amount of data that needs to be exchanged in case of a state transfer.
First, upon reconnecting, nodes announce their presence but wait for the next stable checkpoint after state transfer before actively participating in the protocol again. This enables us to avoid transferring the entire state related to requests following the preceding stable checkpoint.
Second, the amount of data related to client timestamps that needs to be transmitted can be reduced through only exchanging the root of the Merkle tree containing the client timestamps, with the precise timestamps being fetched on a per-need basis.

\subsection{Membership Reconfiguration} While details of membership reconfiguration are outside of the scope of this paper, we briefly describe how \protocol deals with adding/removing clients and nodes. Such requests, called \emph{configuration} requests are totally ordered like other requests, but are tagged to be interpretable/executed by nodes. As \protocol processes requests out of order (just like PBFT), configuration requests cannot be executed right after committing a request as the timing of commitment might diverge across nodes resulting in non-determinism. Instead, configuration requests are taken into account only at checkpoints and more specifically all configuration requests ordered between checkpoints $k-1$ and $k$, take effect only after checkpoint $k+1$.

\subsection{Durability (Persisting State)}

By default, \protocol implementation does not persist state or message logs to stable storage. Hence, a node that crashes might recover in a compromised state --- however such a node does not participate in the protocol until the next stable checkpoint which effectively restores the correct state. While we opted for this approach assuming that for few dozens of nodes simultaneous faults of up to a third of them will be rare, for small number of nodes the probability of such faults grows and with some probability might exceed threshold $\faults$. Therefore, we optionally persist state pertaining to \emph{sent} messages in \protocol, which is sufficient for a node to recover to a correct state after a crash.

We also evaluated the impact of durability with $4$ nodes, in a LAN setting, where it is mostly relevant due to small number of nodes and potentially collocated failures, using small transactions. We find that durability has no impact on total throughput, mainly due to the fact that persisted messages are amortized due to batching, \protocol parallel architecture and the computation-intensive workload. However, average  request latency increases by roughly 300ms.

  \subsection{Implementation Architecture}
  \label{sec:implementation}

  We implemented \protocol in Go.
  Our implementation is multi-threaded and inspired by  the \emph{consensus-oriented parallelism} (COP) architecture
  previously applied to PBFT to maximize its throughput on multicore machines \cite{BehlDK15}.
  Specifically, in our implementation, a separate thread is dedicated to managing each batch during the common case operation,
  which simplifies \protocol code structure and helps maximize performance.
  We further parallelize computation-intensive tasks whenever possible (e.g., signature verifications, hash computations).
  The only communication in common case between \protocol threads pertains to request duplication prevention
  (rule (6) in accepting \prepreparemsg in Sec.~\ref{sec:stable})
  --- the shared data structures for duplication prevention are hash tables,
  synchronized with per-bucket locks;
  instances that handle requests corresponding to different leaders do not access the same buckets.
  The only exception to the multi-threaded operation of \protocol is during an ungracious epoch-change,
  where a designated thread (\protocol Manager) is responsible for stopping worker common-case threads
  and taking the protocol from one epoch to the next.
  This manager thread is also responsible for sequential batch delivery and for checkpointing,
  which, however, does not block the common-case threads processing batches.

  Our implementation also parallelizes network access using a configurable number of independent network connections between each pair of nodes.  This proves to be critical in boosting \protocol performance beyond seeming bandwidth limitations in a WAN
  that stem from using a single TCP/TLS connection.

  In addition to multiple inter-node connections,
  we use an independent connection for handling client requests.
  As a result, the receipt of requests is independent of the rest of the protocol
  --- we can safely continue to receive client requests even if the protocol is undergoing an epoch change.
  Our implementation can hence seamlessly use, where possible,
  separate NICs for client's requests and inter-node communication to address DoS attacks \cite{Aardvark}.

\section{Pseudocode}
\label{sec:pseudocode}

In this section we introduce \protocol pseudocode. We first present PBFT~\cite{Castro:2002:PBF} pseudocode to demonstrate the common message flow in the common case of the two protocols.

Each node executes its own instance of the algorithm described by the pseudocode.
The node atomically executes each \textbf{upon} block exactly once for each assignment of values satisfying the block's triggering condition.

For better readability we do not include batching in the pseudocode.
Implementing batching is trivial by replacing requests with batches of requests,
except request handling (lines 53-60).
Moreover, whenever appropriate, instead of performing a request-specific action on a batch,
we perform this action on all requests in a batch, like request validity checks in \prepreparemsg (lines 74-77)
and request resurrection (lines 138-146).
In the context of request-specific validity checks, we consider the whole batch invalid if any of the contained requests fails its validity check.
Finally condition on line 62 should be replaced with checking if there exist enough requests for a batch.

Moreover, for readability, the pseudocode does not include a batch timeout which ensures that
even with low load leaders continuously send batches (even if empty)
to drive the checkpoint protocol and so that EpochChangeTimeout does not fire.

\begin{algorithm}[H]
  \scriptsize
  \begin{algorithmic}[1]

    \Function{IsPrimary}{i, v}
    \State \textbf{return} $i = v \mod N$;
    \EndFunction

    \Function{Valid}{v, n}
    \If {($lv=v$) \AndT ($w<=n<W)$}
    \State \textbf{return} $True$;
    \Else
    \State \textbf{return} $False$;
    \EndIf
    \EndFunction

    \Function{GetOldest}{S}
    \State Returns the oldest entry in set $S \setminus Preprepared$.
    \EndFunction

  \end{algorithmic}
  \caption{Common}
  \label{algorithm:common}
\end{algorithm}


\begin{algorithm}[H]
  \scriptsize
  \begin{algorithmic}[1]

    \Import Common
    \Import PbftViewChange
    \\
    \Parameters
    \State $id$ \Comment{The node identity}
    \State $f$ \Comment{Number of faults tolerated}
    \State $RequestTimeout$ \Comment{Timeout to prevent waiting indefinitely for q request to commit}
    \State $w$ \Comment{Low watermark, advances at checkpoints}
    \State $W$ \Comment{High watermark, advances at checkpoints}
    \EndParameters

    \Struct{Request}
    \State bytes $o$ \Comment{Request payload}
    \State int $t$ \Comment{Client timestamp}
    \State bytes $c$ \Comment{Client public key (ID)}
    \EndStruct

    \Init
    \State $lv\leftarrow 0$ \Comment{Local view number}
    \State $next\leftarrow 0$ \Comment{The next available sequence number}
    \State $R\leftarrow\emptyset$ \Comment{The set of received requests}
    \State $Preprepare\_msgs\leftarrow\{\}$ \Comment{A map from (view, sequence number) pairs to \prepreparemsg messages, initially $\bot$}
    \State $Prepare\_msgs\leftarrow\{\}$ \Comment{A map from (view, sequence number) pairs to a set of \emph{unique} \preparemsg messages}
    \State $Commit\_msgs\leftarrow\{\}$ \Comment{A map from (view, sequence number) pairs to a set of \emph{unique} \commitmsg messages}
    \State $RequestTimeouts\leftarrow\{\}$ \Comment{A map from requests to timers}
    \EndInit

  \algstore{pbft}
  \end{algorithmic}
  \caption{PBFT~\cite{Castro:2002:PBF}}
  \label{algorithm:pbft}
\end{algorithm}


  \begin{algorithm}[H]
  \scriptsize
  \ContinuedFloat
  \begin{algorithmic}[1]
  \algrestore{pbft}

    \State \textbf{upon receiving } $r \leftarrow \langle\requestmsg,o,t,c\rangle_{\sigma_c}$
    \State \hskip 0.55cm \textbf{ such that} $SigVer(r, \sigma_c, c)$
    \State \hskip 0.55cm \AndT \Not $(r^\prime \In R$ s.t. $r^\prime.c=r.c$ \AndT $r^\prime.t\ne r.t)$
    \textbf{do}
    \State \hskip \algorithmicindent $R\leftarrow r\cup R$
    \State \hskip \algorithmicindent $RequestTimeouts[r]\leftarrow$ \textbf{schedule} $RequestTimeout$
    \State

    \State \textbf{upon receiving } $|R|>0$ \AndT $w<= next <W $
    \State \AndT common.IsPrimary(id, lv)
    \textbf{do}
    \State $r\leftarrow common.GetOldest(R)$
    \State Send $\langle \prepreparemsg, lv, next, r, id\rangle$ to all nodes
    \State $next\leftarrow next+1$

    \State \textbf{upon receiving } $pp \leftarrow \langle \prepreparemsg, v, n, r, i\rangle$
    \State \hskip 0.55cm \textbf{ such that} $common.Valid(v, n)$
    \State \hskip 0.55cm \AndT $common.IsPrimary(i, v)$
    \State \hskip 0.55cm \AndT $Preprepare\_msgs[v, n] = \bot$
    \State \hskip 0.55cm \AndT $r \In R$ \textbf{do}
    \State \hskip \algorithmicindent $Preprepare\_msgs[v,n] \leftarrow pp$
    \State \hskip \algorithmicindent send $\langle \preparemsg, v, n, D(r), id\rangle$ to all nodes
    \State

    \State \textbf{upon receiving } $p \leftarrow \langle\preparemsg, v, n, D(r), i \rangle$
    \State \hskip 0.55cm \textbf{ such that} $D(Preprepare\_msgs[v, n].r) = D(r)$
    \State \hskip 0.55cm \AndT $common.Valid(n, v)$ \textbf{do}
    \State \hskip \algorithmicindent $Prepare\_msgs[v,n]\leftarrow Prepare\_msgs[v,n] \cup \{p\}$
    \State

    \Upon{|Prepare\_msgs[lv,n]| = 2f+1}
    \State $r \leftarrow Preprepare\_msgs[lv,n].r$
    \State send $\langle\commitmsg, lv, n, D(r), id\rangle$ to all nodes
    \EndUpon

    \State \textbf{upon receiving } $c \leftarrow \commitmsg, v, n, D(r), i\rangle$
    \State \hskip 0.55cm \textbf{ such that} $D(Preprepare\_msgs[v, n].r) = D(r)$
    \State \hskip 0.55cm \AndT $common.Valid(n, v)$ \textbf{do}
    \State \hskip \algorithmicindent $Commit\_msgs[v,n] \leftarrow Commit\_msgs[v,n] \cup \{c\}$
    \State

    \Upon{|Commit\_msgs[lv,n]| = 2f+1}
    \State $r\leftarrow Preprepare\_msgs[v, n].r$
    \State $R\leftarrow R\setminus r$
    \State $Deliver(n,r)$
    \State  \textbf{cancel} $RequestTimeouts[r]$
    \EndUpon

    \Upon{RequestTimeout}
    \State $lv\leftarrow lv+1$
    \State{$PbftViewChange.ViewChange()$}
    \Comment{Trigger standard PBFT view change}
    \EndUpon

  \end{algorithmic}
  \caption{PBFT (continues)}
  \label{algorithm:pbft2}
\end{algorithm}


\begin{algorithm}[H]
  \scriptsize
  \begin{algorithmic}[1]

    \Import Common
    \Import PbftViewChange
    \Import ReliableBroadcast
    \\
    \Parameters
    \State $id$ \Comment{The node identity}
    \State $f$ \Comment{Number of faults tolerated}
    \State $EpochChangeTimeout$ \Comment{Timeout for epoch change}
    \State $w$ \Comment{Low watermark, advances at checkpoints}
    \State $W$ \Comment{High watermark, advances at checkpoints}
    \State $NumBuckets$ \Comment{Number of buckets}
    \State $BucketsPerLeader$ \Comment{The number of buckets per leade when all nodes are leaders}
    \State $RotationPeriod$ \Comment{Bucket rotation period}
    \State $EphemeralEpLen$ \Comment{Number of sequence numbers in an ephemeral epoch}
    \EndParameters

    \Struct{Request}
    \State $o$ \Comment{Request payload}
    \State $t$ \Comment{Client timestamp}
    \State $c$ \Comment{Client struct}
    \EndStruct

    \Struct{Client}
    \State $pk$ \Comment{Client public key (ID)}
    \State $H$ \Comment{Client high watermark, advances at checkpoint}
    \State $L$ \Comment{Client low watermark, advances at checkpoint}
    \EndStruct

    \Struct{EpochConfig}
    \State $First$ \Comment{First sequence number of the epoch}
    \State $Last$ \Comment{Last sequence number of the epoch}
    \State $Leaders$ \Comment{List of leaders of the epoch}
    \State $PrimaryBuckets$ \Comment{Buckets the primary chose for itself}
    \EndStruct

    \Init
    \State $le\leftarrow 0$ \Comment{Local epoch number}
    \State $next\leftarrow id$ \Comment{The next available sequence number}
    \State $Buckets\leftarrow \text{ Set of }NumBuckets \text{ empty buckets}$
    \State \Comment{Each bucket is a FIFO queue of received requests}
    \State $Preprepare\_msgs\leftarrow\{\}$ \Comment{A map from (epoch, sequence number) pairs to \prepreparemsg messages}
    \State $Prepare\_msgs\leftarrow\{\}$ \Comment{A map from (epoch, sequence number) pairs to a set of \emph{unique} \preparemsg messages}
    \State $Commit\_msgs\leftarrow\{\}$ \Comment{A map from (epoch, sequence number) pairs to a set of \emph{unique}\commitmsg messages}
    \State $Preprepared\leftarrow\emptyset$ \Comment{A set of preprepared requests to prevent duplicates}
    \State $committed\leftarrow\{\}$ \Comment{A map from (epoch, sequence number) pairs to committed requests, initially $\bot$}
    \State $delivered\leftarrow\{\}$ \Comment{A map from (epoch, sequence number) booleans}
    \State $EpochConfig\leftarrow[]$ \Comment{List of epoch configurations}
    \ForAll{bucket}{Buckets}
    \State $bucket\leftarrow\emptyset$
    \EndForAll

    \State $EpochConfig[0].First = 0$
    \State $EpochConfig[0].Last = \infty$
    \State $EpochConfig[0].Leaders = Nodes$
    \State $EpochConfig[0].PrimaryBuckets =$ arbitrary $\lceil NumBuckets / Nodes \rceil$ buckets

    \State $ActiveBucketAssignment(0, EpochConfig[0])$
    \EndInit

    \algstore{mir}
  \end{algorithmic}
  \caption{\protocol initialization}
  \label{algorithm:mir1}
\end{algorithm}


\begin{algorithm}[H]
  \scriptsize
  \ContinuedFloat
  \begin{algorithmic}[1]
    \algrestore{mir}

    \State \textbf{upon receiving } $r \leftarrow\langle\requestmsg, o, t, c\rangle_{\sigma_c}$
    \State \hskip 0.55cm \textbf{ such that} $SigVer(m, \sigma_c, c.pk)$
    \State \hskip 0.55cm \AndT $r.c.L <= r.t < r.c.H$
    \State \hskip 0.55cm \AndT $r \notin Preprepared$
    \textbf{do}
    \State \hskip \algorithmicindent $bucket \leftarrow GetBucket(H(t||c.pk))$
    \State \hskip \algorithmicindent \textbf{if} $\nexists r' \in bucket : r'.c = r.c \wedge r'.t = r.t$
    \State \hskip \algorithmicindent \hskip \algorithmicindent $bucket.append(r)$
    \State \hskip \algorithmicindent \textbf{end if}
    \State

    \State \textbf{upon } $|ActiveBuckets(i,le,next)| > 0$
    \State \hskip 0.55cm \AndT $w <= n < W$
    \State \hskip 0.55cm \AndT $ActiveRotation(le,n)$
    \State \hskip 0.55cm \AndT $n \leq EpochConfig[le].Last$
    \textbf{do}
    \State \hskip \algorithmicindent $r\leftarrow common.GetOldest(ActiveBuckets(i,le,next))$
    \State \hskip \algorithmicindent send $\langle\prepreparemsg, le, n, r, id\rangle$ to all nodes
    \State \hskip \algorithmicindent $next \leftarrow next + |EpochConfig[le].Leaders)|$
    \State

    \State \textbf{upon receiving } $pp\leftarrow\langle\prepreparemsg, e, n, r, i\rangle$
    \State \hskip 0.55cm \textbf{ such that} $common.Valid(e, n)$
    \State \hskip 0.55cm \AndT $IsLeader(i, e, n)$
    \State \hskip 0.55cm \AndT $Preprepare\_msgs[e, n] = \bot)$
    \State \hskip 0.55cm \AndT $r.c.L <= r.t < r.c.H$
    \State \hskip 0.55cm \AndT $H(r.o||r.t||r.c.pk) \Not \In Preprepared$
    \State \hskip 0.55cm \AndT $H(r.t||r.c.pk) \In ActiveBuckets(i,e,n)$
    \State \hskip 0.55cm \AndT $SigVer((r, r.\sigma_c, c.pk)$
    \textbf{do}
    \State \hskip \algorithmicindent $Preprepared \leftarrow Preprepared \cup \{r\}$
    \State \hskip \algorithmicindent $Preprepare\_msgs[e, n] \leftarrow pp$
    \State \hskip \algorithmicindent send $\langle \preparemsg, v, n, D(r), id \rangle$ to all nodes
    \State

    \State \textbf{upon receiving } $p \leftarrow \langle\preparemsg, e, n, D(r), i\rangle$
    \State \hskip 0.55cm \textbf{ such that} $D(Preprepare\_msgs[e,n].r)=D(r)$
    \State \hskip 0.55cm \AndT $common.Valid(e, v)$
    \textbf{do}
    \State \hskip \algorithmicindent $Prepare\_msgs[e,n]\leftarrow Prepare\_msgs[e,n] \cup \{p\}$
    \State

    \Upon{|Prepare\_msgs[le,n]| = 2f+1}
    \State $r\leftarrow Preprepare\_msgs[e,n].r$
    \State send $\langle\commitmsg, le, n, D(r), id\rangle$ to all nodes
    \EndUpon

    \State \textbf{upon receiving } $c \leftarrow\langle\commitmsg, e, n, D(r), i\rangle$
    \State \hskip 0.55cm \textbf{ such that} $D(Preprepare\_msgs[e,n].r)=D(r)$
    \State \hskip 0.55cm \AndT $common.Valid(e, v)$
    \textbf{do}
    \State \hskip \algorithmicindent $Commit\_msgs[e,n]\leftarrow Commit\_msgs[e, n] \cup \{c\}$
    \State

    \Upon{|Commit\_msgs[e,n]| = 2f+1}
    \State $r \leftarrow Preprepare\_msgs[e,n].r$
    \State $committed[e, n] \leftarrow r$
    \State $GetBucket(H(r.t||r.c.pk)).remove(r)$
    \EndUpon

    \Upon{committed[le, n] \neq \bot \AndT delivered[n-1]}
    \State $Deliver(n,r)$
    \State $delivered[n]\leftarrow True$
    \State \textbf{reset} EpochChangeTimeout
    \EndUpon

    \Upon{EpochChangeTimeout}
    \State $PBFTViewChange.ViewChange()$
    \Comment{Standard PBFT view change}
    \EndUpon

    \State \textbf{upon } delivered[EpochConfig[e].Last]
    \State \AndT common.IsPrimary(id, e+1)
    \textbf{do}
    \State $EpochConfig[e+1].Leaders \leftarrow EpochConfig[e].Leaders \cup \{id\}$
    \State $EpochConfig[e+1].PrimaryBuckets$
    \State \hskip \algorithmicindent $\leftarrow \lceil NumBuckets / Nodes \rceil$ buckets containing the oldest requests
    \State $EpochConfig[e+1].First \leftarrow EpochConfig[e].Last + 1$
    \If{$EpochConfig[e+1].Leaders = Nodes$}
    \State $EpochConfig[e+1].Last \leftarrow \infty$
    \Else
    \State $EpochConfig[e+1].Last$
    \State \hskip \algorithmicindent $\leftarrow EpochConfig[e+1].First + ephemeralEpLen$
    \EndIf
    \State $ReliableBroadcast.Broadcast(EpochConfig[e+1], e+1)$

    \algstore{mir2}
  \end{algorithmic}
  \caption{\protocol (continues)}
  \label{algorithm:mir2}
\end{algorithm}


\begin{algorithm}[H]
  \scriptsize
  \ContinuedFloat
  \begin{algorithmic}[1]
    \algrestore{mir2}

    \Upon{\text{sending PBFT NEW-EPOCH message for epoch }e+1}
    \State $EpochConfig[e+1].Leaders \leftarrow ShrinkingLeaderset(e+1,id)$
    \State $EpochConfig[e+1].PrimaryBuckets$
    \State \hskip \algorithmicindent $\leftarrow \lceil NumBuckets / Nodes \rceil$ buckets containing the oldest requests
    \State $EpochConfig[e+1].First \leftarrow EpochConfig[e].Last + 1$
    \State $EpochConfig[e+1].Last \leftarrow EpochConfig[e+1].First + ephemeralEpLen$
    \State\label{ln:rb-ungracious} $ReliableBroadcast.Broadcast(EpochConfig[e+1], e+1)$
    \EndUpon

    \Upon{ReliableBroadcast.Delivered(EpochConfig, e) \AndT le = e}\label{ln:rb-deliver}
    \State $EpochConfig[e] \leftarrow EpochConfig$
    \If{$\exists k : EpochConfig[e].Leaders[k] = id$}
    \State $next\leftarrow EpochConfig[e].First+k$
    \EndIf
    \State $ActiveBucketAssignment(e, EpochConfig)$
    \EndUpon

    \Upon{\text{sending or receiving PBFT NEW-EPOCH message}}
    \ForAll{r}{Preprepared}
    \If{r \text{not in NEW-VIEW}}
    \State $Preprepared \leftarrow Preprepared \setminus \{r\}$
    \EndIf
    \EndForAll
    \ForAll{r}{\text{PBFT NEW-VIEW}}
    \State $Preprepared \leftarrow Preprepared \cup \{r\}$
    \EndForAll
    \EndUpon

    \Function{IsLeader}{i, e, n}
    \Comment{Returns $True$ if $i$ is leader of $n$ in epoch $e$}
    \If{$i \In EpochConfig[e].Leaders$}
    \State \textbf{return} $(EpochConfig[e].First + n \equiv i \mod |EpochConfig[e].Leaders|$
    \Else
    \State \textbf{return} False
    \EndIf
    \EndFunction

    \Function{GetBucket}{hash}
    \State Returns the bucket containing requests $r$ such that $H(r.t||r.c.pk)=hash$.
    \EndFunction

    \Function{ActiveBucketAssignment}{e, EpochConfig}
    \State Evenly partition $Buckets \setminus EpochConfig.PrimaryBuckets$
    \State among $EpochConfig.Leaders \setminus \{i: IsPrimary(i, e)\}$
    \EndFunction

    \Function{ActiveBuckets}{i,e,n}
    \State Returns the union of buckets which are active for node $i$ in epoch $e$ and sequence number $n$
    \EndFunction

    \State \Comment{ActiveRotation returns true if all the sequence numbers from the previous rotation are delivered}
    \Function{ActiveRotation}{e,n}
    \State $period\leftarrow RotationPeriod$
    \State $rotation\leftarrow \ceil{n-(EpochConfig[e-1].Last)/period}$
    \State \textbf{return} $delivered[EpochConfig[e-1].Last + (rotation-1)*period]$
    \EndFunction

    \Function{ShrinkingLeaderset}{e, i}
    \State $e_{last} \leftarrow$ the last epoch for which $i$ has the configuration
    \State $Leaders \leftarrow EpochConfig[e_{last}].Leaders \cup \{i\}$
    \State $RemovedLeaders \leftarrow$ a random set of $min((e'-e),1)$ nodes from $EpochConfig[e_{last}].Leaders\setminus\{i\}$
    \State \textbf{return} $Leaders \setminus RemovedLeaders$
    \EndFunction

  \end{algorithmic}
  \caption{\protocol (continues)}
  \label{algorithm:mir4}
\end{algorithm}


\section{\protocol Correctness}
\label{sec:correctness}

In this section we outline \protocol correctness proof, proving TOB properties as defined in Section~\ref{sec:model}. We pay particular attention to Liveness (Appendix~\ref{sec:livenessproof}), as we believe it is the least obvious out of four \protocol TOB properties to a reader knowledgeable in PBFT. Where relevant, we also consider the impact of the signature verification sharding (SVS) optimization (Sec.~\ref{sec:sharding}).

We define a function for assigning request sequence numbers to individual requests, output by \textsc{commit} indication of TOB as follows.  For batch with sequence number $sn\ge 0$ committed by a correct node $i$, let $S_{sn}$ be the total number of requests in that batch (possibly $0$). Let $r$ be the $k^{th}$ request that a correct node commits in a batch with sequence number $sn\ge 0$. Then, $i$ outputs \textsc{commit}$(k+\sum_{j=0}^{sn-1}S_{j},r)$, i.e., node $i$ commits $r$ with sequence number  $k+\sum_{j=0}^{sn-1}S_{j}$.

\subsection{Validity (P1)}

\noindent (P1) \textbf{Validity:} If a correct node commits $r$ then some client broadcasted $r$.\\

\emph{Proof (no SVS).}  We first show Validity holds, without signature sharding.
If a correct node commits $r$ then at least $n-f$ correct nodes sent \commitmsg for a batch which contains $r$, which includes at least $n-2f\ge f+1$ correct nodes (Sec.~\ref{ssec:PBFT}).
Similarly, if a correct node sends \commitmsg  for a batch which contains $r$, then at least $n-2f\ge f+1$ correct nodes sent \preparemsg after pre-preparing a batch which contains $r$ (Sec.~\ref{sec:stable}).
This implies at least $f+1$ correct nodes executed Condition (8) in Sec.~\ref{sec:stable} and verified client's signature on $r$ as correct. Validity follows. \qed\\

\emph{Proof (with SVS).} With signature verification sharding (Sec.~\ref{sec:sharding}),
clients' signatures are verified by at least $\faults+1$ \emph{verifier} nodes belonging to the leader set, out of which at least one is correct.
As no correct node sends \commitmsg before receiving \preparemsg from all $\faults+1$ \emph{verifier} nodes (Sec.~\ref{sec:sharding}),
no request which was not correctly signed by a client can be committed  --- Validity follows. \qed

\subsection{Agreement (Total Order) (P2)}

\noindent (P2)  \textbf{Agreement:} If two correct nodes commit requests $r$ and $r'$ with sequence number $sn$, then $r=r'$.

\begin{proof} Assume by contradiction that there are two correct nodes $i$ and $j$ which commit,
respectively, $r$ and $r'$ with the same sequence number $sn$, such that $r\neq r'$.
Without loss of of generality, assume $i$ commits $r$ with $sn$ before $j$ commits $r'$ with $sn$
(according to a global clock not accessible to nodes),
and let $i$ (resp., $j$) be the first correct node that commits $r$ (resp., $r'$) with $sn$.

By the way we compute request sequence numbers,
the fact that $i$ and $j$ commit different requests at the same (request) sequence number
implies they commit different batches with same (batch) sequence number.
Denote these different batches by $B$ and $B'$, respectively, and the batch sequence number by $bsn$.

We distinguish several cases depending on the mechanism by which $i$ (resp.,  $j$) commits $B$ (resp $B'$).
Namely, in \protocol, $i$ can commit $req$ contained in batch $B$ in one of the following ways (commit possibilities (CP)):
\begin{enumerate}[label=CP\arabic*]
\item by receiving a quorum ($n-\faults$) of matching \commitmsg messages in the common case of an epoch for a fresh batch $B$
(a fresh batch here is a batch for which a leader sends a \prepreparemsg message --- see Sec..~\ref{ssec:PBFT} and Sec.~\ref{sec:stable})
\item by receiving a quorum ($n-\faults$) of matching \commitmsg messages following an ungracious epoch change,
where \textsc{NEW-EPOCH} message contains $B$ (Sec.~\ref{sec:ecsubprotocol}),
\item via the state transfer subprotocol (Sec.~\ref{sec:statetransfer}).
\end{enumerate}

As $i$ is the first correct node to commit request $r$ with $sn$ (and therefore batch $B$ with $bsn$),
it is straightforward to see that $i$ cannot commit $B$ via state transfer (CP3).
Hence, $i$ commits $B$ by CP1 or CP2.

We now distinguish several cases depending on CP by which $j$ commits $B'$.
In case $j$ commits $B'$ by CP1 or CP2, since \protocol common case follows the PBFT common case,
and \protocol ungracious epoch change follows PBFT view change
--- a violation of Agreement in \protocol implies a violation of Total Order in PBFT.
A contradiction.

The last possibility is that $j$ commits $B'$ by CP3 (state transfer).
Since $j$ is the first correct node to commit $B'$ with $bsn$,
$j$ commits $B'$ after a state transfer from a Byzantine node.
However, since (1) \protocol \textsc{checkpoint} messages (see Sec.~\ref{sec:checkpoint})
which are the basis for stable checkpoints and state transfer (Sec.~\ref{sec:statetransfer}) are signed,
and (2) stable checkpoints contain signatures of $2f+1$ nodes including at least $f+1$ correct nodes,
$j$ is not the first correct node to commit $B'$ with $bsn$.
A contradiction.
\end{proof}

\subsection{No Duplication (P3)}

\noindent (P3) \textbf{No duplication:} If a correct node commits request $r$ with sequence numbers $sn$ and  $sn'$, then  $sn = sn'$.\\
\begin{proof}
No-duplication stems from the way \protocol prevents duplicate pre-prepares (condition (5) in accepting \prepreparemsg, as detailed in Sec.~\ref{sec:stable}).

Assume by contradiction that two identical requests $req$ and $req'$ exist
such that $req=req'$ and correct node $j$ commits $req$ (resp., $req'$) with sequence number $sn$ (resp,. $sn'$) such that $sn\neq sn'$.

Then, we distinguish the following exhaustive cases:
\begin{itemize}
	\item \emph{(i)} $req$ and $req'$ are both delivered in the same batch, and
	\item \emph{(ii)} $req$ and $req'$ are delivered in different batches.
\end{itemize}

In case \emph{(i)}, assume without loss of generality that $req$ precedes $req'$ in the same batch.
Then, by condition (5) for validating a \prepreparemsg (Sec.~\ref{sec:stable}),
no correct node preprepares $req'$ and all correct nodes discard the batch which hence cannot be committed, a contradiction.

In case \emph{(ii)} denote the batch which contains $req$ by $B$ and the batch which contains $req'$ by $B'$.
Denote the set of at least $n-\faults\ge 2\faults+1$ nodes that prepare batch $B$ by $S$ and the set of at least $n-\faults\ge 2\faults+1$ that prepare batch $B'$ by $S'$.
Sets $S$ and $S'$ intersect in at least $n-2\faults\ge \faults+1$ nodes out of which at least one is correct, say node $i$.
Assume without loss of generality that $i$ preprepares $B$ before $B'$.
Then, the following argument holds irrespectivelly of whether $i$ commits batch $B$ before $B'$, or vice versa:
as access to datastructure responsible for implementing condition (5) is synchronized with per-bucket locks (Sec.~\ref{sec:implementation})
and since $req$ and $req'$ both belong to the same bucket as their hashes are identical,
$i$ cannot preprepare $req'$ and hence cannot prepare batch $B'$ which cannot be delivered --- a contradiction. \qed \\

It is easy to see that signature verification sharding optimization does not impact the No-Duplication property.
\end{proof}

\subsection{Totality (P4)}
\label{sec:totalityproof}

\begin{lemma}\label{lemma:totality-seq-nos}
If a correct node commits a sequence number $sn$, then every correct node eventually commits $sn$.
\end{lemma}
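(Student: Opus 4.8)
The plan is to reduce Totality of sequence numbers to two facts: (a) the batch committed at $sn$ is unique and is preserved across all later epochs, and (b) every correct node eventually commits \emph{some} batch at $sn$. For (a): if a correct node $i$ commits a batch $B$ at sequence number $sn$ in some epoch $e$, then $i$ collected $2f+1$ matching \commitmsg messages, so at least $f+1$ correct nodes prepared $B$ at $(e,sn)$ and hold a corresponding prepared certificate. The proof of Agreement (P2) already establishes that the PBFT-style \textsc{NEW-EPOCH} construction carries such a prepared certificate into every later epoch, and that signed, $2f+1$-strong stable checkpoints prevent state transfer from injecting a conflicting batch; hence no correct node ever commits a batch $B'\neq B$ at $sn$, and it suffices to prove (b).

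Fix a correct node $j$, and first consider the case where $j$ still treats $sn$ as active, i.e., $sn$ lies above $j$'s latest stable checkpoint. Here I would invoke that \protocol's common case and ungracious epoch change coincide message-for-message with PBFT's common case and view change (Sec.~\ref{sec:stable}, Sec.~\ref{sec:ecsubprotocol}), so PBFT's liveness-of-commit argument applies verbatim: since $f+1$ correct nodes hold a prepared certificate for $sn$ (in epoch $e$ or in whatever later epoch is current, the certificate being carried forward by \textsc{NEW-EPOCH}), and these nodes retransmit their \prepreparemsg/\preparemsg/\commitmsg messages until a stable checkpoint covers $sn$, node $j$ eventually receives a \prepreparemsg for $B$ together with $2f+1$ matching \preparemsg and $2f+1$ matching \commitmsg messages, and commits $B$ at $sn$. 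I would also check that signature verification sharding does not interfere: it only delays emitting \commitmsg until \preparemsg arrives from all $f+1$ verifiers, all of which are leaders, and it is disabled whenever the leader set shrinks (i.e., on any ungracious epoch change), so it never blocks this chain.

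The complementary case is that some correct node advances a stable checkpoint past $sn$ while $j$ is still lagging (if $j$'s \emph{own} stable checkpoint ever passes $sn$, then $j$ has already committed everything up to it, including $sn$). Then $j$ receives common-case messages from $f+1$ nodes carrying higher epoch or sequence numbers and, as in Sec.~\ref{sec:statetransfer}, enters state transfer, broadcasting $HELLO$ and obtaining the latest stable checkpoint together with every committed batch up to it. The transferred state is correct because the checkpoint carries $2f+1$ \checkpointmsg signatures, at least $f+1$ from correct nodes, and each transferred batch is accepted only against $f+1$ matching hashes — ruling out a Byzantine node supplying a wrong batch at $sn$; this is exactly the argument used for commit possibility CP3 in the proof of P2. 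Hence $j$ commits $B$ at $sn$.

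The main obstacle I expect is phrasing case (b) without circularity: I must argue that a lagging correct node always has \emph{some} route to $sn$ — the PBFT common-case/view-change route while $sn$ is still in its active window, or the state-transfer route once $sn$ has been garbage-collected by enough nodes — and that both routes deliver exactly $B$. The first route is handled by the exact correspondence with PBFT (so that $f+1$ correct prepared nodes suffice to drive every still-active correct node to commit $B$ at $sn$); the second by reusing the signed-checkpoint reasoning of P2. A secondary point to verify is that none of the \protocol-specific mechanisms (bucket assignment and rotation, parallel leaders, client/batch watermarks, SVS) can stall these routes; this holds because each of them constrains only \emph{which} fresh batches a leader may propose, not whether an already-committed batch propagates to lagging correct nodes.
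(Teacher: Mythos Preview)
Your proof is broadly correct and shares the paper's two-case skeleton (state transfer when a stable checkpoint covers $sn$, versus the case where it does not), but it diverges from the paper in two notable ways.

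First, your part~(a) on batch uniqueness is unnecessary for this lemma: the statement only asserts that every correct node eventually commits \emph{some} batch at $sn$; uniqueness of that batch is the content of P2 and is invoked separately in the proof of P4 that follows. You are effectively proving the lemma and the subsequent Totality corollary at once.

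Second, and more substantively, your treatment of the ``no stable checkpoint'' case is operational --- you trace $f{+}1$ correct prepared nodes retransmitting \prepreparemsg/\preparemsg/\commitmsg and carrying certificates through \textsc{NEW-EPOCH} until $j$ accumulates a commit quorum. The paper instead argues by contradiction at a higher level: if $sn$ is never covered by a stable checkpoint, the low batch watermark never advances past $sn$, so every correct node eventually exhausts its watermark window (even under zero client load, since correct leaders propose empty batches), triggering infinitely many ungracious epoch changes with no intervening commits --- which directly contradicts PBFT liveness. This buys the paper a shorter proof that does not depend on retransmission details or on reasoning about which epoch $j$ is in when the messages arrive; your route is valid but carries more moving parts that must be kept consistent across epoch boundaries. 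Your checkpoint/state-transfer case, on the other hand, matches the paper's argument essentially verbatim.
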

\begin{proof}
  Assume, by contradiction, that a correct node $j$ never commits any request with $sn$.
  We distinguish 2 cases:
  \begin{enumerate}
    \item $sn$ becomes a part of a stable checkpoint of a correct node $k$. In this case, after GST,
    $j$ eventually enters the state transfer protocol similar to that of PBFT,
    transfers the missing batches from $k$, while getting batch hash confirmations from $f$ additional nodes that signed the stable checkpoint $sn$ belongs to (state transfer is outlined in Sec.~\ref{sec:statetransfer}).
    A contradiction.

    \item $sn$ never becomes a part of a stable checkpoint.
    Then, the start of the watermark window will never advance past $sn$, and all correct nodes,
    at latest when exhausting the current watermark window, will start infinitely many ungracious epoch changes
    without any of them committing any requests.
    Correct nodes will always eventually exhaust the watermark window, since even in the absence of new client
    requests, correct leaders periodically propose empty requests (see Section \ref{sec:pseudocode}).
    Infinitely many ungracious epoch changes without committing any requests, however,
    is a contradiction to PBFT liveness.
  \end{enumerate}
\end{proof}

\noindent (P4) \textbf{Totality:} If a correct node commits request $r$, then every correct node eventually commits $r$.\\
\begin{proof}
  Let $i$ be a correct node that commits $r$ with sequence number $sn$.
   Then, by (P2) Agreement, no correct node can commit another $r'\neq r$ with sequence number $sn$. Therefore, all other correct nodes will either commit $r$ with $sn$ or never commit $sn$.
   The latter is a contradiction to Lemma~\ref{lemma:totality-seq-nos}, since $i$ committed some request with $sn$, all correct nodes commit some request with $sn$.
   Totality follows.
\end{proof}

\subsection{Liveness (P5)}
\label{sec:livenessproof}
We first prove a number of auxiliary Lemmas and then prove liveness.
Line numbers refer to \protocol pseudocode (Sec.~\ref{sec:pseudocode}).

\begin{lemma}\label{lemma:laststable}
	In an execution with a finite number of epochs, the last epoch $e_{last}$ is a stable epoch.
\end{lemma}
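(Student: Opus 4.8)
The plan is to prove this by contradiction, assuming that the last epoch $e_{last}$ in some finite-epoch execution is \emph{ephemeral} rather than \stable, and deriving a contradiction with the finiteness assumption. The key observation is the structural definition of ephemeral epochs: an ephemeral epoch $e$ has $maxLen(e) = ephemeralEpLen < \infty$, so it spans only a finite, predefined number of sequence numbers, from $EpochConfig[e].First$ to $EpochConfig[e].Last = EpochConfig[e].First + ephemeralEpLen$ (see Sec.~\ref{sec:new-epoch} and the pseudocode in Algorithm~\ref{algorithm:mir2}). By contrast, a \stable epoch has $Last = \infty$. So the heart of the argument is: if $e_{last}$ is ephemeral, then some correct node must leave $e_{last}$, contradicting that it is the \emph{last} epoch.

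The key steps, in order, would be the following. First, I would invoke liveness-style reasoning already established for \protocol (or, more precisely, reuse the fact that after $GST$ the epoch primary is eventually correct and the common case makes progress, as in PBFT) to argue that in epoch $e_{last}$, since it is the last epoch, no correct node ever triggers an ungracious epoch change after entering $e_{last}$ --- otherwise some correct node would move to epoch $e_{last}+1$ (or at least attempt to), contradicting that $e_{last}$ is last. Second, I would argue that if $e_{last}$ is ephemeral, then either (a) the $ecTimer$ eventually fires at a correct node because fewer than $ephemeralEpLen$ batches get committed (triggering an ungracious epoch change to $e_{last}+1$ --- contradiction), or (b) all $ephemeralEpLen$ sequence numbers of $e_{last}$ eventually get committed at every correct node, which by the gracious epoch change rule (Sec.~\ref{sec:ecsubprotocol}, and the \textbf{upon} block on ``\texttt{delivered[EpochConfig[e].Last]}'' in Algorithm~\ref{algorithm:mir2}) causes correct nodes to start epoch $e_{last}+1$ --- again a contradiction. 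The dichotomy is forced precisely because $Last$ is finite for an ephemeral epoch. Third, I would conclude that $e_{last}$ cannot be ephemeral, hence it is \stable.

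For case (b) I need the fact that leaders periodically propose empty batches to drive progress even under low load (noted in Sec.~\ref{sec:pseudocode} and used already in Lemma~\ref{lemma:totality-seq-nos}), so that all sequence numbers up to $Last$ are eventually filled and committed; combined with Totality (P4), every correct node commits all of them, and then the gracious transition condition is met at every correct node. I should also be mildly careful that inconsistent epoch transitions (some nodes finishing $e_{last}-1$ graciously, some not) don't interfere --- but these are, by assumption, already resolved since $e_{last}$ is actually reached and is the final epoch, and the paper explicitly states such inconsistencies get resolved in subsequent epochs without violating liveness.

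The main obstacle I anticipate is making the ``no correct node ever leaves $e_{last}$'' step airtight: I must handle both the ungracious path (timeout-triggered) and the gracious path (batch-count-triggered) uniformly, and in particular rule out the degenerate possibility that the execution ``gets stuck'' inside an ephemeral $e_{last}$ forever without committing its full quota of sequence numbers \emph{and} without any correct node timing out. Ruling this out requires invoking the eventual synchrony assumption ($GST$) together with the fact that $ecTimer$ is armed after each commit expecting the next sequence number: if progress stalls below $Last$, some correct node's $ecTimer$ must eventually fire, forcing the ungracious transition. So the crux is the clean case split ``either enough batches commit $\Rightarrow$ gracious change, or not $\Rightarrow$ timeout $\Rightarrow$ ungracious change,'' both of which contradict $e_{last}$ being last --- and this split is only available because $maxLen(e_{last})$ is finite, i.e., because $e_{last}$ is ephemeral.
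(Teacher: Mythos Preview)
Your proposal is correct and follows the same contradiction-plus-case-split approach as the paper: assuming $e_{last}$ is ephemeral forces either a gracious transition (all finitely many sequence numbers of the epoch eventually commit) or an ungracious one (a timeout fires), each contradicting that $e_{last}$ is the last epoch. The paper's proof is far terser---just two bullets---and for the ungracious branch it phrases the contradiction as ``ungracious epoch change from $e_{last}$ never completes $\Rightarrow$ PBFT view-change liveness is violated'' rather than your ``a correct node moves to $e_{last}+1$'', but the substance is identical and your extra care about why the dichotomy is exhaustive (empty batches, $ecTimer$) is detail the paper leaves implicit.
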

\begin{proof}
	Assume by contradiction that $e_{last}$ is not stable, this implies either:
	\begin{enumerate}
		\item a gracious epoch change from $e_{last}$ at some correct node and hence, $e_{last}$ is not the last -- a contradiction; or
		\item ungracious epoch change from $e_{last}$ never completes
		--- since \protocol ungracious epoch change protocol follows PBFT view change protocol,
		this implies liveness violation in PBFT.
		A contradiction.
	\end{enumerate}
\end{proof}

\begin{lemma}\label{lemma:correct-client}
	If a correct client broadcasts request $r$, then every correct node eventually receives $r$ and puts it in the respective bucket queue.
\end{lemma}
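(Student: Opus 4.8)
# Proof Proposal for Lemma~\ref{lemma:correct-client}

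The plan is to trace the lifecycle of a correct client's request $r$ from broadcast to insertion into a bucket queue at an arbitrary correct node, showing that no step can block it indefinitely. The statement to prove is: if a correct client broadcasts request $r$, then every correct node eventually receives $r$ and puts it in the respective bucket queue. I would structure the argument around two facts: (1) the client's retransmission discipline (Sec.~\ref{sec:client}) guarantees eventual delivery of the \requestmsg at every correct node, and (2) the \textbf{upon receiving} block for \requestmsg (lines 51--58 of the \protocol pseudocode) cannot permanently discard or drop $r$ at a correct node.

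First I would argue eventual receipt. By the client-side protocol of Sec.~\ref{sec:client}, a correct client re-sends $\langle\requestmsg, r, t, c\rangle_{\sigma_c}$ to any node that has not yet committed it, until $r$ commits at at least $f+1$ nodes. Since we are assuming (for contradiction, at the level of the Liveness proof that uses this lemma) that $r$ is not committed — or more directly, since the client keeps resending to \emph{all} correct nodes that have not committed $r$ — after $GST$ the authenticated point-to-point channels deliver the message, so every correct node eventually receives the \requestmsg. (If $r$ has already committed at a correct node $p$, then by Totality (P4) every correct node eventually commits $r$, and committing a batch containing $r$ entails having previously preprepared it, which in turn entails having received it and placed it in a bucket — so the conclusion holds in that case too; I would include this as a short remark so the lemma is unconditional.)

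Next I would show that upon receiving $r$, a correct node $i$ adds $r$ to the bucket $GetBucket(H(t\|c.pk))$. Inspecting lines 51--58: the guard requires (a) $SigVer$ succeeds — which holds because a correct client signs correctly; (b) the timestamp lies within the client's current watermarks $r.c.L \le r.t < r.c.H$ — which holds because a correct client only uses in-window timestamps, and watermarks only advance as requests are delivered (Sec.~\ref{sec:checkpoint}), so a not-yet-delivered in-window request stays in window; and (c) $r \notin Preprepared$. For (c), note that if $r$ is already in $Preprepared$ at $i$, then $i$ has already preprepared a batch containing $r$, which means $i$ already inserted $r$ into its bucket queue at that earlier time — so the conclusion of the lemma already held. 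Otherwise the guard passes, and the body inserts $r$ into the bucket unless an identical entry $r'$ with $r'.c = r.c \wedge r'.t = r.t$ is already present — but an identical entry being present \emph{is} the conclusion we want. So in every case $r$ (or a copy of it) ends up in the correct bucket queue at node $i$.

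The main obstacle is handling the interplay with watermark advancement and with resurrection/removal during epoch changes: I need to make sure that a request which has been placed in a bucket but then removed (e.g., when a batch containing it commits, line for $GetBucket(\ldots).remove(r)$, or when watermarks move past it) does not leave the node in a state violating the lemma's conclusion. The resolution is that bucket removal only happens \emph{after} commit (so the stronger property of having been ordered holds), and watermark advancement past $r.t$ likewise only happens after $r$ is delivered; hence whenever $r$ is \emph{not} yet ordered at $i$, the guard on lines 51--58 remains satisfiable and the client keeps resending, so $r$ is (re-)inserted into its bucket. I would phrase the final statement carefully as "eventually $r$ is placed in its bucket queue (or has already been committed)", matching how the Liveness proof consumes it.
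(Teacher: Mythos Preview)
Your proposal is correct and rests on the same core idea as the paper---client retransmission combined with eventual synchrony after $GST$---but the paper's own proof is a two-line sketch that simply cites Section~\ref{sec:client} and the post-$GST$ synchrony assumption, without walking through the \textbf{upon receiving} guards or the edge cases you analyze. Your more careful treatment of the bucket-insertion guards and the watermark/removal interactions is sound and strictly more thorough than what the paper provides; one small inaccuracy in your side remark is that committing $r$ does not literally entail having placed $r$ in a bucket (a node can commit via state transfer, and preprepare receives $r$ inside the leader's \prepreparemsg rather than via the client), but this does not affect your main argument since the client keeps resending until commit.
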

\begin{proof}
	By assumption of a synchronous system after GST,
    and by the correct client sending and periodically re-sending request to all nodes until a request is committed (see Section~\ref{sec:client}).
\end{proof}

\begin{lemma}\label{lemma:infinity}
	In an execution with a finite number of epochs,
	any request a correct node has received which no correct node commits before the last epoch $e_{last}$,
	is eventually committed by some correct node.
\end{lemma}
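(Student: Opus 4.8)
The plan is to exploit Lemma~\ref{lemma:laststable}: the last epoch $e_{last}$ is \stable, so all $n$ nodes are leaders and $e_{last}$ never ends. I would first prove two facts about $e_{last}$: (a) every sequence number $sn$ from the first one of $e_{last}$ onward is eventually committed, with the same batch, at every correct node; and (b) whenever a correct node pre-prepares a batch $X$ at some $(e_{last},sn)$, that very batch $X$ commits at $sn$. Take the smallest $sn$ witnessing a failure of this statement. If no batch ever commits at $sn$, then by (a) applied below $sn$ every correct node has committed $sn-1$ and reset its epoch-change timer awaiting $sn$, so all correct nodes eventually time out and enter the ungracious epoch-change subprotocol; since that subprotocol follows PBFT's view change, PBFT liveness (after GST) makes it complete and produces epoch $e_{last}+1$, contradicting that $e_{last}$ is last. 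If instead some batch $Y$ commits at $sn$ but a correct node $q$ had pre-prepared $X \neq Y$ at $(e_{last},sn)$, then $q$ cannot commit $Y$ via the common case (it is awaiting $2f+1$ matching \preparemsg messages for $X$) and cannot obtain $sn$ by state transfer inside $e_{last}$ (that subprotocol is triggered only by seeing $f+1$ nodes in a strictly higher epoch, impossible here), so $q$ never commits $sn$, contradicting Lemma~\ref{lemma:totality-seq-nos}. Hence (a) and (b) hold; in particular infinitely many batches commit in $e_{last}$, and every request a correct node pre-prepares during $e_{last}$ is subsequently committed.

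Next I would bring in the bucket-rotation mechanism. Because infinitely many batches commit in a \stable, never-ending epoch, infinitely many bucket rotations occur; since each rotation shifts the bucket-to-leader assignment by one position modulo $n$, every bucket --- in particular the bucket $B$ to which $r$ hashes, $H(r.t\|r.c.pk)$ --- becomes active at every leader, hence at every correct node, infinitely often. Let $q$ be a correct node that has received $r$; since a request enters a correct node's state only after passing the signature and watermark checks, $r$ is a validly signed request that, at the start of $e_{last}$, either sits in $q$'s FIFO queue for bucket $B$ with $r \notin Preprepared$ at $q$, or has already been committed by $q$. In the latter case we are done, since by hypothesis no correct node commits $r$ before $e_{last}$, so this commit occurred in $e_{last}$. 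For the former case I would argue $r$ indeed sits in $q$'s bucket queue at the start of $e_{last}$: if $q$ had pre-prepared $r$ in an earlier epoch inside an uncommitted batch, then every epoch transition up to $e_{last}$ must have been ungracious (a gracious transition commits all of its epoch's sequence numbers, hence that batch, hence $r$), so request resurrection returned $r$ to bucket $B$'s queue and cleared it from $Preprepared$ at each such transition, including the one into $e_{last}$.

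Finally, consider the first time in $e_{last}$ at which $B$ is active at $q$ with $q$'s next sequence number within the batch watermarks. If $r$ has already been committed by then, we are done (in $e_{last}$, as required). If $r$ has already been pre-prepared by $q$ by then, then by fact (b) above it has since been committed, and again we are done. Otherwise $r$ is still a pending, non-pre-prepared request inside one of $q$'s active buckets, so $q$ eventually includes it in a batch: $q$ always proposes the oldest non-pre-prepared request from its active buckets, the buckets are FIFO queues (so requests older than $r$ are drained ahead of it --- and they \emph{are} drained, since by fact (a) every sequence number commits --- while newer requests queue behind $r$), hence within finitely many of $q$'s proposals $r$ appears in some batch $X$ proposed at a sequence number $sn'$. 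By fact (b), $X$ commits at $sn'$, so $r$ is committed by $q$. In all cases $r$ is committed by some correct node, which proves the lemma.

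The step I expect to be the main obstacle is fact (b) --- ruling out a correct node being \emph{permanently} stuck on a pre-prepared-but-never-committed batch inside $e_{last}$, since a never-ending \stable epoch offers no resurrection to rescue such a request. The argument hinges on turning any such stall into an ungracious epoch change (contradicting maximality of $e_{last}$) or into a violation of Lemma~\ref{lemma:totality-seq-nos}, and it requires care to (i) combine ``some correct node stalls'' with Lemma~\ref{lemma:totality-seq-nos} so that it upgrades to ``all correct nodes stall'', which is what actually drives the epoch change to completion, and (ii) confirm that the state-transfer trigger cannot fire inside $e_{last}$. A secondary subtlety is the FIFO-draining step: I must ensure $q$ is not perpetually forced to skip $r$ in favour of other requests, which holds because $r$'s position in its FIFO bucket queue only improves over time once $r$ is enqueued and, by fact (a), requests ahead of it keep being consumed.
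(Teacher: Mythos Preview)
Your approach is the same as the paper's at its core --- exploit that $e_{last}$ is stable (Lemma~\ref{lemma:laststable}), let a correct leader holding $r$ eventually get $r$'s bucket by rotation, propose $r$, and argue that any failure to commit forces an ungracious epoch change, contradicting maximality of $e_{last}$. The paper's proof is much terser: it simply says $i$ proposes $r$ at some $sn$, argues that if any correct $j$ never commits $sn$ then $j$ (being a leader) stops proposing, which cascades into timeouts at all correct nodes and an epoch change; then Agreement pins the committed batch at $sn$ to $i$'s batch containing $r$. Your facts (a) and (b), the resurrection bookkeeping, and the FIFO-draining argument are all structure you add on top of this; they are not present in the paper, which leaves those points implicit.

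One genuine concern is your fact~(b). As stated --- ``whenever a correct node pre-prepares $X$ at $(e_{last},sn)$, that very $X$ commits'' --- it must cover the case where $q$ is \emph{not} the leader of $sn$ and a Byzantine leader equivocates, sending $X$ to $q$ alone and $Y$ to the other $2f$ correct nodes; then $Y$ can commit while $q$ sits on $X$. Your rebuttal is that $q$ then never commits $sn$ because state transfer is triggered only by higher-epoch messages. But this reading of Section~\ref{sec:statetransfer} is exactly what the paper's own proof of Lemma~\ref{lemma:totality-seq-nos} (Case~1) \emph{contradicts}: there, a lagging node catches up via state transfer once $sn$ lies behind a stable checkpoint, within the same epoch. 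You are thus simultaneously invoking Lemma~\ref{lemma:totality-seq-nos} as a black box and denying its mechanism. If checkpoint-triggered state transfer works, fact~(b) fails in the equivocation scenario, and your ``if $r$ has already been pre-prepared by $q$'' branch no longer closes: $q$ may commit $Y$ at $sn$ by state transfer while $r$ remains stuck in $q$'s $Preprepared$ set. The fix is to weaken (b) to the case you actually need --- \emph{if the correct leader of $sn$ proposes $X$, then $X$ commits at $sn$} --- which follows from (a) alone (only one \prepreparemsg is ever sent for $sn$ in $e_{last}$, so whatever commits must be $X$). For the residual ``$r$ already pre-prepared by $q$ from a Byzantine \prepreparemsg'' case, argue instead that some \emph{other} correct node without $r\in Preprepared$ will later hold the bucket and propose $r$; the paper sidesteps this entirely by not raising the case.
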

\begin{proof}

	Assume that a correct node $i$ has received some request $r$ that is never committed by some correct node $j$.
	Since (by  Lemma~\ref{lemma:laststable}) $e_{last}$ is an (infinite) stable epoch, 
	$i$ is the leader in $e_{last}$ and $i$ will propose infinitely often
	exhausting all available requests until $r$ is the oldest uncommitted request.
	Next time $i$ proposes a batch from $r$'s bucket (this eventually occurs due to bucket rotation in a stable epoch),
	if no other node proposed a batch with $r$, $i$ includes $r$ in its proposed batch with some sequence number $sn$.

	Assume some node $j$, also a leader in $e_{last}$ (since all nodes are leaders in a stable epoch),
	never commits any batch with sequence number $sn$.
	Then, epoch change timeout fires at $j$ and $j$ does not propose more batches.
	Eventually, epoch change timeout fires at all other correct nodes causing an epoch change.
	A contradiction (that $j$ never commits any batch with sequence number $sn$).

	Therefore, $j$ commits a batch with sequence number $sn$. By (P2) Agreement the batch $j$ commits is the same as the batch $i$ commits and contains $req$. A contradiction (that $j$ never commits $req$).
\end{proof}

\begin{lemma}[Liveness with Finitely Many Epochs]\label{lemma:unicorn}
	In an execution with a finite number of epochs,
	if a correct client broadcasts request $r$,
	then some correct node eventually commits $r$.
\end{lemma}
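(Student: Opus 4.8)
The plan is to obtain this lemma almost immediately by combining the three auxiliary lemmas already established for executions with finitely many epochs. Let $r$ be a request broadcast by a correct client. First I would invoke Lemma~\ref{lemma:correct-client} to conclude that every correct node eventually receives $r$ and places it into the corresponding bucket queue; in particular, at least one correct node has received $r$. This is the only place where the assumption that the client is correct (and hence keeps re-sending $r$ to all nodes after $GST$ until it commits) is used.

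The second step is a two-way case split on whether $r$ is committed before the last epoch $e_{last}$ (which exists and is a stable epoch by Lemma~\ref{lemma:laststable}). If some correct node commits $r$ at some point before $e_{last}$, we are already done. Otherwise, no correct node commits $r$ before $e_{last}$; together with the fact from the first step that a correct node has received $r$, this is exactly the hypothesis of Lemma~\ref{lemma:infinity}, whose conclusion yields that some correct node eventually commits $r$. In either case some correct node eventually commits $r$, which is the claim, so the proof reduces to this short case analysis.

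The one point I would be careful to spell out is the interface between Lemma~\ref{lemma:correct-client} and Lemma~\ref{lemma:infinity}: Lemma~\ref{lemma:infinity} refers to a request that ``a correct node has received'', and I should make sure this really means $r$ is available for proposal, i.e.\ sitting in a bucket queue rather than merely trapped in the $Preprepared$ set from some earlier epoch where it was pre-prepared but not committed. This is taken care of by request resurrection at ungracious epoch changes (Sec.~\ref{sec:ecsubprotocol}), which returns such requests to their bucket queues and clears them from $Preprepared$, together with the client's persistent retransmission guaranteed by Lemma~\ref{lemma:correct-client}. Since these mechanisms were already accounted for in the proofs of the auxiliary lemmas, I anticipate no genuine obstacle here; the substantive work --- reducing non-completing epoch changes to PBFT liveness, exploiting bucket rotation in the stable epoch, and the leader-suspicion/straggler arguments --- has already been discharged in Lemmas~\ref{lemma:laststable}, \ref{lemma:correct-client}, and~\ref{lemma:infinity}.
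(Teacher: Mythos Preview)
Your proposal is correct and follows essentially the same approach as the paper: invoke Lemma~\ref{lemma:laststable} to fix the last stable epoch, case-split on whether $r$ is committed before $e_{last}$, and in the non-trivial branch combine Lemma~\ref{lemma:correct-client} with Lemma~\ref{lemma:infinity}. Your additional remarks about request resurrection and the interface between the auxiliary lemmas are extra care the paper does not spell out, but they do not change the argument.
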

\begin{proof}
	By Lemma~\ref{lemma:laststable}, $e_{last}$ is a stable epoch.
	There are 2 exhaustive possibilities.
	\begin{enumerate}

    \item Some correct node commits $r$ in epoch preceding $e_{last}$.
    \item $r$ is not committed by any correct node before $e_{last}$.
		By Lemma~\ref{lemma:correct-client} all correct nodes eventually receive request $r$ and by Lemma~\ref{lemma:infinity} some correct node commits $r$.
		The lemma follows.
	\end{enumerate}
\end{proof}

\begin{definition} [Preferred request]
  \label{def:preferred-request}
	Request $r$ is called \emph{preferred request} in epoch $e$,
	if $r$ is the oldest request pending in buckets of primary of epoch $e$,
	before the primary proposes its first batch in $e$.
\end{definition}

\begin{lemma}\label{lemma:synchrony}
  If, after GST, all correct nodes start executing the common-case protocol in a non-stable epoch $e$ before time $t$,
  then there exists a $\Delta$, such that if a correct leader proposes a request $r$ before time $t$
  and no correct node enters a view change before $t + \Delta$,
  every correct node commits $r$.
\end{lemma}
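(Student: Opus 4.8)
The plan is to run the standard ``synchrony after GST implies common-case progress'' argument of PBFT through \protocol's common case, paying attention to the extra \protocol-specific checks (5)--(8) performed when accepting a \prepreparemsg. After GST fix a bound $\delta$ on the message delay between correct nodes, and set $\Delta = 3\delta$ plus the (bounded) local processing time for one \prepreparemsg/\preparemsg/\commitmsg round. Let $L$ be the correct leader that proposes $r$ before $t$, let $B$ be the batch it sends a \prepreparemsg for with $r\in B$, and let $sn$ be the sequence number $L$ assigns to $B$.

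First I would record two structural facts that hold precisely because $e$ is \emph{non-stable} (bucket assignments do not rotate inside $e$) and $L$ is correct: (a) every request of $B$ lies in one of $L$'s active buckets, which are disjoint from the active buckets of every other leader of $e$, and $L$ puts each request in at most one batch --- hence $B$ is the only batch whose \prepreparemsg a correct node can accept (condition (7)) that contains any request of $B$; and (b) since only $L$ leads $sn$ in $e$ (condition (3)) and $L$ is correct, $B$ is the only batch for $(e,sn)$ a correct node can ever accept. Next I would dispatch the trivial case via Totality: if at any point a stable checkpoint of a correct node covers $sn$, then $f{+}1$ correct nodes committed $B$, so by Lemma~\ref{lemma:totality-seq-nos} every correct node commits $B\ni r$; likewise a gracious epoch change out of $e$ by a correct node means all of $e$'s sequence numbers, including $sn$, committed. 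So we may assume throughout $[t,t+\Delta]$ that no correct node commits $B$, that $sn$ stays inside the batch-watermark window of every correct node (the low watermark advances only at stable checkpoints), and that all correct nodes remain in epoch $e$ (no view change by hypothesis, no gracious change by the above).

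The core is then a three-round chase. By time $t+\delta$ every correct node receives $L$'s \prepreparemsg and \emph{accepts} it: the PBFT checks (1)--(4) hold because all correct nodes share epoch $e$, $L\in\eplead(e)$ leads $sn$, $sn$ is within watermarks, and by fact (b) no correct node has a conflicting preprepare for $(e,sn)$; the \protocol-specific checks hold because (5) by fact (a) no correct node has preprepared a request of $B$ in any other batch, (6) $L$ is correct so every request of $B$ carries an in-window client timestamp and client watermarks are epoch-aligned, hence identical at all correct nodes in $e$, (7) holds by fact (a), and (8) $L$ verified every client signature before proposing. Hence by $\approx t+\delta$ all $\ge n-f\ge 2f+1$ correct nodes preprepare $B$ and send a matching \preparemsg; by $\approx t+2\delta$ each correct node collects $2f{+}1$ matching \preparemsg messages and, having itself preprepared $B$, sends a matching \commitmsg; by $\approx t+3\delta$ each collects $2f{+}1$ matching \commitmsg messages and commits $B$, hence commits $r$. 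SVS needs no separate treatment, since $e$ is non-stable and SVS is disabled there. The main obstacle is exactly the claim that \emph{every} correct node accepts $L$'s \prepreparemsg --- in particular condition (5) and watermark membership --- which is why the lemma restricts to a non-stable epoch (yielding the fixed-bucket facts (a)--(b), together with the consistency of the $Preprepared$ sets across correct nodes that enter $e$ through the same \textsc{NEW-EPOCH}) and why the checkpoint/gracious-change case split is needed to pin $sn$ inside the window; everything downstream is the routine PBFT chase.
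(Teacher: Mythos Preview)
Your proposal is correct and follows essentially the same approach as the paper: fix $\Delta \approx 3\delta$ (plus bounded local processing) and run the three-round PBFT chase (\prepreparemsg, \preparemsg, \commitmsg) through the correct nodes, observing that SVS is disabled in a non-stable epoch. The paper's own proof is far terser---it simply asserts that all correct nodes receive $2f{+}1$ \commitmsg{}s by $t+3\delta$ without spelling out why each correct node accepts the \prepreparemsg (conditions (1)--(8)) or why $sn$ stays within every node's watermark window; your facts (a)--(b) and the checkpoint/gracious-change case split supply rigor that the paper omits.
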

\begin{proof}
  Let $\delta$ be the upper bound on the message delay after GST and let a correct leader propose a request $r$ before $t$.
  By the common-case algorithm without SVS (there is no SVS in a non-stable epoch $e$) all correct nodes receive at least $2f + 1$ \commitmsg messages for $r$
  before $t + 3\delta$ (time needed to transmit \prepreparemsg, \preparemsg and \commitmsg).
  All correct nodes will accept these messages, since they all enter epoch $e$ by time $t$.
  As every correct node receives at least $2f + 1$ \commitmsg{}s, every correct node commits $r$ by $t + 3\delta + rd$.
  Therefore, $\Delta = 3\delta + rd$.
\end{proof}

\begin{lemma}\label{lemma:ungracious-config}
    If all correct nodes perform an ungracious epoch change from $e$ to $e+1$ and the primary of $e+1$ is correct,
    then all correct nodes reliably deliver the epoch configuration of $e+1$.
\end{lemma}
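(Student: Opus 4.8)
The plan is to reduce the claim to the totality (validity) guarantee of the reliable broadcast primitive that \protocol uses to disseminate epoch configurations (Bracha's 3-phase algorithm, Sec.~\ref{sec:new-epoch}). Concretely, it suffices to show that the correct primary of $e+1$ eventually invokes $ReliableBroadcast.Broadcast$ on the epoch configuration of $e+1$ (pseudocode line~\ref{ln:rb-ungracious}); reliable broadcast validity then guarantees that every correct node eventually delivers it (line~\ref{ln:rb-deliver}).

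The first step is to argue that the correct primary of $e+1$ eventually sends a \textsc{NEW-EPOCH} message for $e+1$. By hypothesis every correct node performs an ungracious epoch change from $e$ to $e+1$, so each of the at least $n-\faults\ge 2\faults+1$ correct nodes sends an \textsc{EPOCH-CHANGE} message for $e+1$ to the primary of $e+1$ (Sec.~\ref{sec:ecsubprotocol}). Since point-to-point channels are reliable, the correct primary eventually receives at least $2\faults+1$ such messages. \protocol's ungracious epoch change follows PBFT's view change, so --- exactly as a PBFT primary assembles \textsc{NEW-VIEW} from $2\faults+1$ \textsc{VIEW-CHANGE} messages --- the correct primary assembles and sends a well-formed \textsc{NEW-EPOCH} message, at which point (line~\ref{ln:rb-ungracious}) it reliably broadcasts the configuration of $e+1$.

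Finally, since the sender of that reliable broadcast is correct, the validity property of Bracha's reliable broadcast ensures every correct node eventually delivers the configuration of $e+1$, which is precisely ``reliably delivers the epoch configuration of $e+1$''; the lemma follows. The only mildly delicate point is the first step --- ensuring the correct primary is actually driven to send \textsc{NEW-EPOCH} --- which rests on the hypothesis that \emph{all} correct nodes enter the ungracious epoch change (so the primary collects $\ge 2\faults+1$ \textsc{EPOCH-CHANGE} messages) together with the inherited PBFT view-change machinery; everything else is a direct appeal to the reliable broadcast specification.
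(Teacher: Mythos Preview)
Your proposal is correct and follows essentially the same approach as the paper: argue that the correct primary of $e+1$ participates in the ungracious epoch change and therefore sends a \textsc{NEW-EPOCH} message (triggering the reliable broadcast of the configuration at line~\ref{ln:rb-ungracious}), then invoke the validity property of Bracha's reliable broadcast. The paper's proof additionally spells out that all correct nodes enter epoch $e+1$ (so the $le = e$ guard at line~\ref{ln:rb-deliver} is satisfied), but this is already implicit in your hypothesis that all correct nodes perform the ungracious epoch change.
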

\begin{proof}
    Let $p$ be the correct primary of $e+1$.
    As $p$ is correct, by the premise, $p$ participated in the ungracious epoch change subprotocol.
    Since the PBFT view change protocol is part of the ungracious view change, $p$ sends a \textsc{NEW-EPOCH} message to all nodes.
    By the algorithm (line \ref{ln:rb-ungracious}), $p$ reliably broadcasts the epoch configuration of $e+1$.
    Since all correct nodes participate in the ungracious view change, all correct nodes enter epoch $e+1$.
    By the properties of reliable broadcast,
    all correct nodes deliver the epoch configuration in $e+1$ (line \ref{ln:rb-deliver}).
\end{proof}

\begin{lemma}\label{lemma:ungracious-commit}
  There exists a time after GST, such that if each correct node reliably delivers the configuration of epoch $e$ after entering $e$ through an ungracious epoch change,
  and the primary of $e$ is correct, then all correct nodes commit $e$'s preferred request $r$.
\end{lemma}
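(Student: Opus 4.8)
The plan is to show that, once enough time has elapsed for epoch-change timeouts to grow large relative to the (post-GST) message delay, a \emph{correct} primary $p$ of an epoch $e$ entered via ungracious epoch change can drive $e$'s preferred request $r$ to commitment at every correct node before any correct node's epoch-change timer can fire. First I would establish that all correct nodes start executing the common case of $e$ within a bounded delay after GST: since every correct node enters $e$ through the ungracious epoch change, each sends an \textsc{epoch-change} message for $e$ to $p$; the correct $p$ thus collects $\ge n-f$ of them, assembles and sends the \textsc{new-epoch} message, and (by Lemma~\ref{lemma:ungracious-config}, matching the lemma's premise) reliably broadcasts $e$'s configuration, which every correct node delivers. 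After GST, with message delay bounded by some $\delta$, every correct node processes \textsc{new-epoch}, delivers $e$'s configuration, and enters the common case of $e$ within $O(\delta)$; fix a time $t$ by which this has happened. Since $e$ is entered through an ungracious epoch change, it is a non-stable epoch with SVS disabled (Sec.~\ref{sec:sharding}), so the hypotheses of Lemma~\ref{lemma:synchrony} are met.

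Next I would argue that $p$ proposes a batch containing $r$ within a bounded delay of $t$. Before proposing its first batch in $e$, the correct $p$ chooses its preferred buckets to be a set of buckets containing its oldest pending request, so $r$ --- that oldest pending request, i.e.\ $e$'s preferred request by Definition~\ref{def:preferred-request} (if none exists the claim is vacuous) --- lies in one of $p$'s active buckets. If $r$ is already carried by a batch reflected in the \textsc{new-epoch} message, $p$ re-proposes that batch; otherwise $r$ is not marked pre-prepared at $p$, because uncommitted pre-prepared requests are \emph{resurrected} (returned to their bucket queues and unmarked) during the ungracious epoch change (Sec.~\ref{sec:ecsubprotocol}), so condition (5) for accepting a \prepreparemsg (Sec.~\ref{sec:stable}) does not block $r$ and $p$ includes it in its first fresh batch, which $p$ leads (sequence number $EpochConfig[e].First$). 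Either way $p$ sends a \prepreparemsg for a batch containing $r$ within some bounded $c$ after $t$; and since $r$ sits in the primary's preferred buckets, which are active for no other leader of $e$, condition (7) makes every correct node reject any batch containing $r$ proposed by any other (possibly faulty) leader --- so no spurious batch can mark $r$ as pre-prepared at correct nodes and block $p$'s legitimate batch, and this batch from $p$ is the only one of $e$ that can carry $r$.

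Then, by Lemma~\ref{lemma:synchrony}, there is a $\Delta$ such that, as all correct nodes are in the common case of $e$ by $t$ and the correct leader $p$ proposes a batch containing $r$ by $t+c$, every correct node commits $r$ provided no correct node enters an epoch change before $t+c+\Delta$. To secure this delay I would invoke the PBFT-style growth of the epoch-change timeout: the timeout is enlarged (e.g.\ doubled) on each successive ungracious epoch change, so there is a time $T^\ast$ after GST such that for every epoch $e$ first entered after $T^\ast$ the timeout exceeds the finite, GST-dependent bound required to (re-)commit all batches of $e$ reflected in \textsc{new-epoch} together with the batch carrying $r$. Each of those batches originates from the correct $p$, so once all correct nodes are in the common case each is committed within $O(\delta)$, resetting every correct node's $ecTimer$ in turn; their number is bounded by the watermark-window size, so the cumulative time is bounded. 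Hence for such $e$ no correct node's $ecTimer$ fires before $r$ is committed, and all correct nodes commit $r$ --- this is the ``time after GST'' asserted by the lemma.

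The main obstacle is this last step: bounding the time until $r$ commits and matching it against the (growing) epoch-change timeout. It is delicate because $ecTimer$ is reset \emph{per committed batch}, so one must argue that every batch of $e$ with sequence number up to the one carrying $r$ --- in particular all \textsc{new-epoch} re-proposals --- is committed in time; this works precisely because those batches all come from the correct primary and their count is bounded by the watermark window. A secondary subtlety, handled above, is showing $r$ is genuinely proposable: this rests on request resurrection during the ungracious change (so condition (5) does not lose $r$) and on $r$'s membership in the primary's preferred buckets (so no other leader can pre-empt $r$ or cause correct nodes to mark it pre-prepared).
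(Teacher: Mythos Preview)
Your proposal is correct and follows essentially the same approach as the paper: all correct nodes enter the common case of the non-stable epoch $e$, the correct primary proposes $r$ by some time $t$, Lemma~\ref{lemma:synchrony} yields a bound $\Delta$ for committing $r$, and PBFT-style doubling eventually makes every correct node's epoch-change timeout exceed the gap between the first and last node entering $e$ plus $\Delta$, so no one times out before $r$ commits. You are in fact more careful than the paper on two points it glosses over --- why $r$ is genuinely proposable (via request resurrection and preferred-bucket exclusivity) and how the per-batch $ecTimer$ reset interacts with the \textsc{new-epoch} re-proposals --- but the skeleton of the argument is identical.
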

\begin{proof}
  Let $p$ be the primary of epoch $e$, and $C$ the epoch configuration $p$ broadcasts for $e$.
  By the algorithm, the leader set $C$ does not contain all nodes (and thus SVS is disabled in $e$),
  as all correct nodes entered $e$ ungraciously.
  Since (by the premise) all correct nodes deliver $C$,
  all correct nodes will start participating in the common-case agreement protocol in epoch $e$.
  Let $t_f$ and $t_l$ be the time when, respectively, the first and last correct node does so.

  By the algorithm, $p$ proposes $r$ immediately when entering epoch $e$, and thus at latest at $t_l$.
  Then, by Lemma \ref{lemma:synchrony}, there exists a $\Delta$ such that all correct nodes commit $r$
  if none of them initiates a view change before $t_l + \Delta$.

  Eventually, after finitely many ungracious epoch changes, where all correct nodes double their epoch change timeout values (as done in PBFT \cite{Castro:2002:PBF}),
  all correct nodes' epoch change timeout will be greater than $(t_l - t_f) + \Delta$.
  Then, even if a node $i$ enters epoch $e$ and immediately starts its timer at $t_f$, $i$ will not enter view change
  before $t_l + \Delta$ and thus all correct nodes will deliver $r$ in epoch $e$.
\end{proof}

\begin{lemma}\label{lemma:ungracious}
   There exists a time after GST, such that if all correct nodes perform an ungracious epoch change from $e$ to $e+1$,
    and the primary of $e+1$ is correct,
    then some correct nodes commits preferred request in $e+1$.
\end{lemma}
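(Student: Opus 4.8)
The plan is to derive Lemma~\ref{lemma:ungracious} essentially by composing the two preceding lemmas, with $e+1$ playing the role that epoch $e$ plays in Lemma~\ref{lemma:ungracious-commit}. First I would invoke Lemma~\ref{lemma:ungracious-config}: since by hypothesis all correct nodes perform an ungracious epoch change from $e$ to $e+1$ and the primary of $e+1$ is correct, Lemma~\ref{lemma:ungracious-config} gives that all correct nodes reliably deliver the epoch configuration of $e+1$.

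Next I would check that this reliable delivery meets the precondition of Lemma~\ref{lemma:ungracious-commit}, namely that each correct node reliably delivers the configuration of $e+1$ \emph{after entering $e+1$ through an ungracious epoch change}. The ``entering $e+1$ through an ungracious epoch change'' part is immediate from the hypothesis of Lemma~\ref{lemma:ungracious}. The ``after entering'' part follows from the pseudocode: the handler that installs a delivered epoch configuration (line~\ref{ln:rb-deliver}) is guarded by $le = e$, so a node acts on the configuration of $e+1$ only once it has already advanced its local epoch to $e+1$, which for an ungracious transition means after processing the \textsc{NEW-EPOCH} message.

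Then I would apply Lemma~\ref{lemma:ungracious-commit} with its epoch $e$ instantiated to $e+1$. Its hypotheses --- each correct node reliably delivers the configuration of $e+1$ after an ungracious entry, and the primary of $e+1$ is correct --- are exactly what the previous two steps establish. The conclusion yields a time after GST such that, from that point on, all correct nodes commit $(e+1)$'s preferred request $r$; in particular some correct node commits the preferred request in $e+1$, which is precisely Lemma~\ref{lemma:ungracious}. The ``there exists a time after GST'' quantifier is simply inherited from Lemma~\ref{lemma:ungracious-commit}, which already absorbs the PBFT-style epoch-change timeout doubling argument.

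The only real subtlety --- and the step I would be most careful about --- is the precise alignment of preconditions: confirming that the reliable delivery of the $(e+1)$ configuration indeed occurs \emph{after} the ungracious entry into $e+1$, since Lemma~\ref{lemma:ungracious-commit} relies on that ordering to ensure SVS is disabled in $e+1$ and that the correct primary's immediate proposal of $r$ is accepted by all correct nodes. This is handled by the $le = e$ guard on line~\ref{ln:rb-deliver} together with the fact that the correct primary reliably broadcasts the configuration as part of sending \textsc{NEW-EPOCH} (line~\ref{ln:rb-ungracious}), so no argument beyond a careful reading of the pseudocode is needed; everything else is bookkeeping.
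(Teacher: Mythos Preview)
Your proposal is correct and follows exactly the paper's approach: the paper's proof reads in its entirety ``Follows from Lemmas~\ref{lemma:ungracious-config} and~\ref{lemma:ungracious-commit}.'' You have simply spelled out the composition in more detail, including the precondition-alignment check via line~\ref{ln:rb-deliver}, which the paper leaves implicit.
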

\begin{proof}
Follows from Lemmas \ref{lemma:ungracious-config} and \ref{lemma:ungracious-commit}.
\end{proof}

\begin{lemma}\label{lemma:two-correct}
In an execution with infinitely many epochs there exist an infinite number of pairs of consecutive epochs with correct primaries.
\end{lemma}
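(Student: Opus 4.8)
The plan is to exploit the round-robin rotation of the epoch primary together with a simple counting argument on the faulty nodes. Recall that the primary of epoch $e$ is deterministically node $e \bmod n$ (function \textsc{IsPrimary} in the pseudocode, and the round-robin rule of Section~\ref{sec:overview}), and that epochs are numbered $0, 1, 2, \dots$; hence in an execution with infinitely many epochs, every residue class modulo $n$ serves as primary of infinitely many epochs.

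First I would establish the key structural fact: there exists an index $i \in [0 \ldots n-1]$ such that both node $i$ and node $(i+1) \bmod n$ are correct. Suppose not. Then, walking along the cyclic order $0, 1, \dots, n-1, 0$, every correct node is immediately followed by a faulty node, so one can injectively charge each correct node to a distinct faulty successor, yielding $n - f \le f$, i.e.\ $n \le 2f$. This contradicts the model assumption $n \ge 3f+1 > 2f$. Hence such an $i$ exists; call the pair $(i,(i{+}1)\bmod n)$ a \emph{good adjacent pair}.

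Then I would conclude as follows. For every epoch $e$ with $e \equiv i \pmod n$, the primary of $e$ is node $i$ and the primary of $e+1$ is node $(i+1)\bmod n$, both correct by the choice of $i$; thus $(e, e+1)$ is a pair of consecutive epochs with correct primaries. Since the execution has infinitely many epochs, there are infinitely many $e$ with $e \equiv i \pmod n$, and therefore infinitely many such consecutive pairs, which proves Lemma~\ref{lemma:two-correct}.

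The only nontrivial point — and the main (albeit light) obstacle — is the counting step establishing a good adjacent pair; everything else is immediate from the round-robin definition of the primary and the assumption of infinitely many epochs. For the purposes of this lemma, $n$ and the lexicographic identity-to-index bijection from Section~\ref{sec:model} are taken as fixed, so the argument applies verbatim.
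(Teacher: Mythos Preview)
Your proof is correct and follows essentially the same approach as the paper: both rely on the round-robin primary assignment together with a counting argument showing that, with at most $f$ faulty nodes among $n\ge 3f+1$, two cyclically adjacent nodes must be correct. Your version is slightly more explicit and constructive (you first isolate a fixed good adjacent pair $(i,(i{+}1)\bmod n)$ and then exhibit infinitely many epochs hitting it), whereas the paper argues directly by contradiction on the conclusion, but the underlying pigeonhole step is identical.
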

\begin{proof}
  Epoch primaries succeed each other in a round robin way across all the lexicographically ordered nodes of the system (see Sec.~\ref{sec:overview} and Sec.~\ref{sec:model}).
  Assume such pair of two consecutive epochs with correct primaries never exists after some epoch $e$.
  Then, in every full rotation across all $3f+1$ nodes after $e$,
  there exist an epoch with a faulty primary node between every two epochs with correct primaries,
  which implies the number of faulty nodes to be greater than $\faults$.
  A contradiction.
\end{proof}

\begin{lemma}\label{lemma:progress}
There exists a time after GST, such that for any pair of consecutive epochs $e$ and $e+1$ with correct primaries $i$ and $j$ (respectively),
some correct node commits at least one of the preferred requests in $e$ and $e+1$.
\end{lemma}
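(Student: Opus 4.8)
The plan is to argue by cases on whether some correct node commits the \emph{preferred request} (Definition~\ref{def:preferred-request}) of epoch $e$. Fix a time $T$ after $GST$ that exceeds all thresholds promised by Lemmas~\ref{lemma:synchrony}, \ref{lemma:ungracious-commit} and \ref{lemma:ungracious}, and large enough that, after the ungracious epoch changes preceding $T$, every correct node's epoch-change timeout has grown past the bound those lemmas need. Let $e$ and $e+1$ be consecutive epochs with correct primaries $i$ and $j$, both entered after $T$. We may assume $e$ has a preferred request; if it does not (its first batch is empty) the first disjunct is vacuous, and since by Lemma~\ref{lemma:correct-client} every request broadcast by a correct client is eventually pending at every correct node, one may take $T$ late enough relative to the request of interest that $e$ does have one --- so this is harmless for the way the lemma feeds into the liveness proof.

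If some correct node commits $e$'s preferred request, the first disjunct holds and we are done. Otherwise, I would show that \emph{every} correct node performs an ungracious epoch change from $e$ to $e+1$; then, since $j$ is correct and the pair is entered after $T$, Lemma~\ref{lemma:ungracious} gives that some correct node commits the preferred request of $e+1$, which is the second disjunct. For the ``all ungracious'' claim: the lowest sequence number of $e$, $EpochConfig[e].First$, is led only by the correct primary $i$ (Sec.~\ref{sec:leaders-sns-buckets}), and a correct leader fills its first batch with the oldest request pending in its active buckets, which for the primary are exactly its preferred buckets and hence contain $e$'s oldest pending request; so that first batch contains $e$'s preferred request. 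By assumption, then, \emph{no correct node ever commits the batch at $EpochConfig[e].First$}. This batch therefore (i) cannot be part of a gracious completion of $e$ at any correct node (which would require committing all of $e$'s batches), and (ii) cannot appear in any correct node's stable checkpoint (a stable checkpoint carries $2f+1$ \checkpointmsg signatures, hence $f+1$ from correct nodes, each of which signs only after committing every batch up to the checkpoint's sequence number). By (ii), state transfer (Sec.~\ref{sec:statetransfer}) cannot carry a correct node past this batch, so the only remaining way for a correct node to leave $e$ is an ungracious epoch change, which moves it to $e+1$ (Sec.~\ref{sec:ecsubprotocol}). And every correct node does leave $e$: by (i) none completes $e$ graciously, and since $e+1$ occurs, $e$ is not an endlessly-progressing stable epoch, so each correct node eventually stops committing in $e$ and its epoch-change timer fires. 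Hence all correct nodes transition ungraciously from $e$ to $e+1$, as needed.

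I expect the main obstacle to be precisely the ``all correct nodes go ungracious from $e$ to $e+1$'' step: one has to exclude \emph{every} alternative by which a correct node could find itself in $e+1$ without having committed $e$'s first batch --- a gracious completion of $e$, which is killed by the observation that the correct primary's first batch carries the preferred request, and, more delicately, state transfer, which is killed by the checkpoint-signature argument showing that a batch no correct node committed cannot have been garbage-collected into a stable checkpoint and hence cannot be ``jumped over''. The remaining ingredients --- invoking Lemma~\ref{lemma:ungracious} and the PBFT-inherited behavior of ungracious epoch changes, plus the book-keeping around the empty-first-batch corner case --- are routine.
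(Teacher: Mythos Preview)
Your proof is correct and lands on the same two ingredients as the paper: (i) if any correct node \emph{graciously} leaves $e$, it has committed every sequence number of $e$, hence the correct primary's first batch, hence $e$'s preferred request; and (ii) otherwise all correct nodes leave $e$ \emph{ungraciously} and Lemma~\ref{lemma:ungracious} yields the preferred request of $e+1$.

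The difference is purely in the case split. The paper splits directly on ``some correct node performs a gracious epoch change from $e$'' versus ``none does'', so case~1 immediately gives the preferred request of $e$ and case~2 is a one-line appeal to Lemma~\ref{lemma:ungracious}. You split instead on whether $e$'s preferred request is committed, which forces you to run the contrapositive (``not committed $\Rightarrow$ no gracious $\Rightarrow$ all ungracious'') inside your second case. This buys you nothing logically, but it does prompt you to spell out why ``no gracious'' actually entails ``all ungracious from $e$ to $e+1$'' --- in particular, your checkpoint-signature argument ruling out state transfer as an escape route. The paper's proof simply writes ``By Lemma~\ref{lemma:ungracious}'' in case~2 and leaves that gap implicit. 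So: same approach, your version is slightly less streamlined in its decomposition but more careful about the state-transfer corner, and your handling of the empty-first-batch case is an extra bit of hygiene the paper omits.
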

\begin{proof}
  Let $r_e$ (resp., $r_e'$) be preferred request in $e$ (resp., $e'$).
  For the epoch change from $e$ to $e+1$ there are two exhaustive possibilities.
  \begin{enumerate}
    \item \emph{At least one correct node performs a gracious epoch change from $e$ to $e+1$.}
    Recall that \protocol requires the primary of an epoch to be in the leader set (Section~\ref{sec:overview}).
    As $e$ graciously ends at at least one correct node,
    it follows from the specification of the gracious epoch change (Section~\ref{sec:details}), that at least one node commits all requests proposed in $e$.

    Since, by the protocol, the primary of $e$ is in the leader set of $e$ and the correct primary always proposes the preferred request,
    at least one correct node commits the preferred request of $e$.
    \item \emph{No node performs a gracious epoch change.} By Lemma~\ref{lemma:ungracious}.
  \end{enumerate}
\end{proof}

\noindent (P5) \textbf{Liveness:} If a correct client broadcasts request $r$, then some correct node eventually commits $r$.\\

\begin{proof}
	We distinguish two cases:
	\begin{enumerate}
		\item In an execution with a finite number of epochs, Liveness follows from Lemma~\ref{lemma:unicorn}.

        \item Consider now an execution with an infinite number of epochs. By Lemma \ref{lemma:correct-client}, every correct node eventually receives $r$.
        Let $P$ be the set of all requests that some correct node received before it received $r$.
        After $r$ has been received by all correct nodes, following from Definition \ref{def:preferred-request},
        if $r' \neq r$ is a preferred request, then $r' \in P$.
        By Lemma \ref{lemma:progress}, however, all such requests $r'$ will eventually be committed by all correct nodes.
        Therefore, by Definition \ref{def:preferred-request}, unless $r$ is committed earlier by some correct node,
        $r$ will eventually become the preferred request of all epochs with correct primaries, and will be committed by some correct node by Lemma \ref{lemma:progress}.
	\end{enumerate}
\end{proof}


\section{LTO: Optimization for large requests}
\label{sec:LTO}

When the system is network-bound (e.g., with large requests and/or on a WAN)
the maximum throughput is driven by the amount of data each leader can send in a \prepreparemsg message.
However, data, i.e., request payload, is not critical for total order,
as the nodes can establish total order on request hashes.
While in many blockchain systems all nodes need data \cite{BitcoinURL,EthereumURL},
in others \cite{AndroulakiBBCCC18}, ordering is separated from request execution
and full payload replication across ordering nodes is unnecessary.

For such systems,
\protocol optionally boosts throughput using what we call \emph{Light Total Order (LTO)} broadcast.
LTO is defined in the same way as TO broadcast (Sec.~\ref{sec:model})
except that LTO requires property $P4$ (Totality) to hold for the hash of the request $H(r)$ (instead for request $r$).

\begin{enumerate}[label=P4]
    \item \textbf{Totality:} If a correct node commits a request $r$ or a request hash $H(r)$, then every correct node eventually commits $H(r)$.\\
\end{enumerate}

LTO follows SVS high-level approach (Sec.~\ref{sec:sharding}) and applies to \protocol only in stable epoch. A leader only sends a full \prepreparemsg message to a subset of $f + 1$ \emph{replica} nodes.
To the remaining $2f$ \emph{observer} nodes, the leader sends a lightweight \prepreparemsg message
where request payloads are replaced with their hashes.
Since inside the \protocol (and PBFT) common-case (Sec.~\ref{sec:stable}) subprotocol, before sending a \commitmsg message,
a node waits to receive $2f + 1$ \preparemsg messages,
this implies that at least $f + 1$ of \prepreparemsg messages were sent by replica nodes,
ensuring that at least one correct (replica) node has the full payload.

  LTO has minor impact on PBFT view change (\protocol epoch change) as a new primary might have a hash of the batch (lightweight \prepreparemsg) but not the full batch payload. To this end, our \protocol-LTO makes the primary in this situation look for the payload at $f+1$ replicas, which is guaranteed not to block liveness after GST and with the correct primary.


\begin{table}[ht]
	\centering
	\scriptsize{
		\begin{tabular}{| l | l | }
			\hline
			Max batch size & $2$ MB (4000 requests) \\ \hline
			Cut batch timeout & $500$ ms ($n<49$), $1s (n=49)$,\\
      &  $2s (n=100)$ \\\hline
			Max batches \ephemeral epoch &  256 ($n\leq16$), $16*n$ ($n>16$)\\ \hline
			Bucket rotation period & 256 ($n\leq16$), $16*n$ ($n>16$)  \\ \hline
			Buckets per leader ($m$) & $2$ \\ \hline
			Checkpoint period & $128$ \\ \hline
			Watermark window size & $256$  \\ \hline
			Parallel gRPC connections& $5$ ($n=4$), $3$ ($n=10$), $1$  ($n>10$) \\ \hline
			Client signatures & 256-bit ECDSA \\ \hline
		\end{tabular}
	}
	\caption{\protocol configuration parameters used in evaluation}
	\label{table:config}
\end{table}

\section{Evaluation}\label{sec:evaluation}

In this section, we report on experiments we conducted in scope of \protocol performance evaluation, which
 aims at answering the following questions:\\
\noindent (Sec.~\ref{sec:WANeval}) How does \protocol scale on a WAN?\\
\noindent (Sec.~\ref{sec:LANeval}) How does \protocol scale in clusters?\\
\noindent (Sec.~\ref{sec:OPTeval}) What is the impact of optimizations (SVS, LTO) and bucket rotation and what are typical latencies of \protocol?\\
\noindent (Sec.~\ref{sec:duplicates}) What is the benefit of \protocol duplication prevention?\\
\noindent (Sec.~\ref{sec:evalfaults}) How does \protocol perform under faults and 
attacks (crash faults, censoring attacks, straggler attacks)?\\

\begin{figure*}
	\begin{subfigure}{\columnwidth}
		\centering
		\includegraphics[width=\columnwidth]{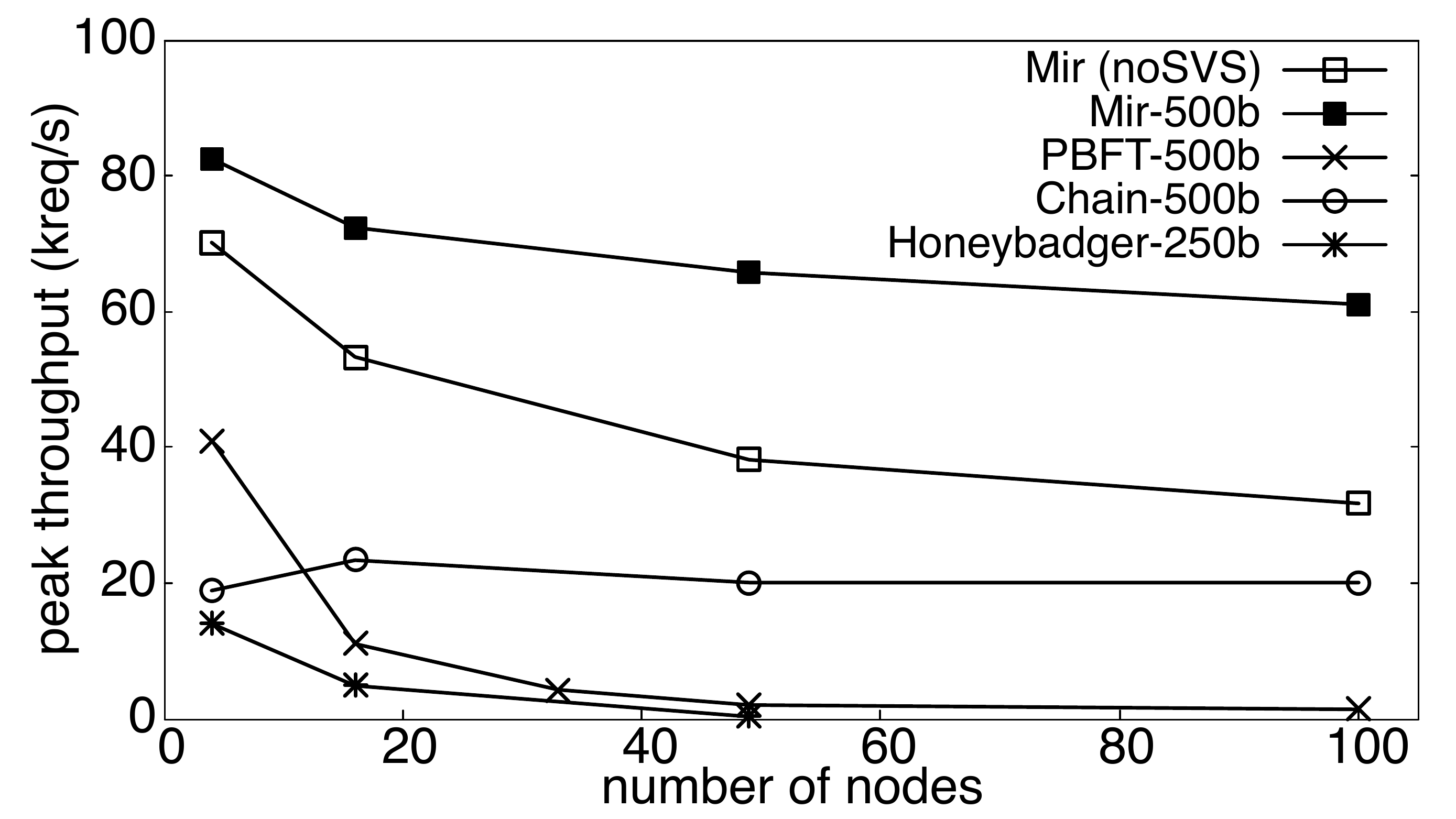}
		\caption{Mir vs. Chain, PBFT, Honeybadger.}
		\label{fig:500WAN}
	\end{subfigure}%
	\begin{subfigure}{\columnwidth}
		\centering
		\includegraphics[width=\linewidth]{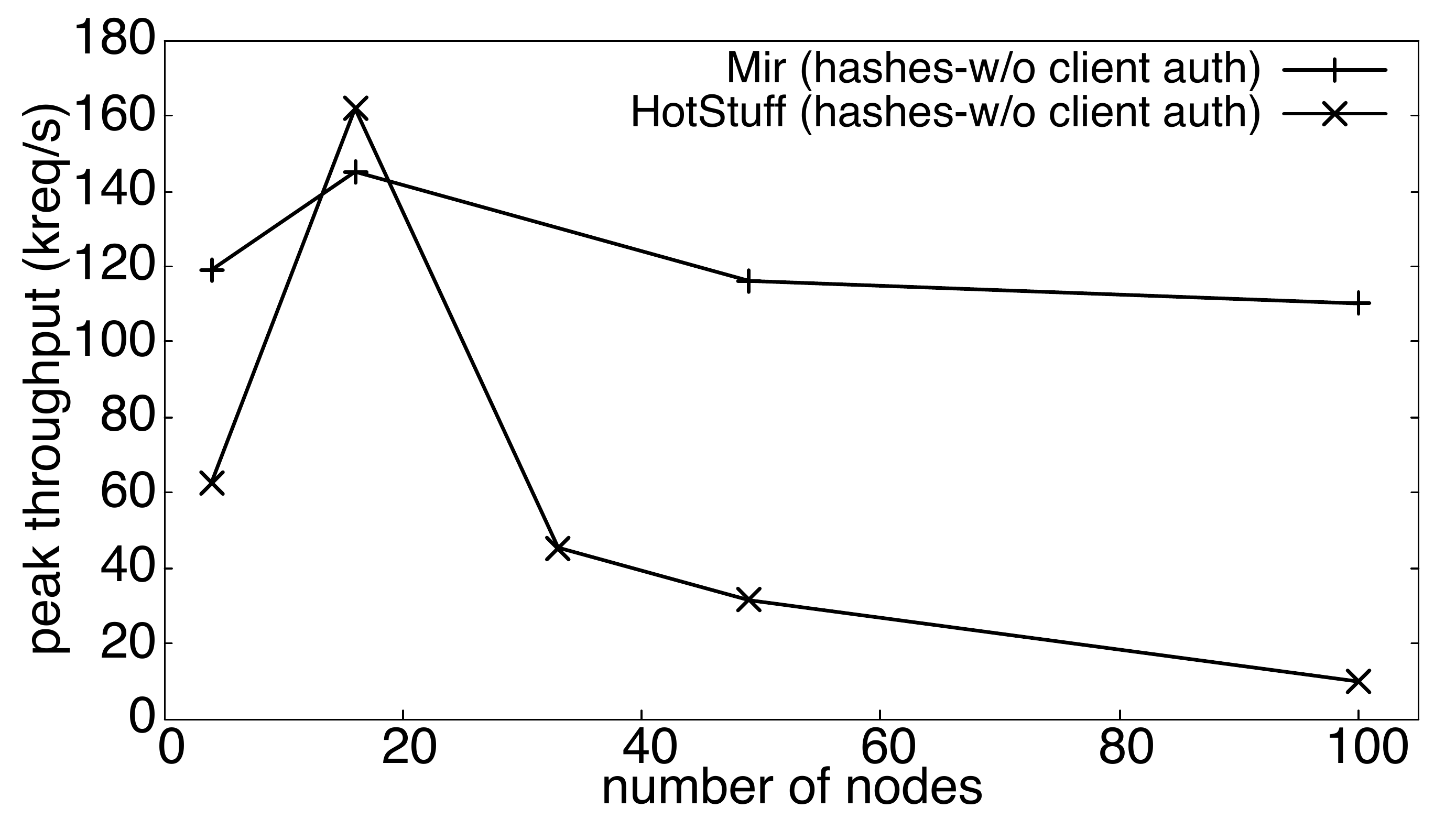}
		\caption{Mir vs. Hotstuff (no client authentication, only hashes ordered).}
		\label{fig:HSWAN}
	\end{subfigure}
	\caption{WAN scalability experiments.}
	\label{fig:WAN}
\end{figure*}

\noindent\textbf{Experimental Setup.}
Our evaluation consists of microbenchmarks of
500 byte requests, which correspond to average Bitcoin tx size \cite{BitcoinTXSizeURL}.
These are representative of \protocol performance,
both absolute and relative to state of the art.
We also evaluate \protocol in WAN for larger 3500 byte requests,
typical in Hyperledger Fabric \cite{AndroulakiBBCCC18} to better showcase the impact of available bandwidth on \protocol.

We generate client requests
by increasing the number of client processes
and the request rate per client process, until the throughput is saturated.
We report the throughput just below saturation.
The client processes estimate which node $i$ has an active bucket for each of their requests
and initially send each request only to nodes $i-1, \cdots, i+k$, where $k\leq f-1$, i.e.,. to $f+1$ nodes.

We compare \protocol to a state-of-the-art PBFT implementation optimized for multi-cores \cite{BehlDK15}.
For fair comparison, we use the \protocol codebase tuned to closely follow the PBFT implementation of \cite{BehlDK15}
hardened to implement Aardvark \cite{Aardvark}.
As another baseline, we compare the common case performance of Chain,
an optimistic subprotocol of the Aliph BFT protocol \cite{700} with linear common-case message complexity,
which is known to be near throughput-optimal in clusters,
although it is not robust and needs to be abandoned in case of faults \cite{700}.
In this sense, Chain is not a competitor to \protocol, but rather an upper bound on performance in a cluster.
PBFT and Chain are always given best possible setups,
i.e., PBFT leader is always placed in a node that has most effective bandwidth
and Chain spans the path with the smallest latency.
We further compare to HotStuff \cite{hotstuff} (a recent, popular, $O(n)$ common-case message complexity BFT protocol)
and Honeybadger \cite{MillerXCSS16} using their open source implementations \footnote{\url{https://github.com/hot-stuff/libhotstuff} at commit 978f39f... and \url{https://github.com/initc3/HoneyBadgerBFT-Python}}.
We present comparison to HotStuff separately, due to its implementation specifics.
We allow Honeybadger an advantage with using 250 byte requests, as its
open source implementation is fixed to this request size.
We do not compare to unavailable (e.g., Hashgraph \cite{Hashgraph}, Red Belly \cite{RedBelly}) or unmaintained (BFT-Mencius\footnote{We,  however, demonstrate the expected effective throughput of Hashgraph, Red-Belly and BFT-Mencius under request duplication, by ``switching off'' request duplication prevention in \protocol, see Sec.~\ref{sec:duplicates}.} \cite{BFT-Mencius}) protocols,
those faithfully approximated by PBFT (e.g., BFT-SMaRt \cite{BessaniSA14}, Spinning \cite{Spinning}, Tendermint \cite{TendermintURL}),
or those that report considerably worse performance than Mir (e.g., Algorand \cite{Algorand}).

We use virtual machines on IBM Cloud,
with 32 x 2.0 GHz VCPUs and 32GB RAM,
equipped with 1Gbps networking and limited to that value for experiment repeatability, due to non-uniform bandwidth overprovisioning we sometimes experienced.
~Table~\ref{table:config} shows the used \protocol configuration parameters.
Unless stated otherwise, \protocol uses signature sharding optimization.

\subsection{Scalability on a WAN}
\label{sec:WANeval}
To evaluate \protocol scalability, we ran it with up to $n=100$ nodes
on a WAN setup spanning 16 distinct datacenters across Europe, America, Australia, and Asia
Beyond $n=16$, we collocate nodes across already used datacenters. Our 4-node experiments spread over all 4 mentioned continents.
Client machines are also uniformly distributed across the 16 datacenters.
Figure~\ref{fig:WANmap} shows the datacenter distribution.
\begin{figure*}[tb!]
	\centering
	\includegraphics[width=0.85\textwidth,clip=true,trim=0 10 12 10]{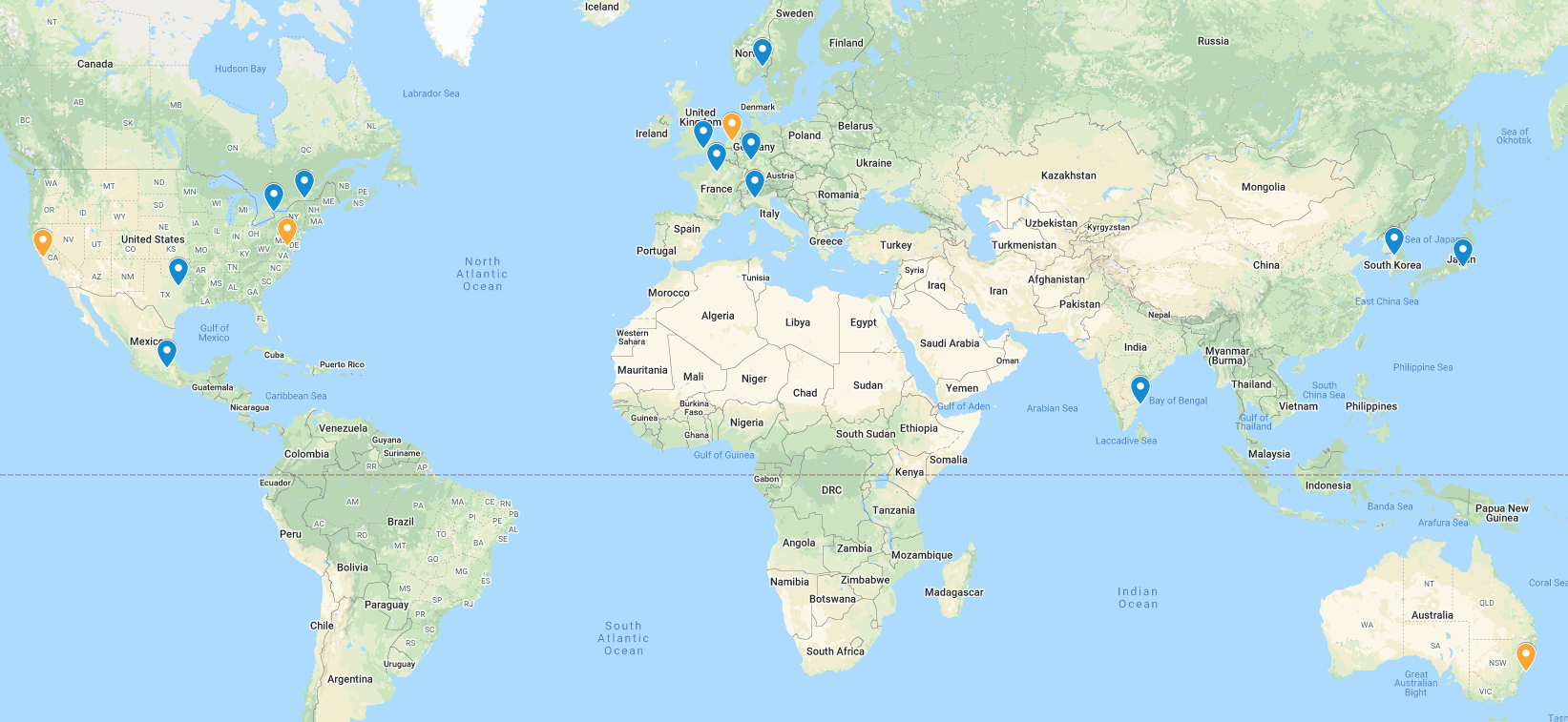}
	\caption{Distribution of the $16$ datacenters for WAN deployment. Yellow pins indicate the $n=4$ deployment.}
	\label{fig:WANmap}
\end{figure*}

Figure~\ref{fig:500WAN} depicts the common-case (failure-free) stable epoch performance of \protocol,
compared to that of PBFT and Chain
and Honeybadger.
We observe that PBFT throughput decays rapidly,
following an $O(1/n)$ function and scales very poorly.
Chain scales better and even improves with up to $n=16$ nodes,
sustaining 20k req/s,
but is limited by the bandwidth of the ``weakest link'', i.e., a TCP connection with lowest bandwidth across all links between consecutive nodes.
Compared to Honeybadger, \protocol retains much higher throughput, even though:
(i) Honeybadger request size is smaller (250 bytes vs 500 bytes),
and (ii) Honeybadger batches are significantly larger (up to 500K requests in our evaluation).
\ifdefined\SUBMIT
\else
This is due to the fact that Honeybadger is computationally bound by $O(n^2)$ threshold signatures verification and on top of that the verification of the signatures is done sequentially. Honeybadger's throughput also suffers from request duplication (on average $1/3$ duplicate requests per batch), since the nodes choose the requests they add in their batches at random.
Moreover, we report on Honeybadger latency, which is in the order of minutes (partly due to the large number of requests per batch and partly due to heavy computation), significantly higher than that of Mir.
In our evaluation we could not increase the batch size as much as in the evaluation in \cite{MillerXCSS16}, especially with increasing the number of nodes beyond $16$, due to memory exhaustion issues. Finally, in our evaluation PBFT outperforms Honeybadger (unlike in \cite{MillerXCSS16}), as our implementation of PBFT leverages the parallelism of \protocol codebase.
\fi

\protocol dominates other protocols, delivering 82.5k (roughly 4x the throughput of Chain)
with $n=4$.
With $n=100$, \protocol maintains more than 60k req/s (3x Chain throughput).
Even without the signature verification sharding otptimization (``Mir (noSVS)'') \protocol significantly outperforms other protocol, delivering with $n=4$ 70.2k req/s (3.5x Chain throughput) while reaching 31.7k req/s with $n=100$ (1.5x Chain throughput).

\noindent\textbf{Comparison to HotStuff in WANs.}
We present the comparison of \protocol to the HotStuff~\cite{hotstuff} leader-based protocol separately, in Figure~\ref{fig:HSWAN}.
Despite HotStuff specifying that the leader disseminates the request payload \cite{hotstuff},
the available HotStuff implementation orders only hashes of requests,
relying optimistically on clients for payload dissemination.
This approach is vulnerable to liveness/performance attacks from malicious clients
which can be easily mounted by clients not sending the requests to all nodes (an attack which the HotStuff version we evaluated does not address).
Besides, the evaluated HotStuff implementation did not authenticate clients at all (which jeopardizes Validity).

For these reasons and for a fair comparison,
we perform an experiment with: 1) disabled \protocol client authentication (i.e,. client signature verification)
and 2) with leaders disseminating payload hashes 
(relying on clients to disseminate payload as in HotStuff).
We also increase batch sizes in HotStuff as much as needed, resulting in up to 32K requests per batch, to saturate the system.

We observe that HotStuff offers about 2x lower throughput than \protocol with $n=4$ nodes
bounded by the number of available network connections,
whereas \protocol uses multiple connections among pairs of nodes.
As $n$ and number of network connections from the leader grow,
HotStuff throughput first grows until the network at the leader is saturated
(with $n=16$ HotStuff performs about 10\% better than \protocol).
However, as leader bandwidth becomes the bottleneck even with hash-only ordering,
HotStuff's $O(n^{-1})$ network-bound scalability starts to show with $n>16$,
while \protocol continues to scale well and is only computationally bounded by the implementation.
With 100 nodes, \protocol orders 110k hashes per second, compared to roughly 10k hashes per second throughput of HotStuff.

\noindent\textbf{Experiments with 3500-byte payload}
With request payload size (500 bytes), CPU related to signature verification is the primary bottleneck.
It is therefore interesting to  evaluate the impact on performance with larger requests.
Intuitively, with larger requests, we would be able to stress the 1Gbps WAN bottleneck of our evaluation testbed.
Moreover, large requests are not only of theoretical importance, some prominent blockchain systems feature relatively large transaction sizes.
For instance, minimum size transaction in Hyperledger Fabric is about 3.5kbytes \cite{AndroulakiBBCCC18}.

Therefore, we conducted additional WAN experiments with 3500 bytes request size.

\begin{figure}[h]
\centering
\includegraphics[width=\columnwidth]{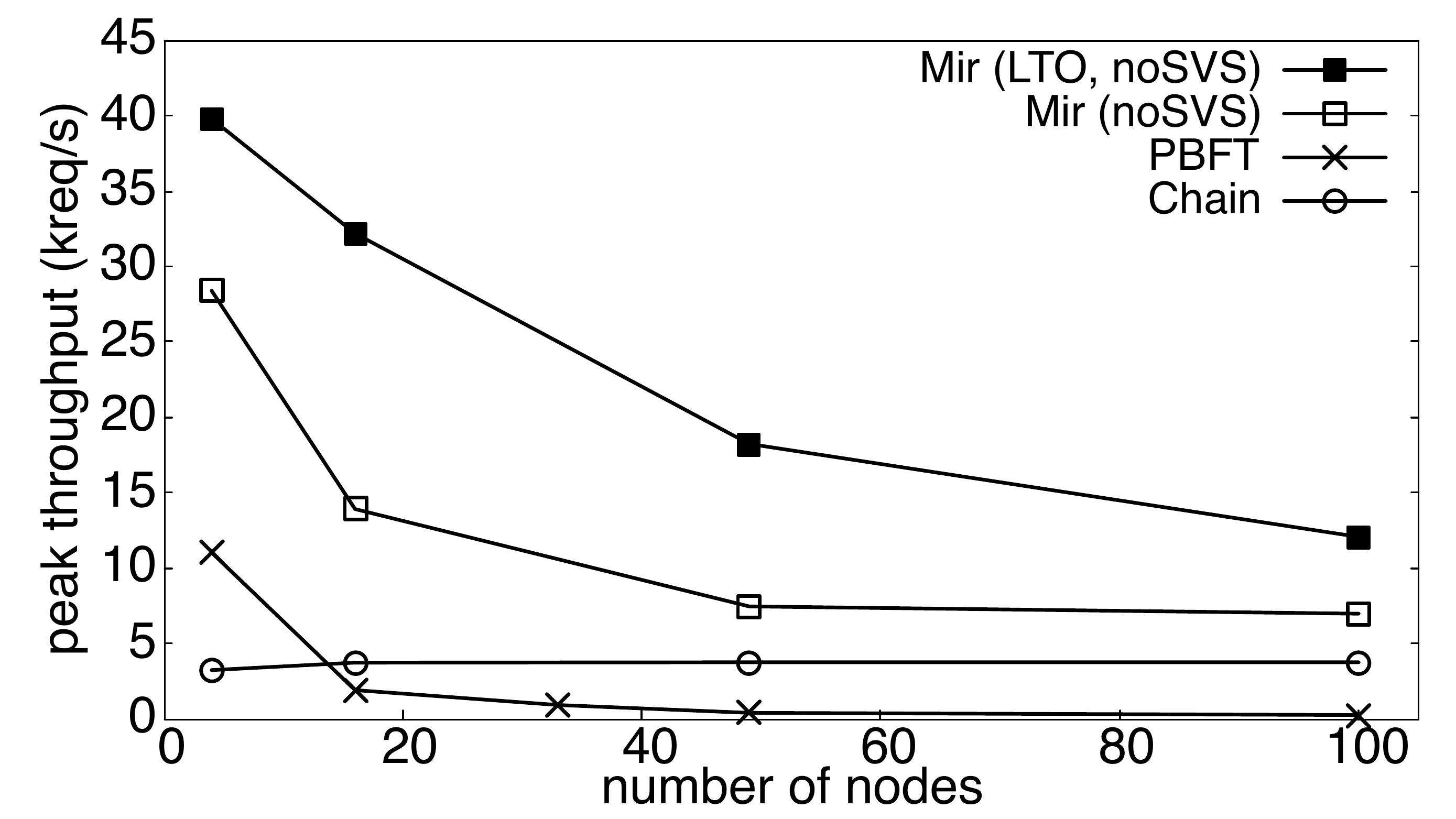}
\caption{WAN scalability experiment with large payload (3500 bytes).}
\label{fig:3500WAN}
\end{figure}

For large requests, where network bandwidth is the bottleneck, throughput of \protocol (with no SVS) reduces to 7k req/s with 100 nodes,
with a drop from 28.3k rew/s with $n=4$ nodes, see Figure~\ref{fig:3500WAN}.
We attribute this in part to the heterogeneity of VMs across datacenters (despite the identical specifications)
and, most importantly, to the non-uniform partition of the available uplink bandwidth.
Nevertheless, \protocol delivers the best performance of all protocols to date with 100 nodes on a WAN,
even compared to very optimistic protocols such as Chain, which delivers consistent throughput of about 4.5k req/s regardless of number of nodes. \protocol is, hence, the first robust BFT protocol which could be used as an ordering service in Fabric with $n=100$ nodes, without making ordering service a bottleneck (validation in Fabric is currently capped at less than 4k transactions per second \cite{AndroulakiBBCCC18}).

In addition to \protocol (with no SVS), Chain and PBFT, Figure~\ref{fig:3500WAN} also shows an experimental variant of \protocol which implements what we call \emph{Light Total Order (LTO)} broadcast, instead of full TOB (labeled `Mir (LTO, noSVS)'). LTO is an optimization, counterpart of SVS, to help alleviate network bottlenecks in TOB.  In short, LTO broadcast is identical to TOB,
except that it provides partial data availability
guaranteeing the delivery of the \emph{payload} of every request to \emph{at least one} correct node. This entails replicating batch payload to $f+1$ nodes in stable epoch, compared to all nodes without LTO.
Other correct nodes get and agree on the order of cryptographic hashes of requests,
which is the basis for maintaining other TOB properties. We provide more insight into LTO in Section~\ref{sec:LTO}.

LTO boosts throughput of \protocol to 40k 3500-byte req/s with $n=4$ nodes (roughly 40\% throughput improvement over \protocol) and maintains about 12.5k req/s with $n=100$ nodes (70\% throughput improvement over \protocol).

\subsection{Scalability in a Cluster/Datacenter}
\label{sec:LANeval}

\begin{figure}
	\centering
	\includegraphics[width=\columnwidth]{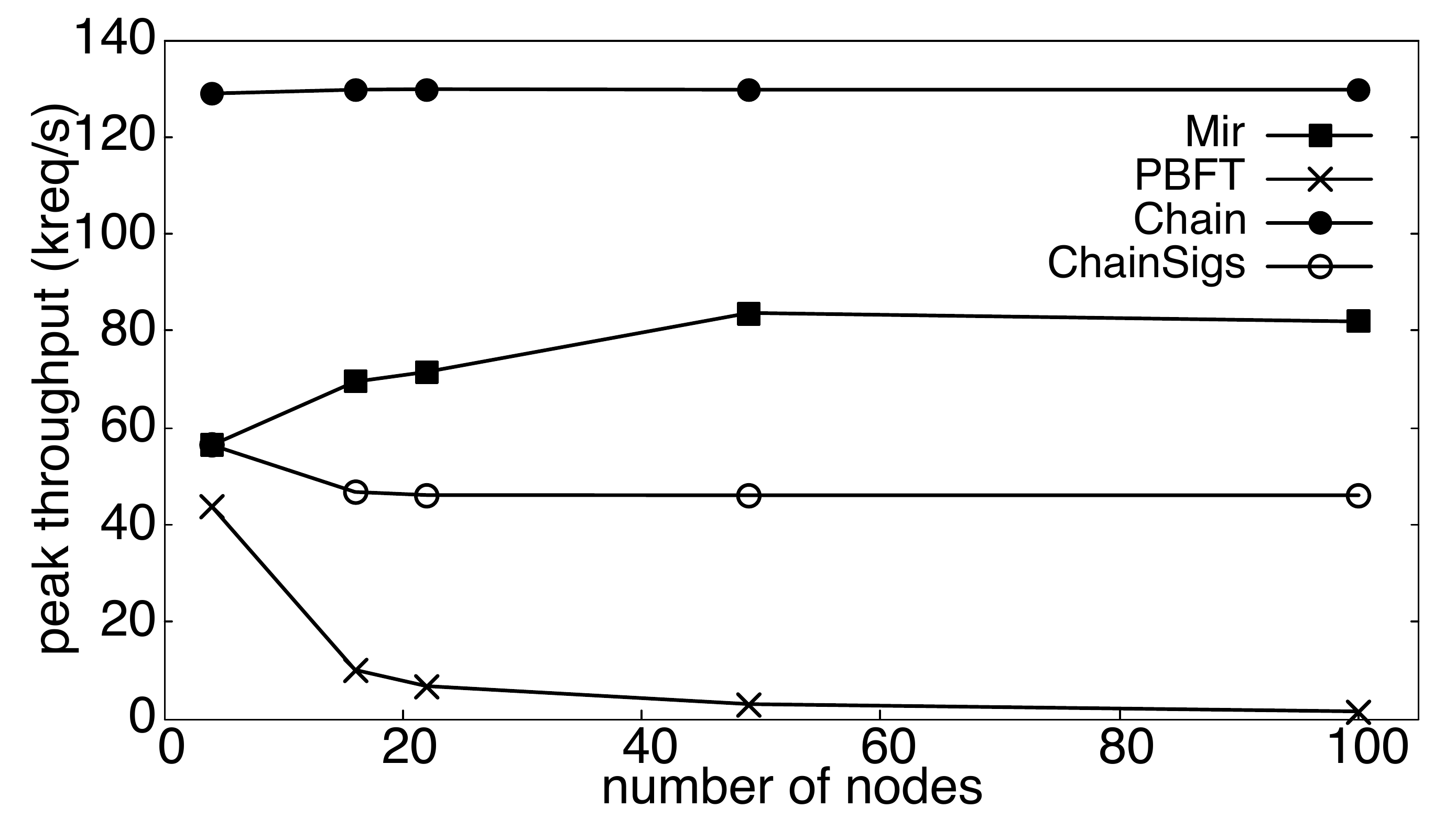}
	\caption{Throughput performance of Mir compared to Chain and PBFT in a single datacenter.}
	\label{fig:LAN}
\end{figure}

Figure~\ref{fig:LAN} depicts fault-free performance in a single datacenter with up to $n=100$ nodes.
Chain (the highest throughput BFT protocol to date in clusters) outperforms \protocol,
delivering roughly 1.6x of \protocol's peak throughput (130k req/s vs 83k req/s).
This difference is due to difference in client authentication: \protocol verifies clients' signatures, whereas
Chain uses vectors of MACs to authenticate a request to $\faults+1$ replicas (these are vulnerable to ``faulty client'' attacks \cite{Aardvark}).
Indeed, as soon as we add clients' signatures to Chain as in robust version of Chain \cite{700} (denoted by ChainSigs in Fig.~\ref{fig:LAN}),
Chain's throughput drops below that of \protocol.
\protocol maintains more than 80k req/s throughput,
significantly outperforming PBFT.

\subsection{Impact of optimizations and bucket rotation.}
\label{sec:OPTeval}
Fig.~\ref{fig:LT} shows the average latency and throughput
of different flavors of \protocol in fault-free executions using $n=16$ nodes.
We also show the performance of Chain and PBFT as a reference.
Nodes are distributed over $16$ distinct datacenters across the world.

\protocol without signature sharding (``Mir (noSVS)'') in Fig.~\ref{fig:LT})
saturates at roughly 53k req/s (resp. 12.3k req/s for large requests),
an approximate overhead of about 3\% (resp. 9.5\%)
compared to an idealized non-robust version of \protocol which involves no bucket rotation (``Mir (noRotation)''). Hence, the penalty of robust bucket rotation in \protocol is small. It is more than compensated for by signature sharding
which boosts \protocol throughput to 74k req/s (resp. to 33.5k req/s with LTO).

All variants of \protocol maintain roughly from 1--2s latency at relatively low load, to 3--5s latency close to saturation.  PBFT latency is lower at 600--800 ms, yet PBFT saturates under very low load compared to \protocol. We measured latency by: (1) synchronizing clocks between a client and a node belonging to the same datacenter with NTP, (2)  deducting request timestamp at a client from commit timestamp at a node, (3) averaging across all requests (and, consequently, all datacenters).

\begin{figure*}
    \begin{subfigure}{\columnwidth}
        \centering
        \includegraphics[width=\linewidth]{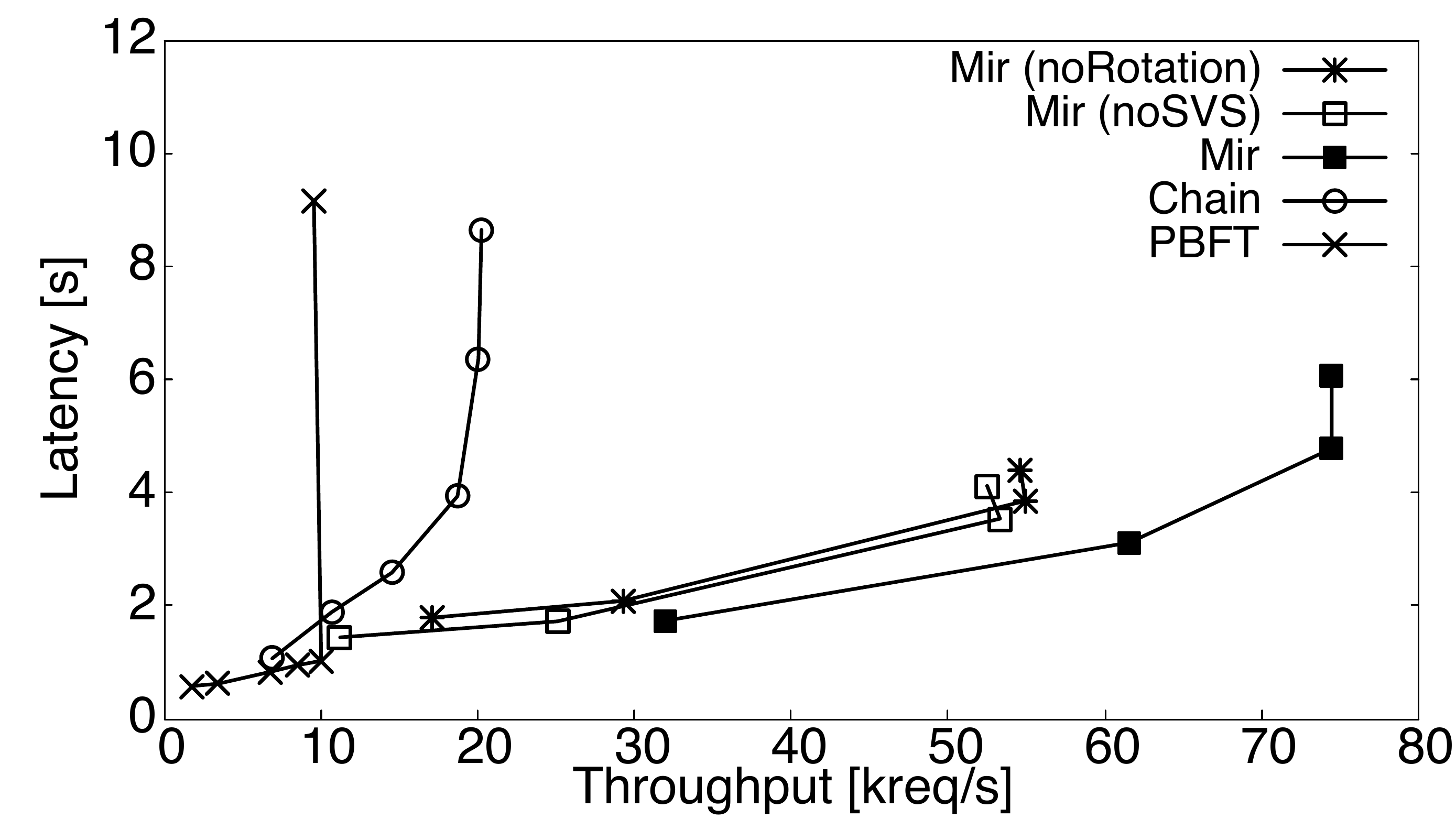}
        \caption{500 byte payload}
        \label{fig:500LT}
    \end{subfigure}%
    \begin{subfigure}{\columnwidth}
        \centering
        \includegraphics[width=\linewidth]{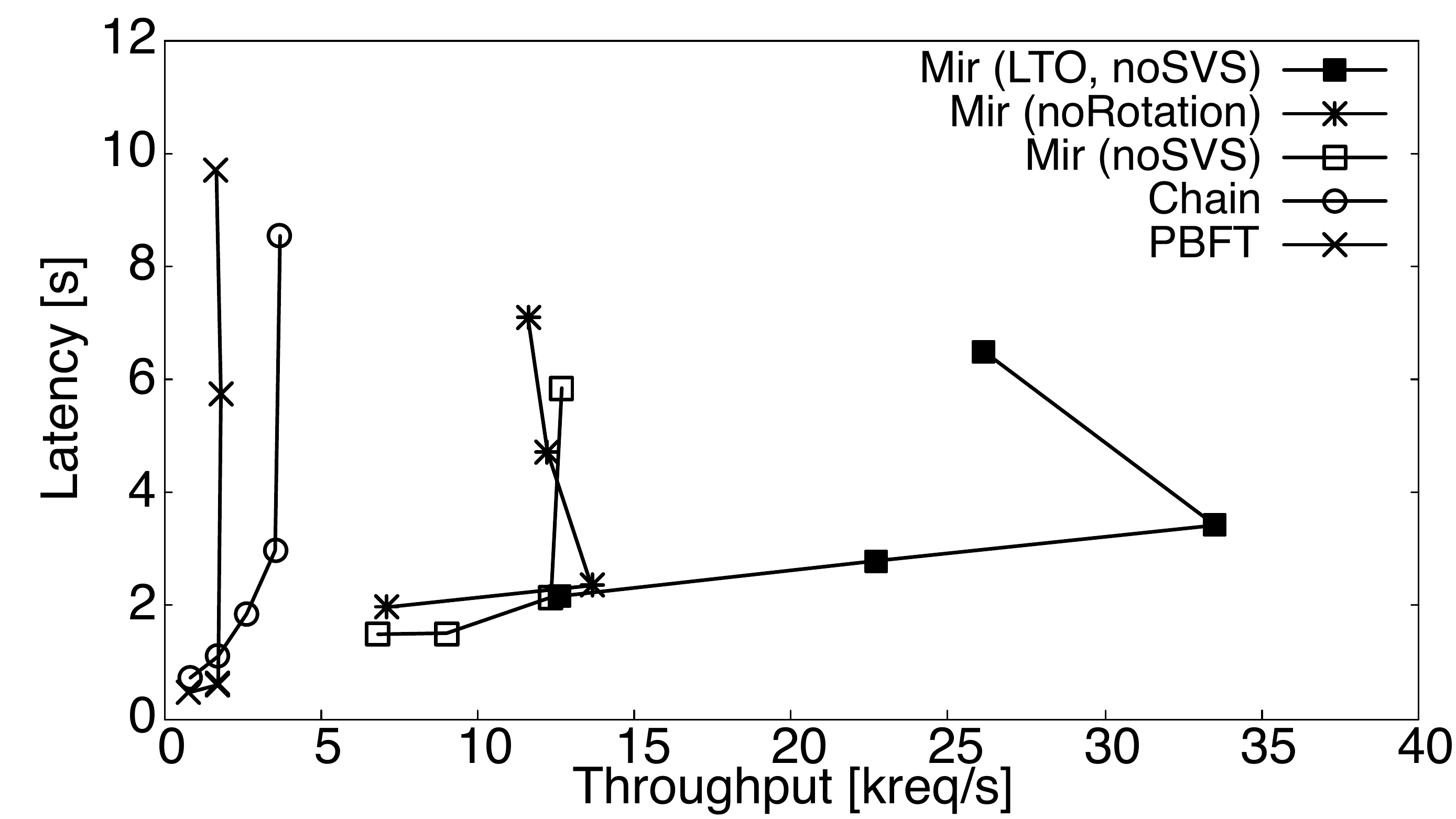}
        \caption{3500 byte payload}
        \label{fig:3500LT}
    \end{subfigure}
    \caption{Impact of bucket rotation and \protocol optimizations on a WAN with n=16 nodes.}
    \label{fig:LT}
\end{figure*}

\subsection{Benefits of Duplication Prevention}\label{sec:duplicates}
\begin{figure}
	\centering
	\includegraphics[width=\columnwidth]{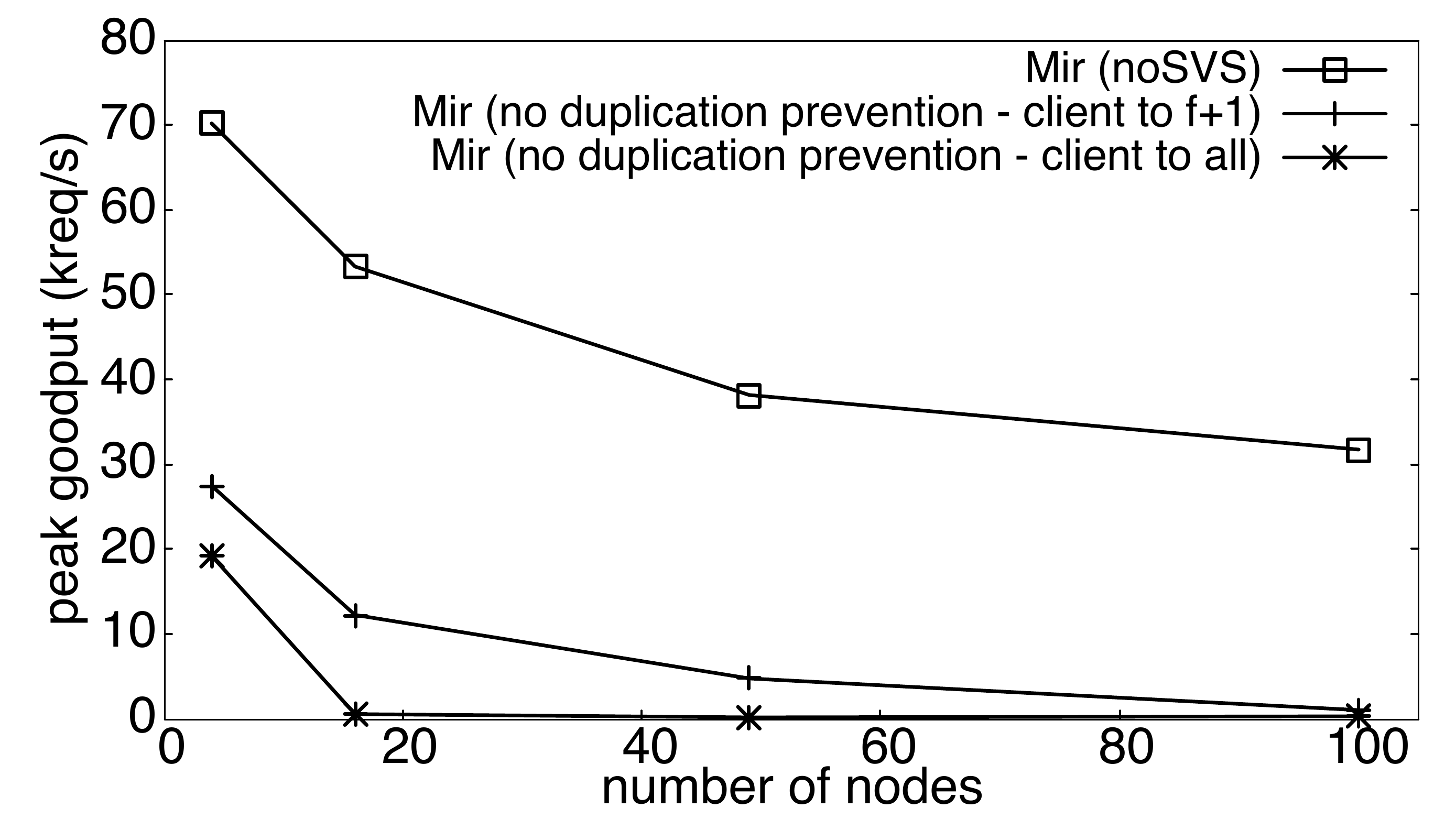}
	\caption{The impact of duplication prevention on a WAN with $n=16$.}
	\label{fig:duplicates500}
\end{figure}
In this section we examine the impact of duplicate requests to \emph{goodput},
i.e., throughput of unique requests.
In Fig.~\ref{fig:duplicates500} we compare the performance of Mir (noSVS)
to a version  of \protocol where the leaders do not partition requests in buckets,
but rather add in batches all their available requests, following what Hashgraph \cite{Hashgraph}, Red Belly \cite{RedBelly}, and BFT-Mencius \cite{BFT-Mencius} parallel leader protocols do.

We examine the impact of duplicates in two scenarios, (1)
where clients submit their requests to $f+1$ nodes --- intuitively, this is the minimum number of nodes to which a client must submit a request in any BFT protocol that insures liveness (due to possible censoring by f nodes), and (2) where clients submit their requests to all nodes.

The impact is a hefty performance penalty of 61\% (resp., 72\%) reduction in goodput compared to Mir (noSVS) in the first (resp,. second) scenario on $n=4$ nodes. This reaches as much as 97\% (resp., 99\%) with $n=100$, demonstrating $O(n^{-1})$ goodput scalability in protocols with duplication.

\subsection{Performance Under Faults}
\label{sec:evalfaults}
\begin{figure*}[t!]
    \centering
    \includegraphics[width=0.85\textwidth,clip=true,trim=0 10 12 10]{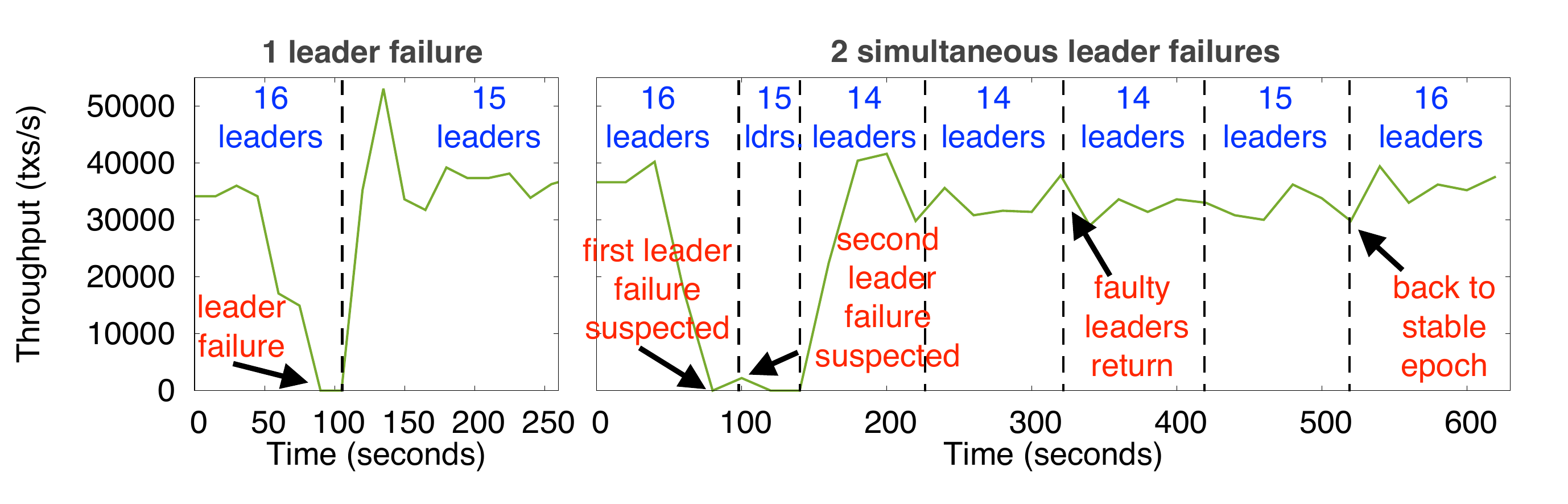}
    \caption{Performance under crash faults.}
    \label{fig:faults}
\end{figure*}
\begin{figure*}
	\begin{subfigure}{\columnwidth}
		\centering
		\includegraphics[width=\columnwidth]{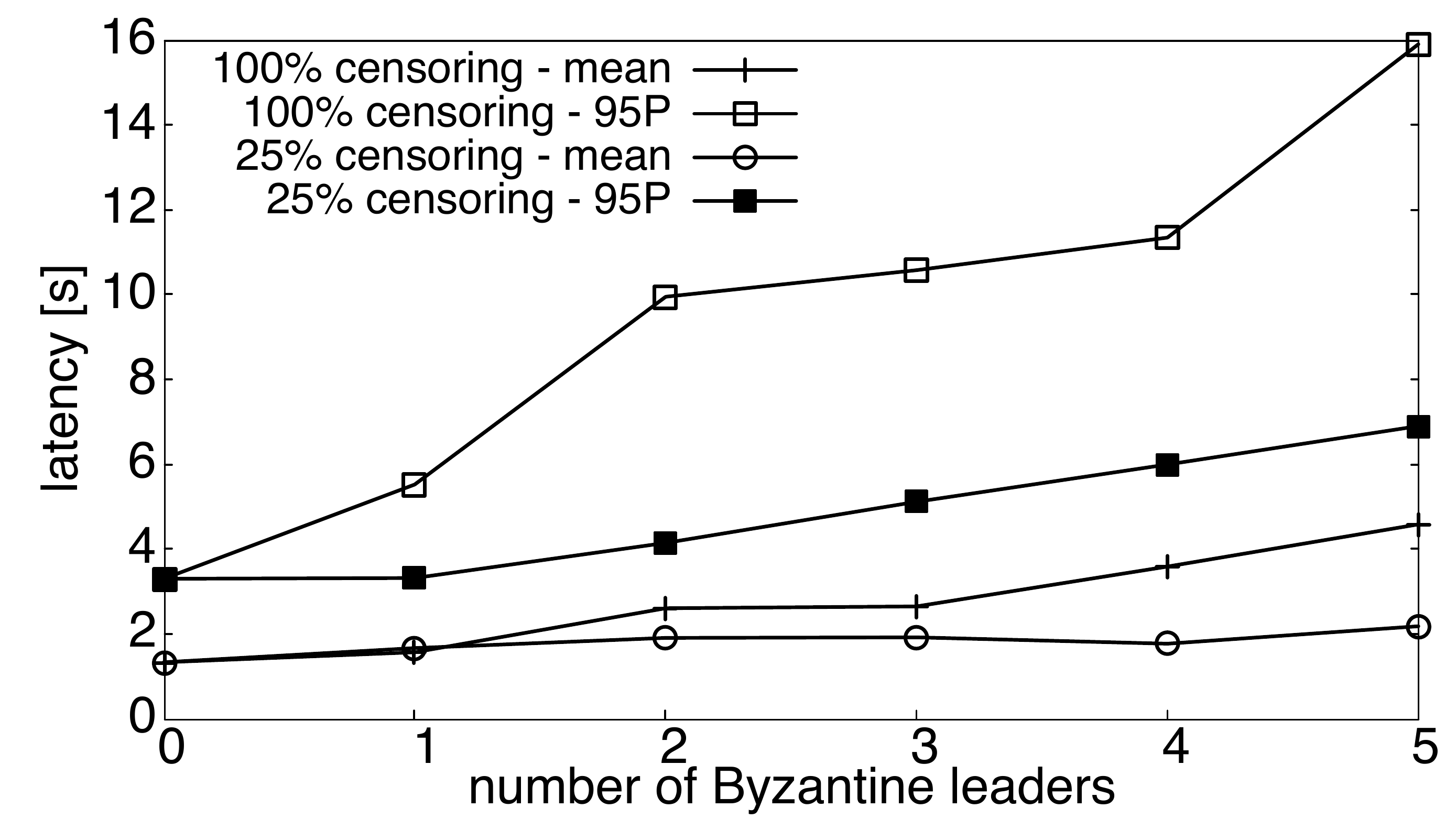}
		\caption{Mean and tail latencies (95\%) for increasing number of \\ Byzantine leaders that drop 25\% or 100\% of their requests.}
		\label{fig:censoring500}
	\end{subfigure}%
	\begin{subfigure}{\columnwidth}
		\centering
		\includegraphics[width=\linewidth]{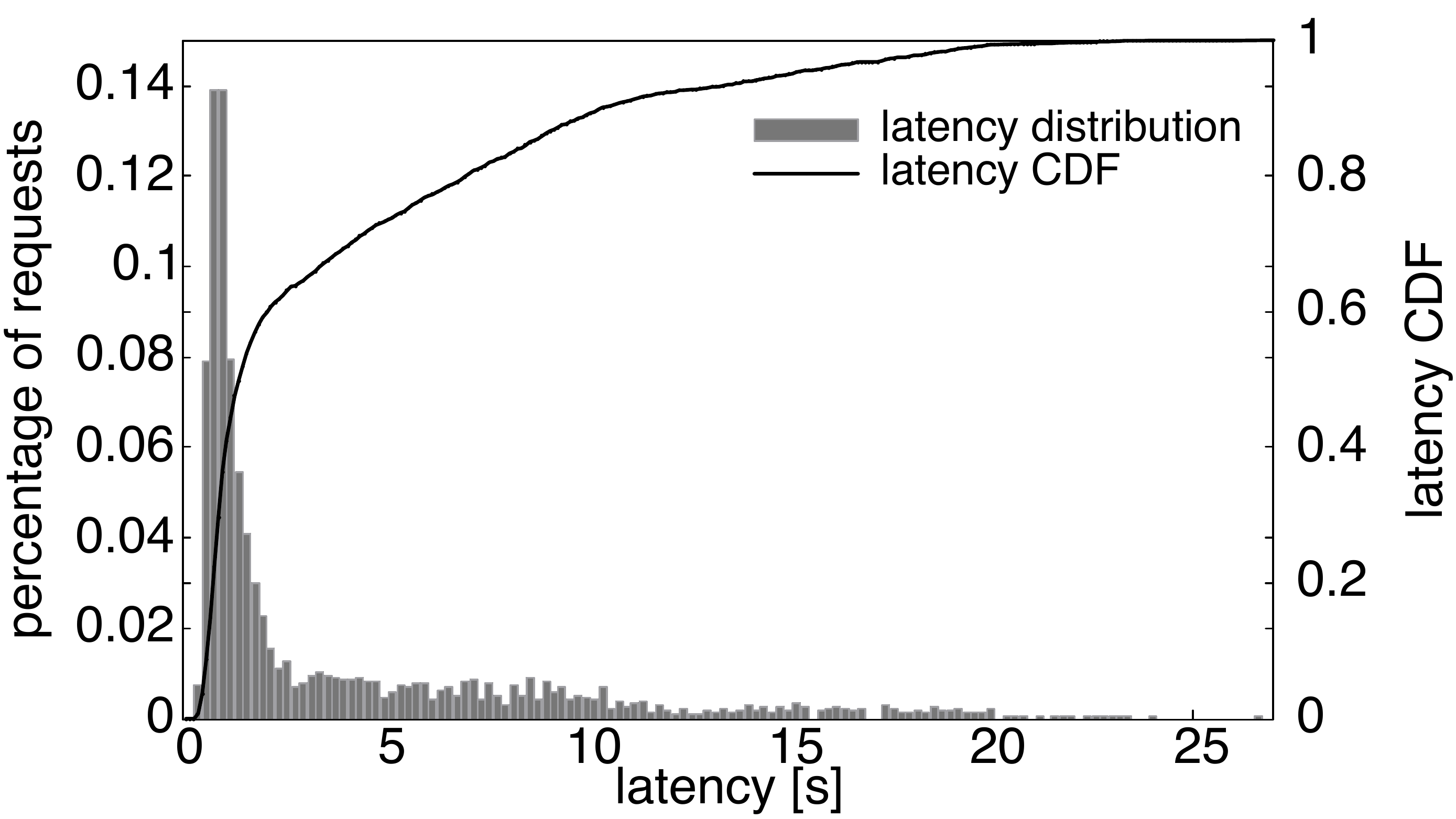}
		\caption{Latency distribution and CDF with $f=5$  Byzantine leaders censoring 100\% of requests.}
		\label{fig:cdf}
	\end{subfigure}
	\caption{Latency under the request censoring attack.}
	\label{fig:censoring}
\end{figure*}

\noindent\textbf{Leader Crash Faults.}
Figure~\ref{fig:faults} shows throughput as a function of time
when one and two leaders fail simultaneously.
We run this experiment in a WAN setting with 16 nodes,
and trigger a view change if an expected batch is not delivered with fixed timeouts of 20 seconds.
With one leader failure,
a view change is triggered and the system immediately transitions to a configuration with 15 leaders.
When two failures occur simultaneously,
the first view changes takes the system to a configuration with 15 leaders.
The first few batches are delivered in this configuration, but, since one of the 15 leaders has failed,
a second view change is triggered that takes the system to a configuration with 14 leaders,
from which execution can continue normally.
In this scenario, the figure also depicts the evolution of the leader set in case the failed nodes recover:
within three epochs, the system is in a stable state with 16 leaders again.

We can observe that gracious epoch changes are seamless in \protocol
(these occur from second 141 onwards in the experiment with 2 faults),
whereas ungracious epoch changes
(when throughput temporarily drops to $0$)
last approximately one epoch change timeout\remove{s}.

\noindent\textbf{Request Censoring (Byzantine Leaders Dropping Requests). }
In this experiment we emulate Byzantine behavior
by having an increasing number (from $0$ to $f=5$) of Byzantine leaders dropping (censoring) requests in our 16-node WAN setup.
Fig.~\ref{fig:censoring500} shows that mean latency remains below 4.6s (resp., 2.2s)
when Byzantine leader drops 100\% (resp. 25\%) of the requests they receive.
Tail latencies (95th percentile) remain below 16s (resp., 7s).
Fig.~\ref{fig:cdf} shows distribution and CDF of latency with $5$ Byzantine leaders censoring 100\% of requests.
When clients sent to all nodes we observe a drop up to 15\% for mean and 18\%
for tail latencies. A decrease of up to 44\% for mean and 49\% for tail latencies we
observe by reducing bucket rotation period to $128$ batches.
This introduces, though, a trade-off of approximately 10\% in peak throughput.

\noindent\textbf{Stragglers (Byzantine Leaders Delaying Proposals).}
In this experiment we evaluate \protocol resistance to stragglers.
Stragglers delay the batches they lead and propose empty batches.
The key to \protocol straggler resistance is that
a correct node starts an epoch change timeout for sequence number $sn$ as soon as it commits $sn-1$.
With multiple leaders proposing and committing batches independently,
a straggler can only impose a delay of one epoch change timeout \emph{once per epoch} without being detected,
as compared to once per sequence number in single-leader protocols.

We perform both WAN and LAN experiments with $n=16$ nodes, starting from a stable epoch.
The load is set at about 25-30\% peak throughput (corresponding to roughly 25k req/s).
Epoch change timeout is set to 20s and ephemeral epoch length to 256 batches.
We run our experiment until the straggler is removed from and re-added to the leader set (when it becomes epoch primary).

On WAN, fault-free throughput gives a baseline of 24.8k req/s.
With a single Byzantine straggler leader delaying each of its batches by 15s,
the average throughput is 18k req/s (penalty of 25\% over the baseline).
The straggler is always detected and removed from the leader set almost immediately.

On LAN, baseline throughput without faults is 28.1k req/s.
For reference, \protocol latency in LAN is in milliseconds.
We set straggler delay to 2 seconds (while keeping epoch change timeout to, for LAN very big, 20s) to keep the straggler longer in the leader set.
This time, straggler remains in the leader set for over 600 sequence numbers,
after which it is removed from the leader set.
In this case, we measure average throughput of 15.7k req/s in the entire execution (a penalty of 44\%).

To put these numbers into perspective,
a single-leader Aardvark \cite{Aardvark} suffers a 90\% performance penalty with a straggler primary on a LAN delaying batches for 10ms.
We conclude that \protocol has very good performance in presence of stragglers,
even with simple fixed epoch change timeouts.
Future optimizations of \protocol Byzantine node detection are possible,
following the approaches of Aardvark \cite{Aardvark} and RBFT \cite{RBFT}.


\section{Related Work}
\label{sec:relwork}

The seminal PBFT \cite{Castro:2002:PBF} protocol sparked intensive research on BFT. PBFT itself has a single-leader network bottleneck and does not scale well with the number of nodes.
\protocol generalizes PBFT and removes this bottleneck with a multi-leader approach,
enforcing a robust request duplication prevention.
Request duplication elimination is simple in PBFT and other single-leader protocols,
where this is the task of the leader.

Aardvark \cite{Aardvark} was the one the first BFT protocols, along with \cite{Spinning,Prime,RBFT},
to point out the importance of BFT protocol \emph{robustness},
i.e., guaranteed liveness and reasonable performance in presence of active denial of service and performance attacks.
In practice, Aardvark is a hardened PBFT protocol that uses clients' signatures, regular periodic view-changes (rotating primary),
and resource isolation using separate NICs for separating client\add{-to}-node from node\add{-to}-node traffic.
\protocol implements all of these and is thus robust in the Aardvark sense.
Beyond Aardvark features, \protocol is the first protocol to combine robustness with multiple leaders,
preventing request duplication performance attacks, enabling \protocol's excellent performance.

The first replication protocol to propose the use of multiple parallel leaders was Mencius \cite{Mencius}.
Mencius is a crash-tolerant Paxos-style \cite{Lamport:1998:PP:279227.279229} protocol
that leverages multiple leaders to reduce the latency of replication on WANs,
an approach  later followed by other crash-tolerant protocols (e.g., EPaxos \cite{Moraru:2013:MCE:2517349.2517350}).
The approach was extended to the BFT context by BFT-Mencius \cite{BFT-Mencius}.
Mencius and BFT-Mencius are geared towards optimizing latency and shard clients' requests by mapping a client to a closest node.
However, as a node can censor the request, a client is forced and allowed
to re-transmit the request to other nodes exposing a vulnerability to request duplication attacks which BFT-Mencius does not handle.
As illustrated in our evaluation (Sec.~\ref{sec:duplicates}), malicious clients can severely impact the throughput of such a scheme, by sending their requests to multiple or all nodes. Unlike in a regular DoS attack, these clients cannot be naively declared Byzantine or rate-limited, as such request traffic may be needed by correct clients to deal with Byzantine leaders dropping requests (request censoring attack) or to optimize the latency of a BFT protocol.  Unlike BFT-Mencius,  \protocol maps clients' requests to buckets
which are then assigned to nodes, similarly to consistent hashing \cite{Karger:1997:CHR:258533.258660}.
\protocol further redistributed bucket assignment in time to enforce robustness to request censoring.
Unlike Mencius, EPaxos and BFT-Mencius, \protocol does not optimize for latency in the best case,
paying a small price as it does not assign clients to the closest nodes. However, our experiments show that this impact is acceptable,
in particular given that the blockchain is not the most latency-sensitive application.

Recent BFT protocols, proposed in the blockchain context \cite{RedBelly,Hashgraph},
that exhibit multi-leader flavor, also do not address request duplication.
Furthermore, unlike \protocol, these proposals invent new BFT protocols from scratch which is a highly error-prone and tedious process \cite{700}.
In contrast, \protocol follows an evolutionary rather then revolutionary design approach to a multi-leader protocol,
building upon proven PBFT/Aardvark algorithmic and systems' constructs,
considerably simplifying the reasoning about \protocol correctness.

Two recent protocols, HotStuff~\cite{hotstuff} and SBFT~\cite{SBFT},
are leader-based protocols that improve on PBFT's quadratic common-case message complexity
and require a linear ($O(n)$) number of messages in the common case.
HotStuff is optimized for throughput and features $O(n)$ messages in view change as well
(SBFT requires $O(n^2)$ messages in view change).
While \protocol approach of multiplexing PBFT instances and SBFT/HotStuff improvements over PBFT appear largely orthogonal,
our experiments show that \protocol multi-leader approach scales better than HotStuff, which is a single-leader protocol.
Namely, even though PBFT/\protocol have quadratic common-case message complexity, these messages are load balanced across $n$ nodes,
yielding $O(n)$ messages at a \emph{bottleneck replica}, just like HotStuff/SBFT.
Our experiments also showed that HotStuff retains the
downside of other single-leader protocols,
i.e., bottlenecks related to leader sending all proposals, yielding an infavorable $O(n^{-1})$ throughput scalability trend.
An unimplemented HotStuff variant, called ChainedHotStuff \cite{hotstuff},  suggests having different leaders piggyback their batches on other protocol common-case messages.
As Hotstuff has 4 common case phases,
this allows up to 4 ``chained'' leaders in ChainedHotStuff regardless of the total number of nodes,
which is less efficient than \protocol which allows up to $n$ parallel leaders.
In future, it would be very interesting to combine the two approaches, $O(n)$ common case message complexity and parallel leaders, by implementing \protocol variants based on HotStuff/SBFT instead of PBFT.

\emph{Optimistic} BFT protocols \cite{700,Zyzzyva} have been shown to be very efficient on a small scale in clusters.
In particular, Aliph \cite{700} is a combination of Chain crash-tolerant replication \cite{Chain}
ported to BFT and backed by PBFT/Aardvark outside the optimistic case where all nodes are correct.
We demonstrated that \protocol holds its ground with BFT Chain in clusters and it considerably outperforms it in WANs.
Nevertheless, \protocol remains compatible with the modular approach to building optimistic BFT protocols of \cite{700},
where \protocol can be used as a robust and high-performance backup protocol.
Zyzzyva \cite{Zyzzyva} is an optimistic leader-based protocol that optimizes for latency.
While we chose to implement \protocol based on PBFT,
\protocol variants based on Zyzzyva's latency-efficient communication pattern are conceivable.

Eventually synchronous BFT protocols, to which \protocol belongs,
circumvent the FLP consensus impossibility result \cite{FLP} by assuming eventual synchrony.
These protocols, \protocol included, guarantee safety despite asynchrony but rely on eventual synchrony to provide liveness.
Alternatively, probabilistic BFT protocols such as Honeybadger \cite{MillerXCSS16} and BEAT \cite{BEAT}
provide both safety and liveness (except with negligible probability) in purely asynchronous networks.
By comparing Honeybadger and \protocol, we showed that this comes as a tradeoff,
as \protocol significantly outperforms Honeybadger, even though both protocols target the same deployment setting (up to 100 nodes in a WAN).
Notably, Honeybadger authors realize the importance of duplicate elimination
and suggest that each leader randomly samples the requests in their pending queue.
This approach would result to no duplicates on expectation.
However, in practice, unless the system is deep in saturation,
the pending queue does not contain significantly more requests than the next batch.
Indeed, in our Honeybadger evaluation we observed that goodput (effective throughput) was roughly only $20\%$ of the nominal throughout.
BEAT suggests some optimizations over Honeybadger without significantly outperforming the former.

As blockchains brought an arms-race to BFT protocol scalability \cite{Vukolic15},
many proposals focus on large, Bitcoin-like scale, with thousands or tens of thousands of nodes \cite{Algorand,Bitcoin-NG}.
In particular, Algorand \cite{Algorand} is a recent BFT protocol that deals with BFT agreement in populations of thousands of nodes,
by relying on a verifiable random function to select a committee in the order of hundred(s) of nodes.
Algorand then runs a smaller scale agreement protocol inside a committee.
We foresee \protocol being a candidate for this ``in-committee'' protocol inside a system such as Algorand
as well as in other blockchain systems that effectively restrict voting to a smaller group of nodes,
as is the case in Proof of Stake proposals \cite{CasperFFG}.
In addition, \protocol is particularly interesting to permissioned blockchains, such as Hyperledger Fabric \cite{AndroulakiBBCCC18}.

ByzCoin~\cite{ByzCoin} scales PBFT for permissionless blockchains by building PBFT atop of CoSi \cite{CoSi}, a collective signing protocol that efficiently aggregates hundreds or thousands of signatures. Moreover, it adopts ideas from PoW based Bitcoin-NG \cite{Bitcoin-NG} to decouple transaction verification from block mining. This approach is orthogonal to that of Mir and variants of Byzcoin with Mir instead of PBFT are interesting for future work.

Stellar\cite{stellarsosp} uses SCP, a Byzantine agreement protocol
with asymmetric quorums and trust assumptions targeting payment networks, which targets similar network sizes  with \protocol.
 Asymmetric quorums of SCP modify trust assumptions and the liveness guarantees of traditional BFT protocols, with \cite{stellarattack} showing liveness violation with failures of only two specific nodes in a production configuration of Stellar.
We show it is possible to obtain high throughput and low latencies
while maintaining the strong guarantees of BFT protocols with classical (symmetric) quorums and trust assumptions.

Finally, \emph{sharding} protocols \cite{elastico,omniledger} partition transaction verification into independent shards.
\protocol is complementary to such protocols as they either require ordering within a shard or total ordering of the shards.
Monoxide \cite{monoxide} also uses sharding to increase throughput,
but provides weaker guarantees (eventual atomicity across shards).
Moreover, Monoxide's scalability heavily depends on transaction payload semantics.


\section{Conclusions}
\label{sec:conclusions}

This paper presented \protocol, a high-throughput robust BFT protocol for decentralized networks.
\protocol is the first BFT protocol that uses multiple parallel leaders
thwarting both censoring attacks and request duplication performance attacks.
In combination with reducing CPU overhead through the ``signature verification sharding'' optimization,
this allows \protocol to achieve unprecedented throughput at scale even on a wide area network,
outperforming state-of-the-art protocols.

The main insight behind \protocol is multiplexing multiple parallel instances of the PBFT protocol into a single
totally ordered log, while preventing duplicate request proposals by partitioning the request hash space
and assigning each subset to a different leader.
\protocol prevents request censoring attacks by periodically changing this assignment to
guarantee that each request is eventually assigned to a correct leader.
Being based on well understood and thoroughly scrutinized PBFT
makes it is easy to reason about \protocol's correctness.


\section*{Acknowledgements}
We thank Jason Yellick for his very insighful comments and suggestions.

This work was supported in part by the European Union's Horizon 2020 Framework Programme under gtrant aggreement number 780477 (PRIViLEDGE).


\balance

\end{document}